
\documentclass[11pt]{article}
\usepackage{amssymb,amsmath,amsthm}
\usepackage{amsfonts}
\usepackage{graphicx}
\usepackage{pgf}
\usepackage{tikz}
\usepackage{pstricks}
\usepackage[font=footnotesize]{caption}
\usepackage{color}
\usepackage{multirow}
\usepackage{lscape}
\usepackage{rotating}
\usepackage[font=footnotesize]{caption}
\usepackage{graphicx}
\usepackage{natbib}
\usepackage{appendix}
\usepackage{setspace}
\usepackage{booktabs}
\usepackage[top=1in,bottom=1in,left=1in,right=1in]{geometry}

\setcounter{MaxMatrixCols}{10}

\DeclareGraphicsExtensions{.pdf,.png,.jpg}
  \DeclareGraphicsRule{.emf}{bmp}{}{}
\usetikzlibrary{shapes,snakes}
\setlength{\bibsep}{0pt plus 0.3ex}
\onehalfspacing
\onehalfspacing
\DeclareGraphicsExtensions{.pdf,.png,.jpg}
  \DeclareGraphicsRule{.emf}{bmp}{}{}

\newtheorem{theorem}{Theorem}

\newtheorem{lemma}{Lemma}

\numberwithin{equation}{section}
\input{tcilatex}
\begin{document}

\title{Nonparametric prediction with spatial data}
\author{Abhimanyu Gupta \thanks{%
Department of Economics, University of Essex, Wivenhoe Park, Colchester CO4
3SQ, U.K. Email: a.gupta@essex.co.uk} \thanks{%
Research supported by ESRC grant ES/R006032/1.} \and Javier Hidalgo \thanks{%
Economics Department, London School of Economics, Houghton Street, London
WC2A 2AE, U.K. Email: f.j.hidalgo@lse.ac.uk}}
\date{\today }
\maketitle

\begin{abstract}
We describe a (nonparametric) prediction algorithm for spatial data, based
on a canonical factorization of the spectral density function. We provide
theoretical results showing that the predictor has desirable asymptotic
properties. Finite sample performance is assessed in a Monte Carlo study
that also compares our algorithm to a rival nonparametric method based on
the infinite $AR$ representation of the dynamics of the data. Finally, we
apply our methodology to predict house prices in Los Angeles. \newline
\newline
\textbf{Keywords}: Lattice data, unilateral models, canonical factorization,
spectral density, nonparametric prediction.
\end{abstract}

\section{\textbf{Introduction}}

\label{sec:intro}

Random models for spatial or spatio-temporal data play an important role in
many disciplines of economics, such as environmental, urban, development or
agricultural economics as well as economic geography, among others. When
data is collected over time such models are termed `noncausal' and have
drawn interest in economics, see for instance \cite{Breidt2001} among others
for some early examples. Other studies may be found in the special volume by 
\cite{Baltagi2007} or \cite{Cressie}. Classic treatments include the work by 
\cite{Mercer1911} on wheat crop yield data (see also \cite{Gao2006}) or \cite%
{Batchelor1918} which was employed as an example and analysed in the
celebrated paper by \cite{Whittle1954}. Other illustrations are given in 
\cite{Cressie1999}, see also \cite{Fernandez-Casal2003}. With a view towards
applications in environmental and agricultural economics, \cite{Mitchell2005}
employed a model of the type studied in this paper to analyse the effect of
carbon dioxide on crops, whereas \cite{Genton2008} examine the yield of
barley in UK. The latter manuscripts shed light on how these models can be
useful when there is evidence of spatial movement, such as that of
pollutants, due to winds or ocean currents.

Doubtless one of the main aims when analysing data is to provide predicted
values of realizations of the process. More specifically, assume that we
have a realization $\mathcal{X}_{n}=\left\{ x_{t_{i}}\right\} _{i=1}^{n}$ at
locations $t_{1},...,t_{n}$ of a process $\left\{ x_{t}\right\} _{t\in 
\mathcal{D}}$, where $\mathcal{D}$ is a subset of $\mathbb{R}^{d}$. We wish
then to predict the value of $x_{t}$ at some unobserved location $t_{0}$,
say $x_{t_{0}}$. For instance in a time series context, we wish to predict
the value $x_{n+1}$ at the unobserved location (future time) $n+1$ given a
stretch of data $x_{1},..,x_{n}$. It is often the case that the predictor of 
$x_{t_{0}}$ is based on a weighted average of the data $\mathcal{X}_{n}$,
that is%
\begin{equation}
\widehat{x}_{t_{0}}=\sum_{i=1}^{n}\beta _{i}x_{t_{i}}\text{,}  \label{pred1}
\end{equation}%
where the weights $\beta _{1},...,\beta _{n}$ are chosen to minimize the $%
\mathcal{L}_{2}$-risk function 
\begin{equation*}
E\left( x_{t_{0}}-\sum_{i=1}^{n}b_{i}x_{t_{i}}\right) ^{2}
\end{equation*}%
with respect to $b_{1},...,b_{n}$. With spatial data, the solution in $%
\left( \ref{pred1}\right) $ is referred as the Kriging predictor, see \cite%
{Stein1999}, which is also the best linear predictor for $x_{t_{0}}$. Notice
that under Gaussianity or our Condition $C1$ below, the best linear
predictor is also the best predictor. It is important to bear in mind that
with spatial data prediction is also associated with both interpolation as
well as extrapolation.

The optimal weights $\left\{ \beta _{i}\right\} _{i=1}^{n}$ in $\left( \ref%
{pred1}\right) $ depend on the covariogram (or variogram) structure of $%
\left\{ x_{t_{1}},...,x_{t_{n}};x_{t_{0}}\right\} =:\left\{ \mathcal{X}%
_{n};x_{t_{0}}\right\} $, see among others \cite{Stein1999} or \cite{Cressie}%
. That is, denoting the covariogram by $Cov\left( x_{t_{i}},x_{t_{j}}\right)
=:C\left( t_{i},t_{j}\right) $ and assuming stationarity, so that $C\left(
t_{i},t_{j}\right) =:C\left( \left\vert t_{i}-t_{j}\right\vert \right) $, we
have that the best linear predictor $\left( \ref{pred1}\right) $ becomes%
\begin{equation}
\widehat{x}_{t_{0}}=\gamma ^{\prime }\left( t_{0}\right) \boldsymbol{C}%
\mathcal{X}_{n}\text{,}  \label{pred2}
\end{equation}%
where%
\begin{equation*}
\boldsymbol{C}=\left\{ C\left( \left\vert t_{i}-t_{j}\right\vert \right)
\right\} _{i,j=1}^{n};\text{ \ \ \ }\gamma ^{\prime }\left( t_{0}\right)
=Cov\left( \mathcal{X}_{n};x_{t_{0}}\right) =E\left\{ x_{t_{0}}\left(
x_{t_{1}},...,x_{t_{n}}\right) \right\} \text{.}
\end{equation*}

When the data is regularly observed, the unknown covariogram function $%
C\left( h\right) $ is replaced by its sample analogue%
\begin{equation*}
\widehat{C}\left( h\right) =\frac{1}{\left\vert n\left( h\right) \right\vert 
}\sum_{n\left( h\right) }x_{t_{i}}x_{t_{j}}\text{,}
\end{equation*}%
where $n\left( h\right) =\left\{ \left( t_{i},t_{j}\right) :\left\vert
t_{i}-t_{j}\right\vert =h\right\} $ and $\left\vert n\left( h\right)
\right\vert $ denotes the cardinality of the set $n\left( h\right) $. When
the data is not regularly spaced some modifications of $\widehat{C}\left(
h\right) $ have been suggested, see Cressie $\left( 1993,p.70\right) $ for
details. One problem with the above estimator $\widehat{C}\left( h\right) $
is that it can only be employed for lags $h$ which are found in the data,
and hence the Kriging predictor $\left( \ref{pred2}\right) $ cannot be
computed if $\left\vert t_{i}-t_{0}\right\vert \not=h$ for any $h$ such that 
$n\left( h\right) $ is not an empty set. To avoid this problem a typical
solution is to assume some specific parametric function $C\left( h\right)
=:C\left( h;\theta \right) $, so that one computes $\left( \ref{pred2}%
\right) $ with $C\left( h;\widehat{\theta }\right) $ replacing $C\left(
h\right) $ therein, where $\widehat{\theta }$ is some estimator of $\theta $.

In this paper, we shall consider the situation when the spatial data is
collected regularly, that is on a lattice. This may occur as a consequence
of some planned experiment or due to a systematic sampling scheme, or when
we can regard the (possibly non-gridded) observations as the result of
aggregation over a set of covering regions rather than values at a
particular site, see e.g. \cite{Conley1999}, \cite{Conley2007a}, \cite%
{Bester2011}, \cite{Wang2013}, \cite{Nychka2015}, \cite{Bester2014}. As a
result of this ability to map locations to a regular grid, lattice data are
frequently studied in the econometrics literature, see e.g. \cite%
{Roknossadati2010}, \cite{Robinson2011} and \cite{Jenish2016}. Nonsystematic
patterns may occur, although these might arise as a consequence of missing
observations, see \cite{Jenish2012} for a study that covers irregular
spatial data.

However contrary to the solution given in $\left( \ref{pred2}\right) $, our
aim is to provide an estimator of $\left( \ref{pred1}\right) $ without
assuming any particular parameterization of the dynamic or covariogram
structure of the data a priori, for instance without assuming any particular
functional form for the covariogram $C\left( h\right) $. The latter might be
of interest as we avoid the risk that misspecification might induce on the
predictor. In this sense, this paper may be seen as a spatial analog of
contributions in a standard time series context such as \cite{Bhansali1974}
and \cite{Hidalgo2002}.

The remainder of the paper is organized as follows. In the next section, we
describe the multilateral and unilateral representation of the data and
their links with a Wold-type decomposition. We also describe the canonical
factorization of the spectral density function, which plays an important
role in our prediction methodology described in Section \ref{sec:FE_pred},
wherein we examine its statistical properties. Section \ref{sec:MC}
describes a small Monte-Carlo experiment to gain some information regarding
the finite sample properties of the algorithm, and compares our frequency
domain predictor to a potential `space-domain' competitor. Because land
value and real-estate prices comprise classical applications of spatial
methods, see e.g. \cite{IversenJr2001}, \cite{Banerjee2004}, \cite%
{Majumdar2006}, we apply the procedures to prediction of house prices in Los
Angeles in Section \ref{sec:empirical}. Finally, Section \ref{sec:conc}
gives a summary of the paper whereas the proofs are confined to the
mathematical appendix.

\section{Multilateral and unilateral representations}

Before we describe how to predict the value of the process $\left\{
x_{t}\right\} _{t\in \mathbb{Z}^{d}}$ at unobserved locations, for $d\geq 1$%
, it is worth discussing what do we understand by multilateral and
unilateral representations of the process and, more importantly, the link
with the Wold-type decomposition. Recall that in the prediction theory of
stationary time series, i.e. when $d=1$, the Wold decomposition plays a key
role. For that purpose, and using the notation that for any $a\in \mathbb{Z}%
^{d}$, $a=\left( a\left[ 1\right] ,...,a\left[ d\right] \right) $, so that $%
t-j$ stands for $\left( t\left[ 1\right] -j\left[ 1\right] ,....,t\left[ d%
\right] -j\left[ d\right] \right) $, we shall assume that the (spatial)
process $\left\{ x_{t}\right\} _{t\in \mathbb{Z}^{d}}$ admits a
representation given by%
\begin{equation}
x_{t}=\sum_{j\in \mathbb{Z}^{d}}\psi _{j}\varepsilon _{t-j},\text{ }\ \text{ 
}\sum_{j\in \mathbb{Z}^{d}}\left\{ \sum_{\ell =1}^{d}j^{2}\left[ \ell \right]
\right\} \left\vert \psi _{j}\right\vert <\infty \text{,}  \label{a1}
\end{equation}%
where the $\varepsilon _{t}$ are independent and identically distributed
random variables with zero mean, unit variance and finite fourth moments.
The model in $\left( \ref{a1}\right) $ denotes the dynamics of $x_{t}$ and
it is known as the multilateral representation of $\left\{ x_{t}\right\}
_{t\in \mathbb{Z}^{d}}$. It is worth pointing that a consequence of the
latter representation is that the sequence $\left\{ \varepsilon _{t}\right\}
_{t\in \mathbb{Z}^{d}}$ loses its interpretation as being the
\textquotedblleft prediction\textquotedblright\ error of the model, and thus
they can no longer be regarded as innovations, as was first noticed by \cite%
{Whittle1954}. When $d=1$, this multilateral representation gives rise to
so-called noncausal models or, in \citeauthor{Whittle1954}'s terminology,
linear \emph{transect} models. These models can be regarded as forward
looking and have gained some consideration in economics, see for instance 
\cite{Lanne2011}, \cite{Davis2013}, \cite{Lanne2013} or \cite{Cavaliere2018}.

It is worth remarking that, contrary to $d=1$, it is not sufficient for the
coefficients $\psi _{j}$ in $\left( \ref{a1}\right) $ to be $O\left(
\left\vert j\right\vert ^{-3-\eta }\right) $ for any $\eta >0$ as our next
example illustrates. Indeed, suppose that $\psi _{j}=\left( j\left[ 1\right]
+j\left[ 2\right] \right) ^{-4}=O\left( \left\Vert j\right\Vert ^{-4}\right) 
$. However it is known that the sequence $\left\{ \sum_{\ell =1}^{d}j^{2}%
\left[ \ell \right] \right\} \left\vert \psi _{j}\right\vert $ is not
summable. That is, see for instance \cite{Limaye2009}, 
\begin{equation}
c_{N}^{-1}=\left( \sum_{j\left[ 1\right] ,j\left[ 2\right] =1}^{N}\left\{
\sum_{\ell =1}^{d}j^{2}\left[ \ell \right] \right\} \left\vert \psi
_{j}\right\vert \right) ^{-1}\underset{N\rightarrow \infty }{\rightarrow }0%
\text{.}  \label{decay}
\end{equation}

One classical parameterization of $\left( \ref{a1}\right) $ is the $ARMA$
field model 
\begin{eqnarray*}
P\left( L\right) x_{t} &=&Q\left( L\right) \varepsilon _{t}\text{,} \\
P\left( z\right) &=&\sum_{j\in \mathbb{Z}_{1}^{d}}\alpha _{j}z^{j};\quad
\alpha _{0}=1;\text{ \ }Q\left( z\right) =\sum_{j\in \mathbb{Z}%
_{2}^{d}}\beta _{j}z^{j};\quad \beta _{0}=1\text{,}
\end{eqnarray*}%
where $\mathbb{Z}_{1}^{d}$ and $\mathbb{Z}_{2}^{d}$ are finite subsets of $%
\mathbb{Z}^{d}$ and henceforth $z^{j}=\prod\nolimits_{\ell =1}^{d}z\left[
\ell \right] ^{j\left[ \ell \right] }$ with the convention that $0^{0}=1$.
As an example, we have the $ARMA\left( -k_{1},k_{2};-\ell _{1},\ell
_{2}\right) $ field 
\begin{equation}
\sum_{j=-k_{1}}^{k_{2}}\alpha _{j}x_{t-j}=\sum_{j=-\ell _{1}}^{\ell
_{2}}\beta _{j}\varepsilon _{t-j},\text{ \ \ \ }\alpha _{0}=\beta _{0}=1%
\text{.}  \label{arma}
\end{equation}

As mentioned above, the Wold decomposition, and hence the concept of past
and future, plays a key role in the theory of prediction when $d=1$.
However, contrary to the situation when $d=1$, an intrinsic problem with
spatial or lattice data is that we cannot assign a unique meaning to the
concept of \textquotedblleft past\textquotedblright\ and/or
\textquotedblleft future\textquotedblright . One immediate consequence is
then that different definitions of what might be considered as past (or
future) will yield different Wold-type decompositions. More specifically,
denote a \textquotedblleft half-plane\textquotedblright\ of $\mathbb{Z}^{2}$
according to the lexicographical (dictionary) ordering \textquotedblleft $%
\prec $\textquotedblright\ defined as 
\begin{equation}
j\prec k\Leftrightarrow \left( j\left[ 1\right] <k\left[ 1\right] \right) 
\text{ or }\left( j\left[ 1\right] =k\left[ 1\right] \vee j\left[ 2\right] <k%
\left[ 2\right] \right) \text{,}  \label{lex_1}
\end{equation}%
where herewith we shall consider the case when $d=2$, often encountered with
real data. The half-plane defined by \textquotedblleft $\prec $%
\textquotedblright\ is illustrated in Figure \ref{fig:hp_ill}. Following
earlier work by \cite{Helson1958,Helson1961}, there exists then a Wold-type
representation of the (spatial) process $\left\{ x_{t}\right\} _{t\in 
\mathbb{Z}^{2}}$ given by 
\begin{equation}
x_{t}=\vartheta _{t}+\sum_{0\prec j}\zeta _{j}\vartheta _{t-j},\text{ }\ 
\text{ }\sum_{0\prec j}\left\vert \zeta _{j}\right\vert <\infty \text{,}
\label{uni_1}
\end{equation}%
where $\left\{ \vartheta _{t}\right\} _{t\in \mathbb{Z}^{2}}$ is a zero mean
white noise sequence with finite second moments $\sigma _{\vartheta }^{2}$.
It is worth recalling that $\vartheta _{t}$ once again has the
interpretation of being the \textquotedblleft one-step\textquotedblright\
prediction error. Often $\left( \ref{uni_1}\right) $ is called a unilateral
representation of $\left\{ x_{t}\right\} _{t\in \mathbb{Z}^{2}}$ as opposed
to the multilateral representation in $\left( \ref{a1}\right) $. See also 
\cite{Whittle1954} for some earlier work on multilateral versus unilateral
representations. As an example, $\left( \ref{arma}\right) $ becomes a
unilateral or causal model when $\ell _{1}=k_{1}=0$. $\left( \ref{uni_1}%
\right) $ might be regarded as a particular way to model the dependence of $%
x_{t}$ induced by the lexicographic ordering in $\left( \ref{lex_1}\right) $%
. Of course, the choice of the \textquotedblleft
half-plane\textquotedblright\ of $\mathbb{Z}^{2}$ according to the
associated chosen lexicographic ordering is not the only possible one. That
is, a different choice of \textquotedblleft half-plane\textquotedblright\ of 
$\mathbb{Z}^{2}$, induced by the lexicographic ordering, will yield a
\textquotedblleft similar\textquotedblright\ but different representation of 
$x_{t}$ to that given in $\left( \ref{uni_1}\right) $. As it will become
clear in the next section, the choice of a specific lexicographic ordering,
or its associated half-plane, will depend very much on practical purposes.
For instance, the choice of $\left( \ref{lex_1}\right) $ will depend on the
location where we wish to predict $x_{t}$. Last but not least it is worth,
and important, mentioning that the sequences $\left\{ \varepsilon
_{t}\right\} _{t\in \mathbb{Z}^{2}}$ and $\left\{ \vartheta _{t}\right\}
_{t\in \mathbb{Z}^{2}}$ are not the same. Recall that a similar phenomenon
occurs when $d=1$ and the practitioner allows for noncausal/bilateral
representations of the sequence $x_{t}$. When this is the case, the
\textquotedblleft bilateral or noncausal\textquotedblright\ representation
has errors which are independent and identically distributed, whereas for
its \textquotedblleft unilateral or causal\textquotedblright representation,
the corresponding errors are only a white noise sequence.

It is clear from the introduction that to provide accurate and valid
(linear) predictions (or interpolations), a key component is to obtain the
covariogram function of the sequence $\left\{ x_{t}\right\} _{t\in \mathbb{Z}%
^{2}}$, that is $C\left( h\right) =Cov\left( x_{t},x_{t+h}\right) $, which
is related to the spectral density function $f\left( \lambda \right) $ via
the expression 
\begin{equation*}
C\left( h\right) =\int_{\Pi ^{2}}f\left( \lambda \right) e^{-ih\cdot \lambda
}d\lambda \text{,}\quad h\in \mathbb{Z}^{2}\text{,}
\end{equation*}%
where $\Pi =\left( -\pi ,\pi \right] $. Henceforth the notation
\textquotedblleft $h\cdot \lambda $\textquotedblright\ means the inner
product of the vectors $h$ and $\lambda $. It is worth observing that we can
factorize $f\left( \lambda \right) $ as 
\begin{equation*}
f\left( \lambda \right) =\frac{\sigma _{\varepsilon }^{2}}{\left( 2\pi
\right) ^{2}}\left\vert \Psi \left( \lambda \right) \right\vert ^{2}=:\frac{%
\sigma _{\vartheta }^{2}}{\left( 2\pi \right) ^{2}}\left\vert \Upsilon
\left( \lambda \right) \right\vert ^{2}\text{, \ \ \ }\lambda \in \Pi ^{2}%
\text{,}
\end{equation*}%
where $\sigma _{\varepsilon }^{2}=E\varepsilon _{t}^{2}$ and $\sigma
_{\vartheta }^{2}=E\vartheta _{t}^{2}$, and%
\begin{equation*}
\Psi \left( \lambda \right) =\sum_{j\in \mathbb{Z}^{2}}\psi _{j}e^{-ij\cdot
\lambda };\text{ \ \ \ \ \ }\Upsilon \left( \lambda \right) =1+\sum_{0\prec
j}\zeta _{j}e^{-ij\cdot \lambda }\text{.}
\end{equation*}%
The latter displayed expressions indicate that either $\Psi \left( \lambda
\right) $ or $\Upsilon \left( \lambda \right) $ summarize the covariogram
structure of $\left\{ x_{t}\right\} _{t\in \mathbb{Z}^{2}}$.

When $d=1$ and the sequence $\left\{ x_{t}\right\} _{t\in \mathbb{Z}}$ is
purely nondeterministic we know, see Whittle $\left( 1961\text{, }%
p.26\right) $ or Brillinger $\left( 1981\text{, Theorem }3.8.4\right) $,
that the spectral density $f\left( \lambda \right) $ admits a representation 
\begin{equation*}
f\left( \lambda \right) =:\frac{\exp \left( -\alpha _{0}\right) }{2\pi }%
\left\vert A\left( \lambda \right) \right\vert ^{2}=\frac{1}{2\pi }\exp
\left\{ -\alpha _{0}-2\sum_{k=1}^{\infty }\alpha _{k}\cos \left( k\lambda
\right) \right\} \text{,}
\end{equation*}%
where by definition $A\left( \lambda \right) =:\exp \left\{
-\sum_{k=1}^{\infty }\alpha _{k}e^{ik\cdot \lambda }\right\} $. The latter
expression is referred to as the canonical factorization of the spectral
density function and is also known as Bloomfield's model. One important
consequence of the canonical factorization is that the sequence $\left\{
x_{t}\right\} _{t\in \mathbb{Z}}$ can be written as 
\begin{equation*}
x_{t}+\sum_{j=1}^{\infty }a_{j}x_{t-j}=\vartheta _{t}\text{,}
\end{equation*}%
where $\vartheta _{t}$ is a zero mean white noise sequence with finite
second moments and $a_{j}$ are the Fourier coefficients of $A\left( \lambda
\right) $, that is 
\begin{equation*}
a_{j}=\frac{1}{2\pi }\int_{-\pi }^{\pi }A\left( \lambda \right) e^{ij\cdot
\lambda }d\lambda ;\text{ \ \ \ \ }0<j,
\end{equation*}%
with $2\pi \exp \left( \alpha _{0}\right) =\sigma _{\vartheta }^{2}$, i.e.
the one-step prediction error. However, more importantly, denoting 
\begin{equation*}
B\left( \lambda \right) =A^{-1}\left( \lambda \right) =\exp \left\{
\sum_{k=1}^{\infty }\alpha _{k}e^{ik\cdot \lambda }\right\} \text{,}
\end{equation*}%
we have that its Fourier coefficients equal the coefficients $\zeta _{j}$ in 
$\left( \ref{uni_1}\right) $.

\cite{Whittle1954}, Section 6, signalled that a similar argument can be used
when $d>1$. However a formal and theoretical justification for a canonical
factorization of $f\left( \lambda \right) $ when $d>1$ was discussed in \cite%
{Korezlioglu1986}, see also \cite{Solo1986}. More specifically, they show
that the spectral density function of $\left\{ x_{t}\right\} _{t\in \mathbb{Z%
}^{2}}$ might be characterized using the representation 
\begin{equation}
f\left( \lambda \right) =:\frac{\exp \left( -\alpha _{0}\right) }{\left(
2\pi \right) ^{2}}\left\vert A\left( \lambda \right) \right\vert ^{2}=\frac{1%
}{\left( 2\pi \right) ^{2}}\exp \left\{ -\alpha _{0}-2\sum_{0\prec k}\alpha
_{k}\cos \left( k\cdot \lambda \right) \right\} \text{,}  \label{bloom_1}
\end{equation}%
where 
\begin{equation}
A\left( \lambda \right) =:\exp \left\{ -\sum_{0\prec k}\alpha _{k}e^{ik\cdot
\lambda }\right\} \text{,}  \label{arbloompa}
\end{equation}%
which is sometimes known as the \emph{Cepstrum} model by \cite{Solo1986},
who notes that if $0<f\left( \lambda \right) <M$ then the representation of
the spectral density in $\left( \ref{bloom_1}\right) $ or in $\left( \ref%
{arbloompa}\right) $ exists, see also \cite{mcelroy2014}. Note that the
coefficients $\alpha _{k}$ in $\left( \ref{arbloompa}\right) $ are the
Fourier coefficients of $\log \left( f\left( \lambda \right) \right) $, that
is 
\begin{equation}
\alpha _{k}=\frac{1}{2\pi ^{2}}\int_{\widetilde{\Pi }^{2}}\log \left(
f\left( \lambda \right) \right) \cos \left( k\cdot \lambda \right) d\lambda 
\text{, \ \ \ \ \ \ \ }0\prec k\text{ and }k=0\text{,}  \label{alpha_1}
\end{equation}%
where $\widetilde{\Pi }^{2}=\left[ 0,\pi \right] \times \Pi $, that is $%
\lambda \in \widetilde{\Pi }^{2}$ if $\lambda \left[ 1\right] \in \left[
0,\pi \right] $ and $\lambda \left[ 2\right] \in \Pi $. 
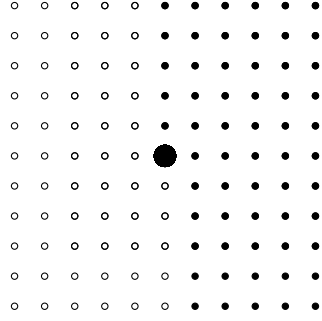
\begin{figure}[tbp]
\centering
\begin{tikzpicture}[scale=0.4]
    \foreach \i in {-5,...,5}
      \foreach \j in {-5,...,5}{
        \draw (\i,\j) circle(3pt);
        \draw (0,0) circle(10pt);
                          \fill[black] (0,0) circle(10pt);
        \ifnum \i > 0
          \fill[black] (\i,\j) circle(3pt);
        \fi
      };
          \foreach \i in {-3,...,5}
      \foreach \j in {-3,...,5}{
        \draw (\i,\j) circle(3pt);
        \ifnum \i =0 
        \ifnum \j>0
          \fill[black] (\i,\j) circle(3pt);
        \fi
        \fi
      };

  \end{tikzpicture}
\caption{Half-plane illustration for $d=2$. Circles form the half plane $
{\prec }0$ while solid dots form the half plane $0{\prec }$. The large
black solid dot marks the origin.}
\label{fig:hp_ill}
\end{figure}

As it is the case when $d=1$, there is a relationship between the
representation in $\left( \ref{uni_1}\right) $ and $\left( \ref{bloom_1}%
\right) /\left( \ref{arbloompa}\right) $, i.e. between the coefficients $%
\zeta _{j}$ and $\alpha _{k}$. So, it will be convenient to discuss the
relationship between the representations of the sequence $\left\{
x_{t}\right\} _{t\in \mathbb{Z}^{2}}$ in the \textquotedblleft
frequency\textquotedblright\ and \textquotedblleft space\textquotedblright\
domains. The link among these coefficients turns out to play a crucial role
in our prediction algorithm. For that purpose, consider the lexicographic
ordering given in $\left( \ref{lex_1}\right) $. Then, denoting the Fourier
coefficients of $A\left( \lambda \right) $ by 
\begin{equation}
a_{j}=\frac{1}{4\pi ^{2}}\int_{\Pi ^{2}}A\left( \lambda \right) e^{ij\cdot
\lambda }d\lambda ;\text{ \ \ \ \ }0\prec j,  \label{a_1}
\end{equation}%
and $a_{0}=1$, the sequence $\left\{ x_{t}\right\} _{t\in \mathbb{Z}^{2}}$
has a unilateral representation given by 
\begin{equation}
x_{t}+\sum_{0\prec j}a_{j}x_{t-j}=\vartheta _{t}\text{,}  \label{arunil}
\end{equation}%
where $\left\{ \vartheta _{t}\right\} _{t\in \mathbb{Z}^{2}}$ is the
sequence given in $\left( \ref{uni_1}\right) $. But also we have that the
coefficients $\zeta _{j}$ in $\left( \ref{uni_1}\right) $ are the Fourier
coefficients of $B\left( \lambda \right) =:A^{-1}\left( \lambda \right)
=\exp \left\{ \sum_{0\prec k}\alpha _{k}e^{-ik\cdot \lambda }\right\} $.
That is, 
\begin{equation*}
\zeta _{j}=\frac{1}{\left( 2\pi \right) ^{2}}\int_{\Pi ^{2}}B\left( \lambda
\right) e^{ij\cdot \lambda }d\lambda \text{, \ \ \ }0\prec j;\text{ \ \ \ }%
\zeta _{0}=1\text{,}
\end{equation*}%
see Section 1.2 of \cite{Korezlioglu1986}. The latter might be considered as
an extension of the canonical factorization given in \cite{Brillinger1981}
to the case $d>1$. However, one key aspect is that there is a direct link
between $\alpha _{k}$ and the coefficients of the Wold-type decomposition of
its autoregressive representation, that is $a_{j}/\zeta _{j}$ and $\alpha
_{k}$. This observation will be important for our prediction methodology in
the next section.

\section{\textbf{Prediction algorithm }}

\label{sec:FE_pred}

The purpose of the section is to present and examine a prediction algorithm,
extending the methodology in \cite{Bhansali1974} or \cite{Hidalgo2002}, to
the case when $d=2$. Similar to the aforementioned work, a key component of
the methodology will be based on the canonical factorization of the spectral
density in $\left( \ref{bloom_1}\right) $. Due to the rather unusual
notation in this paper, we have decided to collate it at this stage for
convenience. Given two vectors $a$ and $b$, $a\geq \left( \leq \right) b$
means that $a\left[ \ell \right] \geq \left( \leq \right) b\left[ \ell %
\right] $ for all $\ell =1,2$. Denote%
\begin{equation*}
\Pi _{n}^{2}=\left\{ \lambda _{k\left[ \ell \right] }=\frac{2\pi k\left[
\ell \right] }{n\left[ \ell \right] }\text{, \ }\ k\left[ \ell \right]
=0,\pm 1,...,\pm \tilde{n}\left[ \ell \right] =:\frac{n\left[ \ell \right] }{%
2}\text{, \ \ }\ell =1,2\right\} \text{,}
\end{equation*}%
where $\lambda _{k}=\left( \lambda _{k\left[ 1\right] },\lambda _{k\left[ 2%
\right] }\right) $ are the Fourier frequencies and $\widetilde{\Pi }%
_{n}^{2}=\left\{ \lambda _{k}\in \Pi _{n}^{2}:\lambda _{k\left[ 1\right]
}>0\right\} $. Finally, we denote 
\begin{eqnarray}
\int_{\lambda \preceq \pi }^{+} &=&\int_{\lambda \left[ 1\right] =0}^{\pi
}\int_{\lambda \left[ 2\right] =-\pi }^{\pi }\text{; \ \ \ \ }\int_{\lambda
\preceq \pi }^{-}=\int_{\lambda \left[ 1\right] =-\pi }^{0}\int_{\lambda %
\left[ 2\right] =-\pi }^{\pi };  \label{notc} \\
\int_{a\leq \lambda \leq b} &=&\int_{\lambda \left[ 1\right] =a\left[ 1%
\right] }^{b\left[ 1\right] }\int_{\lambda \left[ 2\right] =b\left[ 2\right]
}^{b\left[ 2\right] }\text{.}  \notag
\end{eqnarray}%
Similarly, we denote 
\begin{eqnarray}
\left. \sum_{j\preceq J}\right. ^{+}c_{j} &=&\sum_{j\left[ 2\right] =1}^{J%
\left[ 2\right] }c_{0,j\left[ 2\right] }+\sum_{j\left[ 1\right] =1}^{J\left[
1\right] }\sum_{j\left[ 2\right] =1-J\left[ 2\right] }^{J\left[ 2\right]
}c_{j\left[ 1\right] ,j\left[ 2\right] };~\ \ \ \   \notag \\
\left. \sum_{j\preceq J}\right. ^{-}c_{j} &=&\sum_{j\left[ 2\right] =1-J%
\left[ 2\right] }^{0}c_{0,j\left[ 2\right] }+\sum_{j\left[ 1\right] =1-J%
\left[ 1\right] }^{0}\sum_{j\left[ 2\right] =1-J\left[ 2\right] }^{J\left[ 2%
\right] }c_{j\left[ 1\right] ,j\left[ 2\right] };  \label{notd} \\
\sum_{a\leq t\leq b} &=&\sum_{t\left[ 1\right] =a\left[ 1\right] }^{b\left[ 1%
\right] }\sum_{t\left[ 2\right] =a\left[ 2\right] }^{b\left[ 2\right] }\text{%
,}  \notag
\end{eqnarray}%
where we are using the convention that for any $k\in \mathbb{Z}^{2}$, we
write $d_{k}$ as 
\begin{equation*}
d_{k}=d_{k\left[ 1\right] ,k\left[ 2\right] }\text{.}
\end{equation*}%
Observe that $\sum_{j\preceq J}^{+}+\sum_{j\preceq J}^{-}=\sum_{-J<j\leq J}$%
, and likewise $\int_{\lambda \preceq \pi }^{+}+\int_{\lambda \preceq \pi
}^{-}=\int_{\lambda \in \Pi ^{2}}$.

Before we describe our prediction algorithm, we shall introduce our set of
regularity conditions.

\begin{description}
\item[\textbf{Condition C1}] $\left( \mathbf{a}\right) $ $\left\{ \vartheta
_{t}\right\} _{t\in \mathbb{Z}^{2}}$ in $\left( \ref{uni_1}\right) $ is a
zero mean white noise sequence of random variables with variance $\sigma
_{\vartheta }^{2}$ and finite $4th$ moments, with $\kappa _{4,\vartheta }$
denoting the fourth cumulant of $\vartheta _{t}$.

$\left( \mathbf{b}\right) $ The unilateral \emph{Moving Average}
representation of $\left\{ x_{t}\right\} _{t\in \mathbb{Z}^{2}}$ in $\left( %
\ref{uni_1}\right) $ can be written (or it has a representation) as a
unilateral \emph{Autoregressive} model\emph{\ }%
\begin{equation}
x_{t}+\sum_{0\prec j}a_{j}x_{t-j}=\vartheta _{t}\text{.}  \label{SAR}
\end{equation}

$\left( \mathbf{c}\right) $ The coefficients in $\zeta _{j}$ in $\left( \ref%
{uni_1}\right) $ satisfy 
\begin{equation*}
\sum_{0\prec j}\left\{ \sum_{\ell =1}^{2}j^{4}\left[ \ell \right] \right\}
\left\vert \zeta _{j}\right\vert <\infty \text{.}
\end{equation*}

\item[\textbf{Condition C2}] $n=\left( n\left[ 1\right] ,n\left[ 2\right]
\right) $ satisfies that $n\left[ 1\right] \asymp n\left[ 2\right] $ where
\textquotedblleft $a\asymp b$\textquotedblright\ means that $K^{-1}\leq
a/b\leq K$ for some finite positive constant $K$.
\end{description}

We now comment on Conditions $C1$ and $C2$. First, Condition $C2$ can be
generalized to allow for different rates of convergence to zero of $n^{-1}%
\left[ \ell \right] $, $\ell =1,2$. However, for notational simplicity, we
prefer to keep it as it stands. Condition $C1$ could have been written in
terms of the multilateral representation in $\left( \ref{a1}\right) $.
However since the prediction employs the representation in $\left( \ref{SAR}%
\right) $ or $\left( \ref{uni_1}\right) $, we have opted to write $C1$ as it
stands. Part $\left( a\right) $ of Condition $C1$ seems to be a minimal
condition for our results below to hold true. Sufficient regularity
conditions required for the validity of the expansion in $\left( \ref{SAR}%
\right) $ is $\Upsilon \left( z\right) $ be nonzero for any $z\left[ \ell %
\right] $, $\ell =1,2$. The latter condition guarantees that $f\left(
\lambda \right) >0$ for all $\lambda \in \widetilde{\Pi }^{2}$. Part $\left( 
\mathbf{c}\right) $ entails that the spectral density $f\left( \lambda
\right) $ is $4$ times continuously differentiable. This is needed if one
wants to achieve a similar rate of approximation of sums by their integrals
when $d=1$ and the function is twice continuously differentiable. Indeed
whereas when $d=1$, we have that 
\begin{equation*}
\frac{1}{n}\sum_{i=1}^{n}g\left( \frac{i}{n}\right) -\int_{0}^{1}g\left(
x\right) =\frac{1}{n}\left( g\left( 0\right) -g\left( 1\right) \right)
+O\left( n^{-2}\right) ,
\end{equation*}%
with two continuous derivatives for $g\left( x\right) $, to have a
\textquotedblleft similar\textquotedblright\ result when $d=2$ one needs $%
g(x)$ to be $4$ times continuously differentiable. See Lemma \ref{bias_2} in
the appendix for some extra insight.

We now discuss the methodology to predict the\ value of $x_{t}$ at an
unobserved location without imposing any specific parametric model for $%
f\left( \lambda \right) $. In addition, as a by-product, we provide a simple
estimator of the coefficients $\zeta _{j}$ or $a_{j}$. First, $A\left(
\lambda \right) $ and expression $\left( \ref{alpha_1}\right) $ suggest that
to compute an estimator of the coefficients $\alpha _{j}$ and/or $a_{j}$, it
suffices to obtain an estimator of $f\left( \lambda \right) $. To that end,
for a generic sequence $\left\{ v_{t}\right\} _{t=1}^{n}$, we shall define
the \emph{discrete Fourier transform}, $DFT$, as 
\begin{equation*}
w_{v}\left( \lambda \right) =\frac{1}{\mathbf{n}^{1/2}}\sum_{1\leq t\leq
n}v_{t}e^{-it\cdot \lambda },
\end{equation*}%
and the \emph{periodogram }as 
\begin{equation*}
I_{v}\left( \lambda \right) =\frac{1}{\left( 2\pi \right) ^{2}}\left\vert
w_{v}\left( \lambda \right) \right\vert ^{2}\text{; \ \ \ \ \ }\lambda \in 
\widetilde{\Pi }^{2}\text{,}
\end{equation*}%
where, in what follows, we use the notation that for any $g=\left( g\left[ 1%
\right] ,g\left[ 2\right] \right) $, 
\begin{equation}
\mathbf{g}=g\left[ 1\right] g\left[ 2\right] \text{.}  \label{gblack}
\end{equation}%
In real applications, in order to make use of the \emph{fast Fourier
transform}, the \emph{periodogram} will be evaluated at the Fourier
frequencies $\lambda _{k}$.

However as noted by \cite{Guyon1982}, due to non-negligible end effects (the
edge effect), the bias of the periodogram does not converge to zero fast
enough when $d>1$. We therefore proceed as in \cite{Dahlhaus1987}, and
employ the \emph{tapered periodogram} defined as 
\begin{equation}
I_{v}^{T}\left( \lambda _{j}\right) =\frac{1}{\left( 2\pi \right) ^{2}}%
\left\vert w_{v}^{T}\left( \lambda _{j}\right) \right\vert ^{2}\text{; \ \ \
\ }w_{v}^{T}\left( \lambda _{j}\right) =\frac{1}{\left( \sum_{1\leq t\leq
n}h_{t}^{2}\right) ^{1/2}}\sum_{1\leq t\leq n}h_{t}v_{t}e^{it\cdot \lambda
_{j}}\text{,}  \label{pert}
\end{equation}%
where $w_{v}^{T}\left( \lambda _{j}\right) $ denotes the taper discrete
Fourier transform, $DFT$. One common taper is the \emph{cosine-bell (or
Hanning)} function, which is defined as 
\begin{equation}
h_{t}=\frac{1}{4}h_{t\left[ 1\right] }h_{t\left[ 2\right] }\text{; \ \ \ \ }%
h_{t\left[ \ell \right] }=\left( 1-\cos \left( \frac{2\pi t\left[ \ell %
\right] }{n\left[ \ell \right] }\right) \right) \text{,}  \label{cos-bell}
\end{equation}%
see \cite{Brillinger1981}. It is worth observing the \emph{cosine-bell }%
taper \emph{DFT} is related to $w_{v}\left( \lambda \right) $ by the
equality 
\begin{equation}
w_{v}^{T}\left( \lambda _{j}\right) =\frac{1}{6}\prod\limits_{\ell =1}^{2}%
\left[ -w_{v}\left( \lambda _{j\left[ \ell \right] -1}\right) +2w_{v}\left(
\lambda _{j\left[ \ell \right] }\right) -w_{v}\left( \lambda _{j\left[ \ell %
\right] +1}\right) \right] \text{.}  \label{dft}
\end{equation}%
In this paper we shall explicitly consider the \emph{cosine-bell}, although
the same results follow employing other taper functions such as \emph{Parzen
or Kolmogorov} tapers \citep{Brillinger1981}. \ This is formalized in the
next condition.

\begin{description}
\item[\textbf{Condition C3}] $\left\{ h_{t}\right\} _{t=1}^{n}$ is the
cosine-bell taper function in $\left( \ref{cos-bell}\right) $.
\end{description}

Using notation in $\left( \ref{gblack}\right) $, we shall estimate $f\left(
\lambda \right) $ by the average tapered periodogram 
\begin{equation}
\widehat{f}\left( \lambda \right) =\frac{1}{4\mathbf{m}}\sum_{-m<\ell \leq
m}I_{x}^{T}\left( \lambda +\lambda _{\ell }\right) \text{,}  \label{fest}
\end{equation}%
where $m\left[ \ell \right] /n\left[ \ell \right] +m\left[ \ell \right]
^{-1}=o\left( 1\right) $,$\ $for $\ell =1,2$. Next, we denote $\widetilde{%
\lambda }_{k}=\left( \widetilde{\lambda }_{k\left[ 1\right] },\widetilde{%
\lambda }_{k\left[ 2\right] }\right) ^{\prime }$, for $k\left[ 1\right]
=0,1,...,M\left[ 1\right] =:\tilde{n}\left[ 1\right] /m\left[ 1\right] $ and 
$k\left[ 2\right] =0,\pm 1,...,\pm M\left[ 2\right] =:\tilde{n}\left[ 2%
\right] /m\left[ 2\right] $, where 
\begin{equation*}
\widetilde{\lambda }_{k\left[ \ell \right] }=\frac{\pi k\left[ \ell \right] 
}{M\left[ \ell \right] }\text{; \ \ \ }\ell =1,2\text{.}
\end{equation*}%
Bearing in mind $\left( \ref{notd}\right) $, denoting $\mathcal{M=}\left\{
j:~\left( 0\prec j;j=0\right) \text{ }\wedge \left( -M<j\leq M\right)
\right\} $ and abbreviating $\phi \left( \widetilde{\lambda }_{k}\right) $
by $\phi _{k}$ for a generic function $\phi \left( \lambda \right) $, we
estimate the coefficients $a_{j}$, $j=1,...,M$, as%
\begin{equation}
\widehat{a}_{j}=\frac{1}{4\mathbf{M}}\sum_{-M<k\leq M}\widehat{A}%
_{k}e^{ij\cdot \widetilde{\lambda }_{k}}\text{, \ \ \ }\ j\in \mathcal{M};
\label{ahat_j}
\end{equation}%
\begin{equation*}
\widehat{A}_{k}=\overline{\widehat{A}}_{-k}=\exp \left\{ -\left.
\sum_{j\preceq M}\right. ^{+}\widehat{\alpha }_{j}e^{-ij\cdot \widetilde{%
\lambda }_{k}}\right\} \text{, \ \ \ \ \ \ \ }k\in \mathcal{M\cup }\left\{
0\right\} ;
\end{equation*}%
\begin{equation}
\widehat{\alpha }_{j}=\frac{1}{2\mathbf{M}}\left. \sum_{k\preceq M}\right.
^{+}\cos \left( j\cdot \widetilde{\lambda }_{k}\right) \log \widehat{f}_{k}%
\text{, \ \ \ \ \ \ \ }j\in \mathcal{M}.  \label{cr_1}
\end{equation}

It is also worth defining the quantities $\left( \ref{ahat_j}\right) $ and $%
\left( \ref{cr_1}\right) $ when $\widehat{f}\left( \lambda \right) $ is
replaced by $f\left( \lambda \right) $, that is 
\begin{equation*}
\widetilde{f}\left( \lambda \right) =\frac{1}{4\mathbf{m}}\sum_{-m<k\leq
m}f\left( \lambda +\lambda _{k}\right) \text{.}
\end{equation*}%
That is, 
\begin{eqnarray}
\widetilde{a}_{j,n} &=&\frac{1}{4\mathbf{M}}\sum_{-M<k\leq M}\widetilde{A}%
_{k,n}e^{ij\cdot \widetilde{\lambda }_{k}}\text{ \ \ \ \ \ \ \ \ \ \ \ \ \ \
\ \ \ \ \ \ \ \ \ \ \ \ \ \ \ \ }j\in \mathcal{M}  \notag \\
\widetilde{A}_{k,n} &=&\overline{\widetilde{A}}_{k,n}=\exp \left\{ -\left.
\sum_{j\preceq M}\right. ^{+}\widetilde{\alpha }_{j,n}e^{-ij\cdot \widetilde{%
\lambda }_{k}}\right\} \text{ \ \ \ \ \ \ }k\in \mathcal{M\cup }\left\{
0\right\}  \label{an_j} \\
\widetilde{\alpha }_{j,n} &=&\frac{1}{2\mathbf{M}}\left. \sum_{k\preceq
M}\right. ^{+}\cos \left( j\cdot \widetilde{\lambda }_{k}\right) \log 
\widetilde{f}_{k}\text{ \ \ \ \ \ \ \ \ \ \ \ }j\in \mathcal{M\cup }\left\{
0\right\}  \notag
\end{eqnarray}%
and also we denote%
\begin{eqnarray}
a_{j,n} &=&\frac{1}{4\mathbf{M}}\sum_{-M<k\leq M}A_{k,n}e^{ij\cdot 
\widetilde{\lambda }_{k}}\text{ \ \ \ \ \ \ \ \ \ \ \ \ \ \ \ \ \ \ \ \ \ \
\ \ \ \ \ \ \ \ \ }j\in \mathcal{M}  \notag \\
A_{k,n} &=&\overline{A}_{k,n}=\exp \left\{ -\left. \sum_{j\preceq M}\right.
^{+}\alpha _{j,n}e^{-ij\cdot \widetilde{\lambda }_{k}}\right\} \text{ \ \ \
\ \ \ }k\in \mathcal{M\cup }\left\{ 0\right\}  \label{crn_1} \\
\alpha _{j,n} &=&\frac{1}{2\mathbf{M}}\left. \sum_{k\preceq M}\right.
^{+}\cos \left( j\cdot \widetilde{\lambda }_{k}\right) \log f_{k}\text{ \ \
\ \ \ \ \ \ \ \ \ \ }j\in \mathcal{M\cup }\left\{ 0\right\} \text{.}  \notag
\end{eqnarray}

We shall now begin describing how we can predict a value $x_{t}$ at the
location $s=\left( s\left[ 1\right] ,s\left[ 2\right] \right) $ such that $%
1\leq s\left[ 1\right] \leq n\left[ 1\right] $ and $1\leq s\left[ 2\right]
\leq n\left[ 2\right] $. For instance, we wish to predict the unobserved
value\ $x_{s}$%
\begin{equation*}
\ 
\begin{array}{c}
\bullet \\ 
\bullet \\ 
\bullet \\ 
\bullet \\ 
\bullet%
\end{array}%
\begin{array}{c}
\bullet \\ 
\bullet \\ 
\bullet \\ 
\bullet \\ 
\bullet%
\end{array}%
\begin{array}{c}
\bullet \\ 
\bullet \\ 
\bullet \\ 
\bullet \\ 
\bullet%
\end{array}%
\begin{array}{c}
\bullet \\ 
\bullet \\ 
\bullet \\ 
\bullet \\ 
\bullet%
\end{array}%
\begin{array}{c}
\bullet \\ 
\bullet \\ 
\bullet \\ 
\bullet \\ 
\bullet%
\end{array}%
\begin{array}{c}
\bullet \\ 
\bullet \\ 
\bullet \\ 
\bullet \\ 
\bullet%
\end{array}%
\begin{array}{c}
\bullet \\ 
\bullet \\ 
\left( s\left[ 1\right] ,s\left[ 2\right] \right) ? \\ 
\bullet \\ 
\bullet%
\end{array}%
\begin{array}{c}
\bullet \\ 
\bullet \\ 
\bullet \\ 
\bullet \\ 
\bullet%
\end{array}%
\ \ \ \ \ \ \ \ \text{or}\ \ \ 
\begin{array}{c}
\bullet \\ 
\bullet \\ 
\bullet \\ 
\bullet \\ 
\bullet%
\end{array}%
\begin{array}{c}
\bullet \\ 
\bullet \\ 
\bullet \\ 
\bullet \\ 
\bullet%
\end{array}%
\begin{array}{c}
\bullet \\ 
\bullet \\ 
\bullet \\ 
\bullet \\ 
\bullet%
\end{array}%
\begin{array}{c}
\bullet \\ 
\bullet \\ 
\bullet \\ 
\bullet \\ 
\bullet%
\end{array}%
\begin{array}{c}
\bullet \\ 
\bullet \\ 
\bullet \\ 
\bullet \\ 
\bullet%
\end{array}%
\begin{array}{c}
\bullet \\ 
\bullet \\ 
\left( s\left[ 1\right] ,s\left[ 2\right] \right) ? \\ 
\bullet \\ 
\bullet%
\end{array}%
\end{equation*}%
Now, the location of $s$ suggests that a convenient unilateral
representation of $x_{t}$ appears to be%
\begin{equation}
x_{t}=-\sum_{k\left[ 2\right] =1}^{\infty }a_{0,k\left[ 2\right] }~x_{t\left[
1\right] ,t\left[ 2\right] -k\left[ 2\right] }-\sum_{k\left[ 1\right]
=1}^{\infty }\sum_{k\left[ 2\right] =-\infty }^{\infty
}a_{k}~x_{t-k}+\vartheta _{t}\text{,}  \label{unil_1}
\end{equation}%
which comes from the lexicographic ordering in $\left( \ref{lex_1}\right) $.
Since we need to estimate the coefficients $a_{k}$, the prediction will then
become 
\begin{equation}
\widehat{x}_{s\left[ 1\right] ,s\left[ 2\right] }=-\sum_{k\left[ 2\right]
=1}^{M\left[ 2\right] }\widehat{a}_{0,k\left[ 2\right] }~x_{s\left[ 1\right]
,s\left[ 2\right] -k\left[ 2\right] }-\sum_{k\left[ 1\right] =1}^{M\left[ 1%
\right] }\sum_{k\left[ 2\right] =1-M\left[ 2\right] }^{M\left[ 2\right] }%
\widehat{a}_{k}~x_{s-k}\text{,}  \label{1}
\end{equation}%
where $\widehat{a}_{k}~x_{s-k}=:\widehat{a}_{k\left[ 1\right] ,k\left[ 2%
\right] }~x_{s\left[ 1\right] -k\left[ 1\right] ,s\left[ 2\right] -k\left[ 2%
\right] }$. However, it may be very plausible that the value$\ $of $M$ is
such that we may not observe the process at some of the locations employed
to compute $\left( \ref{1}\right) $. That is, consider the situation where
we want to predict $x_{s}$%
\begin{equation*}
\begin{array}{c}
\bullet \\ 
\bullet \\ 
\bullet \\ 
\bullet \\ 
\bullet%
\end{array}%
\begin{array}{c}
\bullet \\ 
\bullet \\ 
\bullet \\ 
\bullet \\ 
\bullet%
\end{array}%
\begin{array}{c}
\bullet \\ 
\bullet \\ 
\bullet \\ 
\bullet \\ 
\bullet%
\end{array}%
\begin{array}{c}
\bullet \\ 
\bullet \\ 
\bullet \\ 
\bullet \\ 
\bullet%
\end{array}%
\begin{array}{c}
\bullet \\ 
\bullet \\ 
\bullet \\ 
\bullet \\ 
\bullet%
\end{array}%
\begin{array}{c}
\bullet \\ 
\bullet \\ 
\bullet \\ 
\bullet \\ 
\bullet%
\end{array}%
\begin{array}{c}
\left( s\left[ 1\right] ,s\left[ 2\right] \right) ? \\ 
\bullet \\ 
\bullet \\ 
\bullet \\ 
\bullet%
\end{array}%
\begin{array}{c}
\bullet \\ 
\bullet \\ 
\bullet \\ 
\bullet \\ 
\bullet%
\end{array}%
\end{equation*}

In this case we observe that to compute $\left( \ref{1}\right) $, we first
need to obtain a predictor of values of $x_{s-k}$ when say $k\left[ 1\right]
=1$ and $k\left[ 2\right] <0$, since $x_{s-k}$ is not observed at those
locations, which in its computation needs predictors of the relevant values
themselves. See $\left( \ref{pre_1}\right) $ for more exact details.
However, in this case one can avoid this extra computational burden. Indeed,
this is so as the relative location $\left( s\left[ 1\right] ,s\left[ 2%
\right] \right) $ suggests that the practitioner might have employed the
Wold-type representation%
\begin{equation}
x_{t}=-\sum_{k\left[ 1\right] =1}^{\infty }a_{k\left[ 1\right] ,0}~x_{t\left[
1\right] -k\left[ 1\right] ,t\left[ 2\right] }-\sum_{k\left[ 2\right]
=1}^{\infty }\sum_{k\left[ 1\right] =-\infty }^{\infty
}a_{k}~x_{t-k}+\vartheta _{t}  \label{unil_2}
\end{equation}%
which can be regarded as induced by the lexicographic ordering\ 
\begin{equation}
j\prec k\Leftrightarrow \left( j\left[ 2\right] <k\left[ 2\right] \right) 
\text{ or }\left( j\left[ 2\right] =k\left[ 2\right] \vee j\left[ 1\right] <k%
\left[ 1\right] \right) \text{.}  \label{lex_2}
\end{equation}%
Note that the lexicographic ordering $\left( \ref{lex_2}\right) $ is as that
in $\left( \ref{lex_1}\right) $ but swapping $j\left[ 2\right] $ for $j\left[
1\right] $. From here, we proceed as with $\left( \ref{1}\right) $ but with
the \textquotedblleft coordinates\textquotedblright\ $\left[ 2\right] $ and $%
\left[ 1\right] $ changing their roles.

Finally, consider the case where location we wish to predict $x_{s}$ is $%
\left( n\left[ 1\right] +1,s\left[ 2\right] \right) $. That is,%
\begin{equation*}
\begin{array}{c}
. \\ 
\bullet \\ 
\bullet \\ 
\bullet \\ 
\bullet%
\end{array}%
\begin{array}{c}
. \\ 
\bullet \\ 
\bullet \\ 
\bullet \\ 
\bullet%
\end{array}%
\begin{array}{c}
. \\ 
\bullet \\ 
\bullet \\ 
\bullet \\ 
\bullet%
\end{array}%
\begin{array}{c}
. \\ 
\bullet \\ 
\bullet \\ 
\bullet \\ 
\bullet%
\end{array}%
\begin{array}{c}
. \\ 
\bullet \\ 
\bullet \\ 
\bullet \\ 
\bullet%
\end{array}%
\begin{array}{c}
. \\ 
\bullet \\ 
\bullet \\ 
\bullet \\ 
\bullet%
\end{array}%
\begin{array}{c}
. \\ 
. \\ 
. \\ 
\left( n\left[ 1\right] +1,s\left[ 2\right] \right) \\ 
.%
\end{array}%
\end{equation*}

Now, the location of $s=:\left( n\left[ 1\right] +1,s\left[ 2\right] \right) 
$ suggests that the more convenient representation of $x_{s}$ appears to be
that in $\left( \ref{unil_1}\right) $ which comes from the lexicographic
ordering in $\left( \ref{lex_1}\right) $, and hence our prediction is given
in $\left( \ref{1}\right) $. That is, since we need to estimate the
coefficients $a_{k}$, the prediction will then become%
\begin{equation}
\widehat{x}_{n\left[ 1\right] +1,s\left[ 2\right] }=-\sum_{k\left[ 2\right]
=1}^{M\left[ 2\right] }\widehat{a}_{0,k\left[ 2\right] }~x_{n\left[ 1\right]
+1,s\left[ 2\right] -k\left[ 2\right] }-\sum_{k\left[ 1\right] =1}^{M\left[ 1%
\right] }\sum_{k\left[ 2\right] =1-M\left[ 2\right] }^{M\left[ 2\right] }%
\widehat{a}_{k}~x_{s-k}\text{.}  \label{1a}
\end{equation}%
However to compute the prediction we also need to replace the unobserved $%
x_{s}$ by its prediction. As with \textquotedblleft
standard\textquotedblright\ time series when we wish to predict beyond $1$
period ahead, this is done by recursion, that is we make use of formula $%
\left( \ref{1}\right) $ starting say from the value $x_{n\left[ 1\right] +1,s%
\left[ 2\right] -M\left[ 2\right] }$. Once we have \textquotedblleft
predicted\textquotedblright\ the value for this observation, we then predict 
$x_{n\left[ 1\right] +1,s\left[ 2\right] -M\left[ 2\right] +1}$ and so on.
For instance, for any $r\left[ 1\right] =0,...,r$ and $r\left[ 2\right]
=0,...,r=\min \left\{ n\left[ 2\right] /8;M\left[ 2\right] \right\} $, 
\begin{eqnarray}
\widehat{x}_{t\left[ 1\right] -r\left[ 1\right] ,t\left[ 2\right] -r\left[ 2%
\right] } &=&-\sum_{k\left[ 2\right] =1}^{M\left[ 2\right] }\widehat{a}_{0,k%
\left[ 2\right] }~\widehat{x}_{t\left[ 1\right] -r\left[ 1\right] ,t\left[ 2%
\right] -r\left[ 2\right] -k\left[ 2\right] }  \label{pre_1} \\
&&-\sum_{k\left[ 1\right] =1}^{M\left[ 1\right] }\sum_{k\left[ 2\right] =1-M%
\left[ 2\right] }^{M\left[ 2\right] }\widehat{a}_{k\left[ 1\right] ,k\left[ 2%
\right] }~\widehat{x}_{t\left[ 1\right] -r\left[ 1\right] -k\left[ 1\right]
,t\left[ 2\right] -r\left[ 2\right] -k\left[ 2\right] }\text{,}  \notag
\end{eqnarray}%
where we take the convention that $\widehat{x}_{s}=x_{s}$ if the location
were observed and $=:0$ when $s\left[ 2\right] <-r$ or $\left\{ s\left[ 2%
\right] <0\text{ }\wedge s\left[ 1\right] <n\left[ 1\right] -r\right\} $.
Finally, if we were interested to predict $x_{t}$ at the unobserved location 
$\left( t\left[ 1\right] ,n\left[ 2\right] +1\right) $, then it suggests to
employ the lexicographic ordering in $\left( \ref{lex_2}\right) $ and hence
the representation given in $\left( \ref{unil_2}\right) $, and then we would
proceed as above but again with the \textquotedblleft
coordinates\textquotedblright\ $\left[ 2\right] $ and $\left[ 1\right] $
changing their roles.

Before we examine the statistical properties of $\widehat{x}_{t}$ in $\left( %
\ref{1}\right) $ or $\left( \ref{1a}\right) $, we shall look at those of $%
\widehat{\alpha }_{j}$ or $\widehat{A}_{j}$. For that purpose, denote 
\begin{eqnarray}
\delta _{j} &:&=1\text{ \ if \ }j=0\text{ \ and}:=0\text{ \ otherwise}
\label{not_th} \\
\phi _{k} &=&\left\{ 
\begin{array}{c}
i\frac{1-\cos \left( k\pi \right) }{k\pi },\text{ \ \ \ \ \ if \ \ }k\in 
\mathbb{N}^{+} \\ 
1,\text{\ \ \ \ \ \ \ \ \ \ \ \ \ \ \ \ if \ \ \ \ \ }k=0\text{.}%
\end{array}%
\right.   \notag
\end{eqnarray}%
Also, denote $\left\{ \boldsymbol{\xi }_{j}\right\} _{j}$ the Fourier
coefficients of $g\left( \lambda \right) $ \ given by%
\begin{eqnarray}
g\left( \lambda \right)  &=&\frac{1}{6}\left( f_{11}\left( \lambda \right)
+f_{22}\left( \lambda \right) \right) \text{; \ \ \ \ }\lambda \in \Pi ^{2}
\label{g_1} \\
f_{\ell _{1}\ell _{2}}\left( \lambda \right)  &=&\frac{\partial ^{2}}{%
\partial \lambda _{\left[ \ell _{1}\right] }\partial \lambda _{\left[ \ell
_{2}\right] }}f\left( \lambda \right) ;\text{ \ }\ell _{1},\ell _{2}=1,2%
\text{.}  \notag
\end{eqnarray}%
Notice that Condition $C1$ implies that $g\left( \lambda \right) $ is twice
continuous differentiable, so that $\left\{ \boldsymbol{\xi }_{j}\right\}
_{j}$ is summable.

We introduce one extra condition relating the rate of increase of $m\left[
\ell \right] $ with respect of $n\left[ \ell \right] $.

\begin{description}
\item[\textbf{Condition C4}] $n\left[ \ell \right] ,m\left[ \ell \right]
\rightarrow \infty $, for $\ell =1,2$, such that%
\begin{equation*}
\frac{n^{3}\left[ \ell \right] }{m^{4}\left[ \ell \right] }+\frac{m\left[
\ell \right] }{n\left[ \ell \right] }\rightarrow 0~\ \ \ \ \ell =1,2\text{.}
\end{equation*}
\end{description}

\begin{theorem}
Under $C1-C4$, for any finite integer $J$, we have that 
\begin{equation*}
\left( \mathbf{a}\right) \text{ \ \ }\mathbf{n}^{1/2}\left( \widehat{\alpha }%
_{j}-\widetilde{\alpha }_{j,n}\right) _{j=1}^{J}\overset{d}{\rightarrow }%
\mathcal{N}\left( 0,\Omega _{\alpha }\right) \text{,}
\end{equation*}%
\begin{equation*}
\left( \mathbf{b}\right) \text{\ \ \ \ \ }\widetilde{\alpha }_{j,n}-\alpha
_{j,n}=O\left( \mathbf{M}^{-1}\mathbf{\xi }_{j}+\mathbf{M}^{2}\right) \text{%
, \ \ \ }j=1,...,J\text{,}
\end{equation*}%
where$\ \Omega _{a}$ is a diagonal matrix whose $(j,j)$-th element is $%
1+\left( 1+\kappa _{4,\vartheta }\right) \delta _{j}$.
\end{theorem}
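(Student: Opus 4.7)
The plan is to prove parts (a) and (b) separately via the decomposition $\widehat{\alpha}_j - \alpha_{j,n} = (\widehat{\alpha}_j - \widetilde{\alpha}_{j,n}) + (\widetilde{\alpha}_{j,n} - \alpha_{j,n})$. Part (b) is a deterministic calculation, obtained by expanding $\log$ and $f$ around each coarse grid point $\widetilde{\lambda}_k$. Part (a) is a stochastic linearisation whose leading term is a weighted sum of centred tapered periodogram ordinates, to which a spectral CLT applies.

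For part (b), I would expand
\[
\log\widetilde{f}_k - \log f_k = \frac{\widetilde{f}_k - f_k}{f_k} - \frac{1}{2}\left(\frac{\widetilde{f}_k - f_k}{f_k}\right)^{\!2} + O\bigl((\widetilde{f}_k - f_k)^{3}\bigr),
\]
and Taylor-expand $f(\widetilde{\lambda}_k + \lambda_\ell)$ around $\widetilde{\lambda}_k$, which is legitimate to fourth order by $C1(c)$. The odd-order terms in $\lambda_\ell$ contribute only $O(n^{-1})$ after averaging over the symmetric block $-m<\ell\leq m$, so the dominant piece is the quadratic one, giving $\widetilde{f}_k - f_k \sim \tfrac{\pi^{2}}{6}(f_{11,k}/M[1]^{2} + f_{22,k}/M[2]^{2}) = O(\mathbf{M}^{-1}g(\widetilde{\lambda}_k))$ via the calculation of $(4\mathbf{m})^{-1}\sum_\ell \lambda_\ell \lambda_\ell'$. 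Weighting by $\cos(j\cdot\widetilde{\lambda}_k)/(2\mathbf{M})$ and passing to the limit recovers the $j$-th Fourier coefficient of $g$ (scaled by $\mathbf{M}^{-1}$), while the Riemann-sum quadrature error, the quadratic log remainder, and the fourth-order Taylor remainder are each absorbed into $O(\mathbf{M}^{-2})$ thanks to the $C^{4}$ smoothness and Lemma \ref{bias_2} in the appendix.

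For part (a), I would linearise $\log\widehat{f}_k - \log\widetilde{f}_k = (\widehat{f}_k - \widetilde{f}_k)/\widetilde{f}_k + R_k$ with $R_k = O_p((\widehat{f}_k/\widetilde{f}_k - 1)^{2})$, and use a uniform bound $\max_k |\widehat{f}_k/\widetilde{f}_k - 1| = o_p(1)$ (from $\mathrm{Var}\,\widehat{f}_k = O(\mathbf{m}^{-1})$, the grid cardinality $O(\mathbf{M})$, and $C4$) to show the aggregated $R_k$ term is $o_p(\mathbf{n}^{-1/2})$. Substituting $\widehat{f}_k$ from (\ref{fest}) and exploiting the fact that the pairs $\widetilde{\lambda}_k + \lambda_\ell$ partition the full set of Fourier frequencies in $\widetilde{\Pi}_n^{2}$, the leading term becomes
\[
\mathbf{n}^{1/2}\,\frac{1}{2\mathbf{M}}\sum_{k\preceq M}^{+}\frac{\cos(j\cdot\widetilde{\lambda}_k)}{\widetilde{f}_k}\cdot\frac{1}{4\mathbf{m}}\sum_{-m<\ell\leq m}\bigl(I_x^{T}(\widetilde{\lambda}_k+\lambda_\ell) - \mathrm{E}\,I_x^{T}(\widetilde{\lambda}_k+\lambda_\ell)\bigr),
\]
a single weighted sum of centred tapered periodogram ordinates, the deterministic bias piece being subsumed into part (b).

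The CLT then follows by substituting the unilateral $MA(\infty)$ representation (\ref{uni_1}) of $x_t$ into $w_x^{T}$, thereby expressing the display as a quadratic form in the i.i.d.\ errors $\{\vartheta_t\}$, and applying either the cumulant method (à la Brillinger) or a martingale CLT organised along the ordering $\prec$, jointly over $j=1,\ldots,J$. The main obstacle is the variance computation: the Gaussian part of $\mathrm{Cov}(I_x^{T}(\lambda),I_x^{T}(\mu))$ produces, via the orthogonality $(2\pi^{2})^{-1}\int_{\widetilde{\Pi}^{2}}\cos(j\cdot\lambda)\cos(j'\cdot\lambda)\,d\lambda = \tfrac12 \delta_{jj'}$ together with the cosine-bell taper identity (\ref{dft}), the unit diagonal entry $1$ of $\Omega_\alpha$, whereas the fourth-cumulant contribution enters through $\mathbf{n}^{-1}\sum_t h_t^{4}\bigl((2\mathbf{M})^{-1}\sum_k \cos(j\cdot\widetilde{\lambda}_k)\bigr)^{2}$ and survives only at $j=0$, explaining the $(1+\kappa_{4,\vartheta})\delta_j$ correction; off-diagonal entries of $\Omega_\alpha$ vanish by the same orthogonality. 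The principal difficulties are tracking the cosine-bell taper constants through the variance, controlling the non-local cross-frequency bias of the tapered periodogram uniformly in $k$, and showing that replacing $\widetilde{f}_k$ by $f_k$ in the denominator of the linearisation introduces only $o_p(\mathbf{n}^{-1/2})$ error under $C4$.
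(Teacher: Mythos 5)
Your proposal follows essentially the same route as the paper: part (b) is the Taylor/Riemann-sum computation of Lemma \ref{c_hat} giving $\widetilde{f}_k-f_k=O(\mathbf{M}^{-1}g_k)+O(\mathbf{M}^{-2})$ and hence the $\mathbf{M}^{-1}\boldsymbol{\xi}_j$ term, while part (a) linearises $\log(\widehat{f}_k/\widetilde{f}_k)$, kills the quadratic remainder by a uniform bound of the type in Lemma \ref{f_hat}, rewrites the leading term as a weighted sum of centred tapered periodogram ordinates over all Fourier frequencies, passes to the innovation periodogram, and computes the variance by cosine orthogonality with the fourth cumulant surviving only at $j=0$ --- exactly the paper's argument (which invokes the Cram\'er--Wold device and defers the final CLT to earlier work). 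The only cosmetic difference is that you centre at $E I_x^{T}$ rather than at $f$ (the paper absorbs the tapered-periodogram bias via its Lemma 1), a point you correctly flag as needing uniform control.
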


\begin{description}
\item[\emph{Remark}] Because $\sigma _{\vartheta }^{2}=2\pi \exp \left(
\alpha _{0}\right) $, we have that $\widehat{\sigma }_{\vartheta }^{2}=:2\pi
\exp \left( \widehat{\alpha }_{0}\right) $ is a consistent estimator of $%
\sigma _{\vartheta }^{2}$. Indeed, by standard delta methods, the proof
follows using Theorem 1 and that Lemma \ref{bias_2} implies that $\alpha
_{0,n}-\alpha _{0}=O\left( \mathbf{M}^{-1/2}\right) $.
\end{description}

\begin{theorem}
Under $C1-C4$, for any finite integer $J$, we have that%
\begin{equation*}
\left( \mathbf{a}\right) \text{ \ \ \ \ \ \ \ \ \ \ \ \ \ \ \ \ \ \ \ \ \ \
\ }\mathbf{m}^{1/2}\left( \widehat{A}_{j}-\widetilde{A}_{j,n}\right)
_{j=1}^{J}\overset{d}{\rightarrow }\mathcal{N}^{c}\left( 0,\Omega
_{A}\right) \text{,}
\end{equation*}
\begin{equation*}
\left( \mathbf{b}\right) \text{ \ }\widetilde{A}_{j,n}-A_{j,n}=\frac{1}{%
\mathbf{M}}g_{j}A_{j,n}+o\left( \mathbf{m}^{-1/2}\right) \text{, \ \ \ }%
j=1,...,J\text{,}
\end{equation*}%
where $g_{j}=g\left( \widetilde{\lambda }_{j}\right) $ is given in $\left( %
\ref{g_1}\right) $ and $\mathcal{N}^{c}\left( 0,\Omega _{A}\right) $ denotes
a complex normal random variable with the $(j_{1},j_{2})$-th element of $%
\Omega _{A}$ given by 
\begin{equation*}
\Omega _{A,j_{1}j_{2}}=2\left( \delta _{j_{1}\left[ 1\right] -j_{2}\left[ 1%
\right] }+2^{-1}\phi _{j_{1}\left[ 1\right] }\phi _{j_{2}\left[ 1\right]
}-i\phi _{j_{1}\left[ 1\right] -j_{2}\left[ 1\right] }\right) \delta _{j_{1}%
\left[ 2\right] \pm j_{2}\left[ 2\right] }A_{j_{1}}\overline{A}_{j_{2}}\text{%
.}
\end{equation*}
\end{theorem}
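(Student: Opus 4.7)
My plan is to leverage Theorem 1 together with the exponential structure $\widehat A_j=\exp\{-\sum_{k\preceq M}^{+}\widehat\alpha_k e^{-ik\cdot\widetilde\lambda_j}\}$ that relates $\widehat A_j,\widetilde A_{j,n},A_{j,n}$ to their respective cepstral coefficients. Throughout, I write
\begin{equation*}
\widehat A_j-\widetilde A_{j,n}=\widetilde A_{j,n}\bigl[e^{\Delta_j}-1\bigr],\qquad \Delta_j=-\sum_{k\preceq M}^{+}(\widehat\alpha_k-\widetilde\alpha_{k,n})e^{-ik\cdot\widetilde\lambda_j},
\end{equation*}
and analogously for $\widetilde A_{j,n}-A_{j,n}$ with $\widetilde\alpha_{k,n}-\alpha_{k,n}$ in the exponent. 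The overall strategy is a first-order expansion of these exponentials, followed by identification of the linearized form as a smoothed functional of the tapered periodogram (for part (a)) or of the spectral density itself (for part (b)).

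For part (a), I would first establish $\Delta_j=O_p(\mathbf m^{-1/2})$ so the Taylor remainder $e^{\Delta_j}-1-\Delta_j=O_p(\mathbf m^{-1})$ is negligible after multiplication by the bounded $\widetilde A_{j,n}$. Substituting the definition of $\widehat\alpha_k-\widetilde\alpha_{k,n}$ and interchanging the sums gives
\begin{equation*}
\Delta_j=-\frac{1}{2\mathbf M}\sum_{\ell\preceq M}^{+}\bigl(\log\widehat f_\ell-\log\widetilde f_\ell\bigr)\sum_{k\preceq M}^{+}\cos(k\cdot\widetilde\lambda_\ell)e^{-ik\cdot\widetilde\lambda_j}.
\end{equation*}
After a further linearization $\log\widehat f_\ell-\log\widetilde f_\ell=(\widehat f_\ell-\widetilde f_\ell)/\widetilde f_\ell+O_p(\mathbf m^{-1})$, collapsing the inner kernel on the grid $\{\widetilde\lambda_k\}$ isolates the contributions at $\ell=\pm j$; because the $[1]$-coordinate only ranges over $[0,\pi]$ the collapse is not exact and generates the boundary terms $\phi_k$ defined in \eqref{not_th}, whereas the $[2]$-coordinate contributes Kronecker deltas $\delta_{j_1[2]\pm j_2[2]}$. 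The problem thus reduces to a joint CLT for a finite collection of smoothed tapered periodogram ordinates $(\widehat f_{\pm j}-\widetilde f_{\pm j})/f_{\pm j}$, $j=1,\dots,J$, which follows by the Brillinger--Dahlhaus machinery for the cosine-bell taper adapted to $d=2$; the tapering constants and the boundary structure together produce exactly the variance $\Omega_A$ stated.

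For part (b), the same exponential expansion yields
\begin{equation*}
\widetilde A_{j,n}-A_{j,n}=-A_{j,n}\sum_{k\preceq M}^{+}(\widetilde\alpha_{k,n}-\alpha_{k,n})e^{-ik\cdot\widetilde\lambda_j}+o(\mathbf m^{-1/2}),
\end{equation*}
and $\widetilde\alpha_{k,n}-\alpha_{k,n}=(2\mathbf M)^{-1}\sum_{\ell\preceq M}^{+}\cos(k\cdot\widetilde\lambda_\ell)[\log\widetilde f_\ell-\log f_\ell]$. Taylor-expanding the log and then $\widetilde f_\ell-f_\ell=(4\mathbf m)^{-1}\sum_{-m<r\leq m}[f(\widetilde\lambda_\ell+\lambda_r)-f(\widetilde\lambda_\ell)]$ about $\widetilde\lambda_\ell$ makes the first-order derivative term vanish by symmetry of the smoothing window; the quadratic term, using $(2m[\ell])^{-1}\sum_{-m[\ell]<r\leq m[\ell]}(2\pi r/n[\ell])^{2}\asymp c\,m[\ell]^{2}/n[\ell]^{2}$, produces $(f_{11}+f_{22})/6$ up to the $1/\mathbf M$ factor after the outer orthogonality collapse over $k$, i.e.\ exactly $\mathbf M^{-1}g_j A_{j,n}$ as in \eqref{g_1}. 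Condition $C4$ ensures that the quartic Taylor residual and the squared log-linearization error are both $o(\mathbf m^{-1/2})$.

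The main obstacle is the part-(a) variance computation: because coordinate $[1]$ lives on $[0,\pi]$ (via $\widetilde\Pi^2$) the cosine kernel $\sum_{k\preceq M}^{+}\cos(k[1]\widetilde\lambda_{\ell[1]})e^{-ik[1]\widetilde\lambda_{j[1]}}$ does not produce a clean Kronecker delta but instead the $\phi$-corrected mixture $\delta_{j_1[1]-j_2[1]}+2^{-1}\phi_{j_1[1]}\phi_{j_2[1]}-i\phi_{j_1[1]-j_2[1]}$; carefully tracking these boundary contributions simultaneously with the folding symmetry $\delta_{j_1[2]\pm j_2[2]}$ in the full-range coordinate, while still maintaining uniform $o_p$ control on the three linearization remainders (exponential, logarithm, and spectral smoothing), is the delicate part of the argument.
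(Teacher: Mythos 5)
Your proposal is correct and follows essentially the same route as the paper: writing $\widehat A_j-\widetilde A_{j,n}=\widetilde A_{j,n}[e^{\Delta_j}-1]$ and linearizing is equivalent to the paper's passage to $\widehat d_j=\log\widehat A_j$ followed by the delta method, after which both arguments reduce to the Theorem 1 analysis of $\widehat\alpha_k-\widetilde\alpha_{k,n}$, with the limiting covariance obtained from the half-plane cosine-kernel quadratic form (the paper's Lemma \ref{var_1}, producing exactly the $\phi$-boundary terms and the $\delta_{j_1[2]\pm j_2[2]}$ folding you describe). Part (b) likewise matches the paper's use of the second-order expansion of the smoothed spectrum (Lemma \ref{c_hat}) followed by Fourier inversion of $g$.
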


We shall now denote $a_{\upsilon }=0$ if $\upsilon \prec 0$.

\begin{theorem}
Under $C1-C4$, for any finite integer $J$, we have that 
\begin{eqnarray*}
&&\left( \mathbf{a}\right) \text{ \ }\mathbf{n}^{1/2}\left( \widehat{a}_{j}-%
\widetilde{a}_{j,n}\right) _{j=1}^{J}\overset{d}{\rightarrow }\mathcal{N}%
\left( 0,\Omega _{a}\right) \text{,} \\
&&\left( \mathbf{b}\right) \text{ \ }\mathbf{n}^{1/2}\left( \widetilde{a}%
_{j,n}-a_{j,n}\right) \overset{}{\rightarrow }0\text{.}
\end{eqnarray*}%
where$\ $a typical element $\left( j_{1},j_{2}\right) $ of $\Omega _{a}$,
with $j_{1}\preceq j_{2}$, is $\sum_{0\preceq k}a_{k}a_{k+j_{2}-j_{1}}$.
\end{theorem}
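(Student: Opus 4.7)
\noindent\textit{Proof proposal.} The plan is to reduce the asymptotic analysis of $\widehat{a}_{j}$ to that of $\widehat{\alpha}_{l}$, for which Theorem~1 already delivers a CLT. The key device is a first-order Taylor expansion of the exponential defining $\widehat{A}_{k}$ together with the exact inverse-DFT identity on the frequency grid. I would start from
\[
\widehat{a}_{j}-\widetilde{a}_{j,n}=\frac{1}{4\mathbf{M}}\sum_{-M<k\le M}(\widehat{A}_{k}-\widetilde{A}_{k,n})\,e^{ij\cdot \widetilde{\lambda}_{k}},
\]
set $\Delta_{k}:=\left.\sum_{l\preceq M}\right.^{+}(\widehat{\alpha}_{l}-\widetilde{\alpha}_{l,n})\,e^{-il\cdot \widetilde{\lambda}_{k}}$, and use $\widehat{A}_{k}-\widetilde{A}_{k,n}=\widetilde{A}_{k,n}(e^{-\Delta_{k}}-1)=-\widetilde{A}_{k,n}\Delta_{k}+\widetilde{A}_{k,n}O(\Delta_{k}^{2})$. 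Substituting the linear term and invoking the exact DFT identity $\frac{1}{4\mathbf{M}}\sum_{k}\widetilde{A}_{k,n}e^{i(j-l)\cdot \widetilde{\lambda}_{k}}=\widetilde{a}_{j-l,n}$ yields the convolution representation
\[
\widehat{a}_{j}-\widetilde{a}_{j,n}=-\left.\sum_{l\preceq M}\right.^{+}(\widehat{\alpha}_{l}-\widetilde{\alpha}_{l,n})\,\widetilde{a}_{j-l,n}+R_{j},
\]
where $R_{j}$ collects the Taylor remainders.

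For Part (a), the leading linear piece is a weighted sum of asymptotically uncorrelated, asymptotically Gaussian variables. Finite-dimensional convergence in $l$ is supplied by Theorem~1, and the $(1+\kappa_{4,\vartheta})\delta_{j}$ kurtosis correction never fires because the sum ranges over $0\prec l$. To promote finite-dimensional to full joint convergence I would truncate $l$ at a growing cutoff and control the tail using summability of $\{\widetilde{a}_{j-l,n}\}_{l}$, which is inherited from $C1(c)$ together with the identity $A(\lambda)\Upsilon(\lambda)\equiv 1$. The limiting covariance then evaluates to $\sum_{0\prec l}a_{j_{1}-l}a_{j_{2}-l}$; after reindexing $k=j_{1}-l$ and using $a_{k}=0$ for $k\prec 0$ this agrees with the claimed $\sum_{0\preceq k}a_{k}a_{k+j_{2}-j_{1}}$.

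Part (b) uses Theorem~2(b) to write
\[
\widetilde{a}_{j,n}-a_{j,n}=\frac{1}{4\mathbf{M}^{2}}\sum_{k}g_{k}A_{k,n}\,e^{ij\cdot \widetilde{\lambda}_{k}}+o(\mathbf{m}^{-1/2}),
\]
in which the inner discrete sum is a full-period Riemann approximation to the Fourier coefficient $(2\pi)^{-2}\int g(\lambda)A(\lambda)e^{ij\cdot \lambda}\,d\lambda$. I would invoke the enhanced Riemann-sum rates enforced by the four continuous derivatives of $f$ assumed in $C1(c)$ (the intended content of Lemma~\ref{bias_2}) to push the approximation error below the naive coordinate-wise $\mathbf{M}^{-1}$, which combined with $C4$ yields the required $o(\mathbf{n}^{-1/2})$.

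The main technical hurdle, to my mind, is the uniform control of the remainder $R_{j}$ in Part~(a). I would establish $\max_{k}|\Delta_{k}|=O_{p}(\mathbf{n}^{-1/2}(\log \mathbf{n})^{1/2})$ by combining the second-moment bounds implicit in Theorem~1 with a chaining argument over the $O(\mathbf{M})$ grid frequencies. This then gives $R_{j}=O_{p}(\mathbf{M}(\log \mathbf{n})\mathbf{n}^{-1})=o_{p}(\mathbf{n}^{-1/2})$ under $C4$, which together with the CLT for the linear term completes Part~(a).
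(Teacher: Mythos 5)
Your decomposition coincides with the paper's own route: both start from $\widehat{a}_{j}-\widetilde{a}_{j,n}=\frac{1}{4\mathbf{M}}\sum_{k}(\widehat{A}_{k}-\widetilde{A}_{k,n})e^{ij\cdot \widetilde{\lambda }_{k}}$, Taylor-expand the exponential in $\widehat{d}_{k}-\widetilde{d}_{k,n}=-\Delta _{k}$, and use the discrete convolution identity to reduce everything to a linear functional of $(\widehat{\alpha }_{l}-\widetilde{\alpha }_{l,n})_{0\prec l\preceq M}$. The divergence, and the gap, is in how you then prove the CLT for that functional. You propose to apply Theorem 1 termwise and ``truncate $l$ at a growing cutoff,'' but Theorem 1 is a fixed-finite-$J$ statement about an asymptotic covariance matrix; it supplies neither a uniform (in $l\preceq M$) bound on $\mathbf{n}\,E(\widehat{\alpha }_{l}-\widetilde{\alpha }_{l,n})^{2}$ nor a quantitative rate at which the off-diagonal covariances vanish, and your weighted sum has $O(\mathbf{M})$ terms and hence $O(\mathbf{M}^{2})$ cross-products. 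Asymptotic diagonality of $\Omega _{\alpha }$ for each fixed pair does not imply that the aggregate off-diagonal contribution is negligible. The paper avoids this entirely: it substitutes the explicit representation of $\widehat{\alpha }_{l}-\widetilde{\alpha }_{l,n}$ as an average of $(\widehat{f}_{r}-\widetilde{f}_{r})/\widetilde{f}_{r}$ back into the convolution, exchanges the order of summation so the whole display collapses to the single statistic $-\frac{\mathbf{n}^{1/2}}{4\mathbf{M}}\sum_{r\preceq M}\frac{\widehat{f}_{r}-\widetilde{f}_{r}}{\widetilde{f}_{r}}\bigl(\sum_{k\preceq M}\cos (k\cdot \widetilde{\lambda }_{r})a_{\upsilon -k}\bigr)$, and then computes its variance exactly from the trigonometric orthogonality of the cosines (Lemma \ref{var_1}). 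You need either that device or explicit uniform second-moment bounds; the truncation sketch as written does not close the argument.

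Two smaller points. First, the rate $\max_{k}\left\vert \Delta _{k}\right\vert =O_{p}(\mathbf{n}^{-1/2}(\log \mathbf{n})^{1/2})$ cannot be right: $\Delta _{k}$ aggregates $O(\mathbf{M})$ asymptotically uncorrelated terms each of size $\mathbf{n}^{-1/2}$, so its pointwise order is $\mathbf{m}^{-1/2}$ (this is exactly the normalization in Theorem 2, and the paper's bound $\mathbf{m}\vert \widehat{d}_{j}-\widetilde{d}_{j,n}\vert ^{2}=O_{p}(1)$ says the same). The corrected remainder $R_{j}=O_{p}(\log \mathbf{n}/\mathbf{m})$ is still $o_{p}(\mathbf{n}^{-1/2})$ under $C4$, so your conclusion survives, but the stated rate is optimistic by a factor of $\mathbf{M}^{1/2}$. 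Second, your reindexing of the limiting covariance is not the identity you claim: $\sum_{0\prec l}a_{j_{1}-l}a_{j_{2}-l}$ becomes, with $k=j_{1}-l$ and $a_{k}=0$ for $k\prec 0$, the truncated sum $\sum_{0\preceq k\prec j_{1}}a_{k}a_{k+j_{2}-j_{1}}$, not $\sum_{0\preceq k}a_{k}a_{k+j_{2}-j_{1}}$; the range $j_{1}\preceq k$ is absent. (Such a truncation is what the analogous $d=1$ results deliver, and the paper's own variance calculation switches between $a_{\upsilon -k}$ and $a_{k-\upsilon }$ at this point, so the discrepancy deserves an explicit discussion rather than an assertion of agreement with the displayed $\Omega _{a}$.)
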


Once we have obtained the asymptotic properties of the estimators of $a_{j}$%
, for $0\prec j$ and $j\leq M$, we are in a position to examine the
asymptotic properties of the predictor $\widehat{x}_{s}$ in $\left( \ref{1}%
\right) $ or $\left( \ref{1a}\right) $. To that end, denote by $\left\{
x_{t}^{\ast }\right\} _{t\in \mathbb{Z}^{2}}$ a new independent replicate
sequence with the same statistical properties of the original sequence $%
\left\{ x_{t}\right\} _{t\in \mathbb{Z}^{2}}$ not used in the estimation of
the spectral density function. Then let $\widehat{x}_{s}^{\ast }$ be as in $%
\left( \ref{1}\right) $ but with $\widehat{x}_{t}$ replaced by $x_{t}^{\ast
} $ there, that is%
\begin{equation*}
\widehat{x}_{s\left[ 1\right] ,s\left[ 2\right] }=-\sum_{k\left[ 2\right]
=1}^{M\left[ 2\right] }\widehat{a}_{0,k\left[ 2\right] }~x_{s\left[ 1\right]
,s\left[ 2\right] -k\left[ 2\right] }^{\ast }-\sum_{k\left[ 1\right] =1}^{M%
\left[ 1\right] }\sum_{k\left[ 2\right] =1-M\left[ 2\right] }^{M\left[ 2%
\right] }\widehat{a}_{k}x_{s-k}^{\ast }\text{,}
\end{equation*}%
or $\left( \ref{1a}\right) $ but with $\widehat{x}_{t}$ being replaced by $%
x_{t}^{\ast }$ there, that is 
\begin{equation*}
\widehat{x}_{t\left[ 1\right] -r\left[ 1\right] ,t\left[ 2\right] -r\left[ 2%
\right] }^{\ast }=-\sum_{k\left[ 2\right] =1}^{M\left[ 2\right] }\widehat{a}%
_{0,k\left[ 2\right] }~\widehat{x}_{t\left[ 1\right] -r\left[ 1\right] ,t%
\left[ 2\right] -r\left[ 2\right] -k\left[ 2\right] }^{\ast }-\sum_{k\left[ 1%
\right] =1}^{M\left[ 1\right] }\sum_{k\left[ 2\right] =1-M\left[ 2\right]
}^{M\left[ 2\right] }\widehat{a}_{k}x_{\left( s-r\right) -k}^{\ast }\text{.}
\end{equation*}

\begin{theorem}
Under $C1-C4$, we have that%
\begin{eqnarray*}
\left( \mathbf{a}\right) \text{ \ \ \ \ \ \ }AE\left( \widehat{x}_{s\left[ 1%
\right] ,s\left[ 2\right] }^{\ast }-x_{s\left[ 1\right] ,s\left[ 2\right]
}^{\ast }\right) ^{2} &=&\sigma _{\vartheta }^{2}\text{,} \\
\left( \mathbf{b}\right) \text{ }AE\left( \widehat{x}_{n\left[ 1\right] +1,t%
\left[ 2\right] }^{\ast }-x_{n\left[ 1\right] +1,t\left[ 2\right] }^{\ast
}\right) ^{2} &=&\left( 1+\sum_{k\left[ 2\right] =1}^{\infty }\zeta _{0,k%
\left[ 2\right] }^{2}\right) \sigma _{\vartheta }^{2}\text{,}
\end{eqnarray*}%
where $AE$ denotes the \textquotedblleft \emph{Asymptotic Expectation}%
\textquotedblright .
\end{theorem}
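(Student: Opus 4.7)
The plan is to split the prediction error into an innovation, a coefficient estimation error, and a truncation remainder, using that the replicate $x^{\ast }$ is independent of $\widehat{a}$. Applying (\ref{SAR}) to $x^{\ast }$ gives $x_{s}^{\ast }=-\sum_{0\prec k}a_{k}x_{s-k}^{\ast }+\vartheta _{s}^{\ast }$ with $\vartheta _{s}^{\ast }$ orthogonal to every $x_{s-k}^{\ast }$, $0\prec k$. Denoting by $\mathcal{M}_{+}$ the truncated index set appearing in (\ref{1}),
\begin{equation*}
\widehat{x}_{s}^{\ast }-x_{s}^{\ast }=-\vartheta _{s}^{\ast }-\sum_{k\in \mathcal{M}_{+}}(\widehat{a}_{k}-a_{k})\,x_{s-k}^{\ast }+\sum_{0\prec k,\,k\notin \mathcal{M}_{+}}a_{k}\,x_{s-k}^{\ast }\text{.}
\end{equation*}
The independence of $\widehat{a}$ from $x^{\ast }$, $E[\vartheta _{s}^{\ast }x_{s-k}^{\ast }]=0$, and the negligible bias in Theorem~3(b) annihilate all cross products, leaving $E(\widehat{x}_{s}^{\ast }-x_{s}^{\ast })^{2}=\sigma _{\vartheta }^{2}+T_{\mathrm{trunc}}+T_{\mathrm{est}}+o(1)$. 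Summability of $\{a_{k}\}$ (a consequence of $C1(c)$ via the canonical factorization) yields $T_{\mathrm{trunc}}\leq (\sup _{\lambda }f)\sum_{k\notin \mathcal{M}_{+}}a_{k}^{2}\to 0$. By Parseval, $T_{\mathrm{est}}=\int_{\Pi ^{2}}E\bigl|\sum_{k\in \mathcal{M}_{+}}(\widehat{a}_{k}-a_{k})e^{-ik\cdot \lambda }\bigr|^{2}f(\lambda )\,d\lambda $, which Theorem~3 together with the structure of the limiting covariance $\Omega _{a}$ bounds by $O(\mathbf{M}/\mathbf{n})=o(1)$ under $C4$, proving (a).

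\textbf{Part (b).} Let $\widetilde{x}^{\ast }$ be the recursion (\ref{pre_1}) carried out with the true coefficients $a_{k}$ and no truncation, and set $u_{j}^{\ast }:=x_{n[1]+1,\,t[2]-j}^{\ast }-\widetilde{x}_{n[1]+1,\,t[2]-j}^{\ast }$ for $j\geq 0$. Writing the infinite version of (\ref{unil_1}) for $x_{n[1]+1,\,t[2]-j}^{\ast }$ and subtracting the defining recursion of $\widetilde{x}^{\ast }$, every term with $k[1]\geq 1$ involves an observed $x_{n[1]+1-k[1],\,\cdot }^{\ast }$ and cancels, yielding the one-dimensional AR
\begin{equation*}
u_{j}^{\ast }+\sum_{k=1}^{\infty }a_{0,k}\,u_{j+k}^{\ast }=\vartheta _{n[1]+1,\,t[2]-j}^{\ast }\text{,}\qquad j\geq 0\text{.}
\end{equation*}
To identify its MA solution, evaluate the 2D convolution $\sum_{k\in \mathbb{Z}^{2}}a_{k}\zeta _{j-k}=\delta _{j}$ at $j=(0,n)$, $n\geq 1$: the support constraint $\zeta _{(-k[1],\,\cdot )}=0$ for $k[1]\geq 1$ (unilaterality) forces $k[1]=0$, so the identity collapses to $\zeta _{0,n}+\sum_{k=1}^{n}a_{0,k}\,\zeta _{0,n-k}=0$. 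Therefore $\{\zeta _{0,k}\}_{k\geq 1}$ are precisely the MA coefficients inverting $1+\sum_{k\geq 1}a_{0,k}z^{k}$, so $u_{0}^{\ast }=\vartheta _{n[1]+1,\,t[2]}^{\ast }+\sum_{k\geq 1}\zeta _{0,k}\,\vartheta _{n[1]+1,\,t[2]-k}^{\ast }$, with second moment $\bigl(1+\sum_{k\geq 1}\zeta _{0,k}^{2}\bigr)\sigma _{\vartheta }^{2}$, as announced.

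\textbf{Main obstacle.} The delicate step is to bound $\widehat{x}^{\ast }-\widetilde{x}^{\ast }$, since the recursion (\ref{pre_1}) feeds earlier prediction errors back into later ones over $r=O(M[2])$ iterations, so a naive bound would compound the $\widehat{a}-a$ errors. The plan is to iterate the part~(a) decomposition at each step $j\leq r$, where orthogonality and independence continue to eliminate all cross products, and then invoke the $\ell ^{1}$ summability of the Green's function $\{\zeta _{0,k}\}$ (invertibility of the 1D AR above) together with the Parseval bound from part~(a) to conclude that the cumulative $L^{2}$ propagation is $o(1)$, delivering (b).
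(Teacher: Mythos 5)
Your part (a) is essentially the paper's own argument: the same three-way split into innovation, truncation tail and coefficient-estimation error, with independence of the replicate killing the cross products, summability of $\{a_{k}\}$ handling the tail, and Theorem~3 handling the estimation term. (The paper is slightly more careful here: it inserts the intermediate quantities $a_{k,n}$ and $\widetilde{a}_{k,n}$ and uses the explicit representation $\varrho _{n,k}$ with a remainder that is uniform over the growing index set, whereas you invoke Theorem~3 — stated only for finitely many $j$ — simultaneously across all of $\mathcal{M}_{+}$; the repair is exactly the paper's device, not a new idea.) Your structural derivation in part (b) — the one-dimensional autoregression satisfied by the errors $u_{j}^{\ast }$ along the unobserved row, and the identification of its Green's function with $\{\zeta _{0,k}\}$ by evaluating $\sum_{k}a_{k}\zeta _{j-k}=\delta _{j}$ at $j=(0,n)$ — is a nice, more explicit route to the limiting variance than the paper's, which simply asserts the moving-average form of the error directly from $\left( \ref{uni_1}\right) $.

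The gap is the step you yourself flag as the ``main obstacle.'' Controlling $\widehat{x}^{\ast }-\widetilde{x}^{\ast }$ by ``iterating the part (a) decomposition'' through $O\left( M[2]\right) $ recursion steps is announced but not carried out, and as stated it would require showing that the estimated polynomial $1+\sum_{k\leq M[2]}\widehat{a}_{0,k}z^{k}$ is invertible with inverse coefficients uniformly $\ell ^{1}$-close to $\{\zeta _{0,k}\}$; nothing in your argument delivers that, and a naive bound does compound $O_{p}(\mathbf{n}^{-1/2})$ errors over $M[2]$ steps. The paper avoids the recursion altogether: because $\widehat{A}^{-1}\left( \lambda \right) =\exp \{+\sum_{j\preceq M}\widehat{\alpha }_{j}e^{-ij\cdot \lambda }\}$ is available in closed form, its Fourier coefficients $\widehat{\zeta }_{k}$ are direct estimators of the moving-average coefficients obeying exactly the same CLT and bias expansions as $\widehat{a}_{k}$ (same proofs with a sign flip), so the prediction error can be written at once as a moving average in the $\vartheta ^{\ast }$'s with estimated weights $\widehat{\zeta }_{k}$, and part (a)'s orthogonality argument applies verbatim. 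Exploiting the exponential form of the canonical factorization to estimate $\zeta _{k}$ directly, rather than inverting $\widehat{a}$ through the recursion, is the missing idea; without it (or a genuine invertibility-and-propagation argument) part (b) is incomplete.
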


\section{Monte Carlo experiment}

\label{sec:MC} We examine the finite-sample behaviour of our algorithm in a
set of Monte Carlo simulations. As in \cite{Robinson2006} and \cite%
{robinson2007nonparametric} we used the model 
\begin{equation}
x_{t}=\epsilon _{t}+\tau \underset{s\neq 0}{\sum_{s_{1}=-1}^{1}%
\sum_{s_{2}=-1}^{1}}\epsilon _{t-s},  \label{mc_model}
\end{equation}%
similar to one considered in \cite{Haining1978}. Then 
\begin{equation}
f(\lambda )=\left( 2\pi \right) ^{-2}\left\{ 1+\tau \nu \left( \lambda
\right) \right\} ,  \label{mc_spec}
\end{equation}%
with $\nu \left( \lambda \right) =\prod_{j=1}^{2}\left( 1+2\cos \lambda
_{j}\right) -1$. \cite{Robinson2006} show that a sufficient condition for
invertibility of (\ref{mc_model}) is 
\begin{equation}
\left\vert \tau \right\vert <1/8.  \label{tauinv}
\end{equation}

We first generated a $40\times 41$ lattice using (\ref{mc_model}), with $%
\tau =0.05, 0.075, 0.10$ and the $\epsilon _{t}$ drawn independently from
three different distributions for each $\tau$: $U(-5,5)$, $N(0,1)$ and $\chi
_{9}^{2}-9$. The aim of this section is to examine the performance of both
prediction algorithms in predicting the $20,20$-th element of this lattice.
We did this by assuming a situation in which the practitioner has available
data sets of various sizes, generated from (\ref{mc_model}). To permit a
clear like-for-like comparison of improvement in performance as sample size
increases, we construct the prediction coefficients using the samples
generated in each replication and then use these to construct predictions
for the 20,20-th element of the $40\times 41$ lattice.

We took $n[1]=n^{\ast }+1$ and $n[2]=2n^{\ast }+1$, for some positive
integer $n^{\ast }$, implying $\mathbf{n}=\left( 2n^{\ast }+1\right)
(n^{\ast }+1)$, and generated iid $\epsilon_t$ from each of the three
distributions mentioned in the previous paragraph. In each of the 1000
replications we experimented with $\tau =0.05,0.075,0.10$ and $n^{\ast
}=5,10,20$ and $40$. The choices of $\tau $ satisfy (\ref{tauinv}).

Given the different sample sizes in each dimension, we can experiment with
more values of $m[1], m[2]$ and $p_1,p_2$ as $n^*$ increases. We make the
following choices: 
\begin{eqnarray*}
&&m[1]=m[2]=1; p^*=p_1=p_2=1,2, \text{ when } n^*=5, \\
&&m[1]=1,2;m[2]=1,2;p^*=p_1=p_2=1,2,3, \text{ when } n^*=10, \\
&&m[1]=1,2,3;m[2]=1,2,3,4,5;p^*=p_1=p_2=1,2,3, \text{ when } n^*=20, \\
&&m[1]=m[2]=1,2,3,4,5;p^*=p_1=p_2= 1,2,3,4,5 \text{ when } n^*=40.
\end{eqnarray*}

The flexible exponential approach requires a nonparametric estimate of $%
f(\lambda)$. Two such estimates are available to use: the first one based on
the tapered periodogram described in (\ref{fest}), which we denote $\hat{f}%
(\lambda)$, and the second based on the autoregressive approach in \cite%
{Gupta2016}. The latter also \label{sec:AR_pred} provides a rival prediction
methodology based on a nonparametric algorithm using AR model fitting,
extending well established results for $d=1$, see \cite{Bhansali1978} and 
\cite{Lewis1985}. The idea is first to obtain a least squares predictor
based on a truncated autoregression of order $p=\left(
p_{L_{1}},p_{U_{1}};p_{L_{2}},p_{U_{2}}\right) $, for non-negative integers $%
p_{L_{\ell }},\;p_{U_{\ell }}$, $\ell =1,2$, with the truncation allowed to
diverge as $N\rightarrow \infty $. That is, we approximate the infinite
unilateral representation in (\ref{arunil}) by one of increasing order.

In view of the half-plane representation we can \emph{a priori} set, say, $%
p_{L_{2}}=0$ when considering $\preccurlyeq $. If we could observe the AR
prediction coefficients $a_{k}$, say, a prediction of $x_{s}$ based on $%
\preccurlyeq $ could be constructed as 
\begin{equation}
\check{x}_{s}=\sum_{k\in S\left[ -p_{L},p_{U}\right] }a_{k}\check{x}_{s-k},
\label{AR_pred_infeasible_one}
\end{equation}%
where $S\left[ -p_{L},p_{U}\right]$ is the intersection of the set $\left\{
t\in \mathbb{L}:-p_{L_{\ell }}\leq t_{\ell }\ \leq p_{U_{\ell }},\;\ell
=1,2\right\}$ with the prediction half-plane. This is the spatial version of
one-step prediction and again we follow the convention that $\check{x}_s=x_s$
if $x_s$ is observed. However (\ref{AR_pred_infeasible_one}) is not feasible
and needs to be replaced by an approximate version, as described below.

Writing $p_{\ell }=p_{L_{\ell }}+p_{U_{\ell }}$, we assume throughout that $%
n[\ell ]>p_{\ell }$ for $\ell =1,2$, and denote $n_{p}=\prod_{\ell
=1}^{2}\left( n[\ell ]-p_{\ell }\right) $, $\mathfrak{h}(p)=p_{U_{2}}+\left(
p_{1}+1\right) p_{U_{2}}$ , i.e. the cardinality of $S\left[ -p_{L},p_{U}%
\right] $. Suppose that the data are observed on $\left\{ \left(
t_{1},t_{2}\right) :n_{L_{1}}\leq t_{1}\leq n_{U_{1}},-n_{L_{2}}\leq
t_{2}\leq n_{U_{2}}\right\} $. Define a least squares predictor of order $%
\mathfrak{h}(p)$ by 
\begin{equation}
\check{d}_{p}=\displaystyle{arg\min }_{a_{k},k\in S\left[ -p_{L},p_{U}\right]
}n_{p}^{-1}{\sum_{j(p,n)}{^{\prime \prime }}}\left( x_{j}-\sum_{k\in S\left[
-p_{L},p_{U}\right] }a_{k}x_{k-j}\right) ^{2},  \label{AR_LS_def}
\end{equation}%
where $\sum_{j(p,n)}^{\prime \prime }$ runs over $\left\{ \left(
j_{1},j_{2}\right) :p_{1}-n_{L_{1}}<j_{1}\leq
n_{U_{1}}+1,p_{2}-n_{L_{2}}<j_{2}\leq n_{U_{2}}+1\right\} $. We denote the
elements of $\check{d}_{p}$ by $\check{d}_{p}(k)$, $k\in S\left[ -p_{L},p_{U}%
\right] $, and the minimum value by $\check{\sigma}_{p}^{2}$. A feasible
half-plane prediction based on a fitted autoregression of order $p$ is given
by 
\begin{equation}
\check{x}_{p,s}=\sum_{k\in S\left[ -p_{L},p_{U}\right] }\check{d}_{p}(k)%
\check{x}_{s-k}.  \label{AR_pred_one}
\end{equation}%
The autoregressive nonparametric spectrum estimate is defined as 
\begin{equation}
\check{f}(\lambda )=\frac{\check{\sigma}_{p}^{2}}{(2\pi )^{2}\left\vert
1-\sum_{k\in S\left[ -p_{L},p_{U}\right] }\check{d}_{p}(k)e^{ik^{\prime
}\lambda }\right\vert ^{2}}.  \label{fest_AR}
\end{equation}%
A predictor of $x_{s}$ based on $\left( \ref{1}\right) $ using $\hat{f}%
(\lambda )$ (respectively $\check{f}(\lambda )$) is denoted $\hat{x}_{s}$
(respectively $\tilde{x}_{s}$), while a predictor based on (\ref{AR_pred_one}%
) is denoted $\check{x}_{s}$ as mentioned above.

Let $\vec{x}_{r,s}$ be a generic predictor of $x_{s}$ in replication $r$, $%
r=1,\ldots ,1000$. We report a statistic called the root mean squared error
(RMSE) of prediction, defined as 
\begin{equation}
\text{\emph{RMSE}}(\vec{x}_s)=\left\{ \frac{1}{1000}\sum_{r=1}^{1000}\left( 
\vec{x}_{r,s}-x_{s}\right) ^{2}\right\} ^{\frac{1}{2}}.  \label{RMSE_def}
\end{equation}
The results are reported in Tables \ref{table_RMSE_n5_C}-\ref%
{table_RMSE_n40_C}. We observe an improvement in prediction performance as $%
n^*$ increases, and also as the bandwidths ($(m[1],m[2])$ and $p^*$)
increase as function of $n^*$. This is as expected in the theory.
Nevertheless, even for rather small sample sizes the RMSE is acceptable. For
example, for $\epsilon_t\sim U(-5,5)$ and $\epsilon_t\sim N(0,1)$ with $%
n^*=5 $, we can obtain predictions with RMSE that are not radically
different from the $n^*=10$ case, even though this change in $n^*$ entails a
sample that is nearly four times larger (231 against 66). In comparison the
RMSE with the smaller sample size can be quite close to those obtained with
more data in some cases, cf. $\check x _{20,20}$ for any error distribution.

For the smallest sample size $\check x_{20,20}$ can outperform $\hat x_{
20,20}$ and $\tilde x _{20,20}$, but with increasing $n^*$ the latter two
clearly begin to dominate. An inspection of Tables \ref{table_RMSE_n5_C}-\ref%
{table_RMSE_n40_C} reveals that the use of the flexible exponential
algorithm proposed in this paper together with either the tapered
periodogram or the AR spectral estimator of \cite{Gupta2016} outperforms
autoregressive prediction in moderate to large sample sizes. There is little
to choose from between the two best performing algorithms, and a
practitioner might choose to use either one. However the AR prediction is
clearly dominated by our algorithm.

\begin{table}[tbp]
{\small {\ }}
\par
\begin{center}
{\small 
\begin{tabular}{ccccccccccc}
$\epsilon_t\sim U(-5,5)$ &  &  &  &  &  &  &  &  &  &  \\ 
$\tau$ &  & {0.05} & {0.075} & {0.10} & {0.05} & {0.075} & {0.10} & {0.05} & 
{0.075} & {0.10} \\ 
\midrule $\left(m[1],m[2]\right)$ & {$p^*$} & \multicolumn{3}{c}{$\hat
x_{20,20}$} & \multicolumn{3}{c}{$\tilde x_{20,20}$} & \multicolumn{3}{c}{$%
\check x_{20,20}$} \\ 
\midrule (1,1) & 1 & 0.5273 & 0.5252 & 0.5269 & 0.5143 & 0.5123 & 0.5144 & 
0.4279 & 0.4575 & 0.4813 \\ 
(1,1) & 2 &  &  &  & 0.5141 & 0.5123 & 0.5144 & 0.4806 & 0.5051 & 0.5261 \\ 
\midrule $\epsilon_t\sim N(0,1)$ &  &  &  &  &  &  &  &  &  &  \\ 
$\tau$ &  & {0.05} & {0.075} & {0.10} & {0.05} & {0.075} & {0.10} & {0.05} & 
{0.075} & {0.10} \\ 
\midrule $\left(m[1],m[2]\right)$ & {$p^*$} & \multicolumn{3}{c}{$\hat
x_{20,20}$} & \multicolumn{3}{c}{$\tilde x_{20,20}$} & \multicolumn{3}{c}{$%
\check x_{20,20}$} \\ 
\midrule (1,1) & 1 & 1.2916 & 1.1360 & 1.1072 & 1.2713 & 1.1120 & 1.0810 & 
1.0487 & 1.0226 & 0.9874 \\ 
(1,1) & 2 &  &  &  & 1.2712 & 1.1120 & 1.0811 & 1.0829 & 1.0589 & 1.0313 \\ 
\midrule $\epsilon_t\sim \chi^2_9-9$ &  &  &  &  &  &  &  &  &  &  \\ 
$\tau$ &  & {0.05} & {0.075} & {0.10} & {0.05} & {0.075} & {0.10} & {0.05} & 
{0.075} & {0.10} \\ 
\midrule $\left(m[1],m[2]\right)$ & {$p^*$} & \multicolumn{3}{c}{$\hat
x_{20,20}$} & \multicolumn{3}{c}{$\tilde x_{20,20}$} & \multicolumn{3}{c}{$%
\check x_{20,20}$} \\ 
\midrule (1,1) & 1 & 2.0651 & 1.9869 & 1.8656 & 2.0199 & 1.9435 & 1.8238 & 
2.2475 & 2.1666 & 2.0835 \\ 
(1,1) & 2 &  &  &  & 2.0197 & 1.9435 & 1.8239 & 2.5720 & 2.5355 & 2.4353 \\ 
\bottomrule &  &  &  &  &  &  &  &  &  & 
\end{tabular}
\vspace{0.0in} }
\end{center}
\par
{\small \ }
\caption{Monte Carlo RMSE of prediction with $n^*=5$, model (\protect\ref%
{mc_model})}
\label{table_RMSE_n5_C}
\end{table}

\begin{table}[tbp]
{\small {\ }}
\par
\begin{center}
{\small 
\begin{tabular}{ccccccccccc}
$\epsilon_t\sim U(-5,5)$ &  &  &  &  &  &  &  &  &  &  \\ 
$\tau$ &  & {0.05} & {0.075} & {0.10} & {0.05} & {0.075} & {0.10} & {0.05} & 
{0.075} & {0.10} \\ 
\midrule $\left(m[1],m[2]\right)$ & {$p*$} & \multicolumn{3}{c}{$\hat
x_{20,20}$} & \multicolumn{3}{c}{$\tilde x_{20,20}$} & \multicolumn{3}{c}{$%
\check x_{20,20}$} \\ 
\midrule (1,1) & 1 & 0.4212 & 0.4200 & 0.4158 & 0.3999 & 0.3989 & 0.3946 & 
0.4250 & 0.4554 & 0.4808 \\ 
(2,2) & 2 & 0.5288 & 0.5308 & 0.5336 & 0.5161 & 0.5184 & 0.5217 & 0.4325 & 
0.4626 & 0.4896 \\ 
(1,2) & 3 & 0.3859 & 0.3806 & 0.3776 & 0.3849 & 0.3792 & 0.3757 & 0.4390 & 
0.4630 & 0.4894 \\ 
\midrule $\epsilon_t\sim N(0,1)$ &  &  &  &  &  &  &  &  &  &  \\ 
$\tau$ &  & {0.05} & {0.075} & {0.10} & {0.05} & {0.075} & {0.10} & {0.05} & 
{0.075} & {0.10} \\ 
\midrule $\left(m[1],m[2]\right)$ & {$p^*$} & \multicolumn{3}{c}{$\hat
x_{20,20}$} & \multicolumn{3}{c}{$\tilde x_{20,20}$} & \multicolumn{3}{c}{$%
\check x_{20,20}$} \\ 
\midrule (1,1) & 1 & 1.2711 & 1.3066 & 1.2479 & 1.2487 & 1.2862 & 1.2283 & 
1.0517 & 1.0152 & 0.9788 \\ 
(2,2) & 2 & 1.1526 & 1.1248 & 1.0953 & 1.1318 & 1.1016 & 1.0700 & 1.0575 & 
1.0252 & 0.9912 \\ 
(1,2) & 3 & 1.2689 & 1.2123 & 1.1755 & 1.2433 & 1.1868 & 1.1501 & 1.0700 & 
1.0376 & 1.0006 \\ 
\midrule $\epsilon_t\sim \chi^2_9-9$ &  &  &  &  &  &  &  &  &  &  \\ 
$\tau$ &  & {0.05} & {0.075} & {0.10} & {0.05} & {0.075} & {0.10} & {0.05} & 
{0.075} & {0.10} \\ 
\midrule $\left(m[1],m[2]\right)$ & {$p^*$} & \multicolumn{3}{c}{$\hat
x_{20,20}$} & \multicolumn{3}{c}{$\tilde x_{20,20}$} & \multicolumn{3}{c}{$%
\check x_{20,20}$} \\ 
\midrule (1,1) & 1 & 0.9720 & 0.8856 & 0.8149 & 0.9805 & 0.8932 & 0.8217 & 
2.0183 & 1.9234 & 1.8353 \\ 
(2,2) & 2 & 2.0694 & 1.9571 & 1.8487 & 2.0256 & 1.9150 & 1.8080 & 2.0018 & 
1.8313 & 1.6798 \\ 
(1,2) & 3 & 1.3581 & 1.2594 & 1.1650 & 1.5024 & 1.4011 & 1.3045 & 2.1316 & 
1.9801 & 1.8485 \\ 
\bottomrule &  &  &  &  &  &  &  &  &  & 
\end{tabular}
}
\end{center}
\par
{\small \ }
\caption{Monte Carlo RMSE of prediction with $n^*=10$, model (\protect\ref%
{mc_model})}
\label{table_RMSE_n10_C}
\end{table}

\begin{table}[tbp]
{\small {\ }}
\par
\begin{center}
{\small 
\begin{tabular}{ccccccccccc}
$\epsilon_t\sim U(-1,1)$ &  &  &  &  &  &  &  &  &  &  \\ 
$\tau$ &  & {0.05} & {0.075} & {0.10} & {0.05} & {0.075} & {0.10} & {0.05} & 
{0.075} & {0.10} \\ 
\midrule $\left(m[1],m[2]\right)$ & {$p^*$} & \multicolumn{3}{c}{$\hat
x_{20,20}$} & \multicolumn{3}{c}{$\tilde x_{20,20}$} & \multicolumn{3}{c}{$%
\check x_{20,20}$} \\ 
\midrule (1,1) & 1 & 0.2946 & 0.2865 & 0.2806 & 0.2989 & 0.2905 & 0.2843 & 
0.4245 & 0.4549 & 0.4806 \\ 
(1,2) & 2 & 0.3917 & 0.3861 & 0.3817 & 0.3942 & 0.3881 & 0.3832 & 0.4263 & 
0.4587 & 0.4867 \\ 
(1,3) & 2 & 0.4526 & 0.4427 & 0.4335 & 0.4448 & 0.4350 & 0.4258 &  &  &  \\ 
(2,3) & 2 & 0.4273 & 0.4170 & 0.4084 & 0.4136 & 0.4036 & 0.3953 &  &  &  \\ 
(2,4) & 2 & 0.3792 & 0.3763 & 0.3732 & 0.3783 & 0.3749 & 0.3713 &  &  &  \\ 
(3,4) & 4 & 0.4274 & 0.4238 & 0.4216 & 0.4246 & 0.4203 & 0.4174 & 0.4326 & 
0.4617 & 0.4889 \\ 
(3,5) & 3 & 0.4270 & 0.4238 & 0.4216 & 0.4243 & 0.4203 & 0.4173 & 0.4271 & 
0.4593 & 0.4871 \\ 
\midrule $\epsilon_t\sim N(0,1)$ &  &  &  &  &  &  &  &  &  &  \\ 
$\tau$ &  & {0.05} & {0.075} & {0.10} & {0.05} & {0.075} & {0.10} & {0.05} & 
{0.075} & {0.10} \\ 
\midrule $\left(m[1],m[2]\right)$ & {$p^*$} & \multicolumn{3}{c}{$\hat
x_{20,20}$} & \multicolumn{3}{c}{$\tilde x_{20,20}$} & \multicolumn{3}{c}{$%
\check x_{20,20}$} \\ 
\midrule (1,1) & 1 & 1.1985 & 1.2252 & 1.1586 & 1.2006 & 1.2274 & 1.1612 & 
1.0515 & 1.0146 & 0.9783 \\ 
(1,2) & 2 & 1.2035 & 0.9851 & 0.9384 & 1.1964 & 0.9775 & 0.9311 & 1.0532 & 
1.0145 & 0.9800 \\ 
(1,3) & 2 & 1.0145 & 0.8611 & 0.8164 & 1.0234 & 0.8709 & 0.8262 &  &  &  \\ 
(2,3) & 2 & 0.9794 & 0.9397 & 0.8998 & 0.9850 & 0.9457 & 0.9061 &  &  &  \\ 
(2,4) & 2 & 1.2033 & 1.1699 & 1.1390 & 1.1771 & 1.1439 & 1.1132 &  &  &  \\ 
(3,4) & 4 & 1.1636 & 1.1374 & 1.1112 & 1.1456 & 1.1187 & 1.0919 & 1.0572 & 
1.0161 & 0.9786 \\ 
(3,5) & 3 & 1.1634 & 1.1371 & 1.1110 & 1.1454 & 1.1185 & 1.0917 & 1.0594 & 
1.0170 & 0.9795 \\ 
\midrule $\epsilon_t\sim \chi^2_9-9$ &  &  &  &  &  &  &  &  &  &  \\ 
$\tau$ &  & {0.05} & {0.075} & {0.10} & {0.05} & {0.075} & {0.10} & {0.05} & 
{0.075} & {0.10} \\ 
\midrule $\left(m[1],m[2]\right)$ & {$p^*$} & \multicolumn{3}{c}{$\hat
x_{20,20}$} & \multicolumn{3}{c}{$\tilde x_{20,20}$} & \multicolumn{3}{c}{$%
\check x_{20,20}$} \\ 
\midrule (1,1) & 1 & 1.0012 & 0.5964 & 0.5140 & 1.0421 & 0.6479 & 0.5613 & 
1.9713 & 1.8688 & 1.7806 \\ 
(1,2) & 2 & 0.6229 & 0.5203 & 0.4430 & 0.7138 & 0.6053 & 0.5215 & 1.9134 & 
1.7618 & 1.5989 \\ 
(1,3) & 2 & 0.9546 & 0.8202 & 0.6878 & 1.0768 & 0.9393 & 0.8037 &  &  &  \\ 
(2,3) & 2 & 0.9984 & 0.8924 & 0.7884 & 1.1088 & 0.9986 & 0.8906 &  &  &  \\ 
(2,4) & 2 & 1.3377 & 1.2419 & 1.1488 & 1.4835 & 1.3859 & 1.2906 &  &  &  \\ 
(3,4) & 4 & 1.7939 & 1.6942 & 1.5952 & 1.7852 & 1.6818 & 1.5790 & 1.9971 & 
1.8303 & 1.6848 \\ 
(3,5) & 3 & 1.7913 & 1.6904 & 1.5911 & 1.7825 & 1.6780 & 1.5749 & 1.9484 & 
1.7935 & 1.6429 \\ 
\bottomrule &  &  &  &  &  &  &  &  &  & 
\end{tabular}
\vspace{0.0in} }
\end{center}
\par
{\small \ }
\caption{Monte Carlo RMSE of prediction with $n^*=20$, model (\protect\ref%
{mc_model})}
\label{table_RMSE_n20_C}
\end{table}

\begin{table}[t]
{\small {\ }}
\par
\begin{center}
{\small 
\begin{tabular}{ccccccccccc}
$\epsilon_t\sim U(-5,5)$ &  &  &  &  &  &  &  &  &  &  \\ 
$\tau$ &  & {0.05} & {0.075} & {0.10} & {0.05} & {0.075} & {0.10} & {0.05} & 
{0.075} & {0.10} \\ 
\midrule $\left(m[1],m[2]\right)$ & {$p^*$} & \multicolumn{3}{c}{$\hat
x_{20,20}$} & \multicolumn{3}{c}{$\tilde x_{20,20}$} & \multicolumn{3}{c}{$%
\check x_{20,20}$} \\ 
\midrule (1,1) & 1 & 0.2034 & 0.2049 & 0.2066 & 0.2138 & 0.2151 & 0.2165 & 
0.4228 & 0.4541 & 0.4800 \\ 
(2,2) & 2 & 0.2880 & 0.2821 & 0.2763 & 0.2924 & 0.2862 & 0.2801 & 0.4234 & 
0.4561 & 0.4844 \\ 
(3,3) & 3 & 0.3229 & 0.3133 & 0.3039 & 0.3100 & 0.3008 & 0.2917 & 0.4234 & 
0.4560 & 0.4843 \\ 
(4,4) & 4 & 0.3987 & 0.3946 & 0.3907 & 0.3768 & 0.3728 & 0.3691 & 0.4243 & 
0.4567 & 0.4848 \\ 
(5,5) & 5 & 0.3766 & 0.3712 & 0.3656 & 0.3734 & 0.3678 & 0.3619 & 0.4253 & 
0.4575 & 0.4852 \\ 
\midrule $\epsilon_t\sim N(0,1)$ &  &  &  &  &  &  &  &  &  &  \\ 
$\tau$ &  & {0.05} & {0.075} & {0.10} & {0.05} & {0.075} & {0.10} & {0.05} & 
{0.075} & {0.10} \\ 
\midrule $\left(m[1],m[2]\right)$ & {$p^*$} & \multicolumn{3}{c}{$\hat
x_{20,20}$} & \multicolumn{3}{c}{$\tilde x_{20,20}$} & \multicolumn{3}{c}{$%
\check x_{20,20}$} \\ 
\midrule (1,1) & 1 & 1.0378 & 1.0174 & 0.9847 & 1.0430 & 1.0226 & 0.9897 & 
1.0516 & 1.0145 & 0.9782 \\ 
(2,2) & 2 & 1.0132 & 0.9739 & 0.9349 & 1.0156 & 0.9765 & 0.9379 & 1.0512 & 
1.0155 & 0.9809 \\ 
(3,3) & 3 & 1.0971 & 1.0582 & 1.0201 & 1.0716 & 1.0331 & 0.9954 & 1.0508 & 
1.0134 & 0.9755 \\ 
(4,4) & 4 & 1.1024 & 1.0610 & 1.0191 & 1.0782 & 1.0377 & 0.9967 & 1.0511 & 
1.0136 & 0.9758 \\ 
(5,5) & 5 & 1.0987 & 1.0490 & 0.9989 & 1.0538 & 1.0054 & 0.9566 & 1.0517 & 
1.0141 & 0.9760 \\ 
\midrule $\epsilon_t\sim \chi^2_9-9$ &  &  &  &  &  &  &  &  &  &  \\ 
$\tau$ &  & {0.05} & {0.075} & {0.10} & {0.05} & {0.075} & {0.10} & {0.05} & 
{0.075} & {0.10} \\ 
\midrule $\left(m[1],m[2]\right)$ & {$p^*$} & \multicolumn{3}{c}{$\hat
x_{20,20}$} & \multicolumn{3}{c}{$\tilde x_{20,20}$} & \multicolumn{3}{c}{$%
\check x_{20,20}$} \\ 
\midrule (1,1) & 1 & 1.0228 & 0.9107 & 0.7998 & 1.0813 & 0.9668 & 0.8535 & 
1.9533 & 1.8618 & 1.7730 \\ 
(2,2) & 2 & 0.5742 & 0.4864 & 0.4025 & 0.6323 & 0.5412 & 0.4533 & 1.8933 & 
1.7385 & 1.5718 \\ 
(3,3) & 3 & 0.7560 & 0.6947 & 0.6335 & 0.8305 & 0.7655 & 0.7007 & 1.9050 & 
1.7582 & 1.6037 \\ 
(4,4) & 4 & 0.8226 & 0.7490 & 0.6752 & 0.8315 & 0.7571 & 0.6825 & 1.9156 & 
1.7701 & 1.6184 \\ 
(5,5) & 5 & 0.8275 & 0.7496 & 0.6720 & 0.9196 & 0.8378 & 0.7563 & 1.9327 & 
1.7883 & 1.6380 \\ 
\bottomrule &  &  &  &  &  &  &  &  &  & 
\end{tabular}
\vspace{0.0in} }
\end{center}
\par
{\small \ }
\caption{Monte Carlo RMSE of prediction with $n^*=40$, model (\protect\ref%
{mc_model})}
\label{table_RMSE_n40_C}
\end{table}

\section{An application to house price prediction in Los Angeles}

\label{sec:empirical} In this section we show how the techniques established
in the paper can be used to predict house prices. This can be of interest in
real estate and urban economics, as well as for property developers. Indeed,
spatial methods are frequently used in these fields, as studied for instance
by \cite{IversenJr2001}, \cite{Banerjee2004} and \cite{Majumdar2006}. We use
median house price data for census blocks in California from the 1990 census
from \cite{Pace1997a}, available at www.spatial-statistics.com. We confine
our analysis to the city of Los Angeles. The data is gridded as follows: a $%
14\times 23$ grid of square cells is superimposed on Los Angeles, from $%
33.75^{\circ}$N to $34.17^{\circ}$N and $117.75^{\circ}$W to $118.44^{\circ}$%
W. The grid covers a total of 5259 observations. The average of the median
house values for each cell is calculated and the 322 such observations form
our sample. The gridding is shown in Figure \ref{fig:la_data}, in which the
8 empty cells are filled and marked with a cross. We wish to predict the
house price for these cells. House price data is not a zero mean process, so
we subtract the sample mean using the whole sample from each cell.

We proceed in the following way: to obtain the coefficients $\hat{a}_{\ell }$%
, and $\check{d}(\ell )$ in $\left( \ref{1}\right) $ and (\ref{AR_pred_one})
we use the $14\times 19$ sublattice formed of the first 19 columns of cells.
This sublattice contains no missing observations. Once the coefficients are
obtained we construct predictions using the remaining $4\times 19$
sublattice, in a step-by-step manner. The shaded-and-crossed cell (8,20) is
predicted first, followed by (8,21) and (8,22). We then predict (4,21),
followed by (7,22), (9,23), (6,23) and (1,23).

The predicted values are tabulated for various values of $\left(
m[1],m[2]\right) $ and $\left( p_{1},p_{2}\right) $ in Tables \ref%
{table_LA_preds_Javier} and \ref{table_LA_preds_AR}. The predicted values
are quite stable across the choices $\left( m[1],m[2]\right) =(2,2),(2,3)$
using either the periodogram or AR spectral estimate. They most closely
match those obtained when $\left( p_{1},p_{2}\right) =(2,2),(2,3)$ in (\ref%
{AR_pred_one}). In the latter case we compare in Table \ref%
{table_LA_BIC_FPE_AR} the order selection criteria proposed by \cite%
{Gupta2016}, which include the usual FPE and BIC (denoted with a $\widehat{}$
) as well as corrected version that account for the spatial case (denoted
with $\widetilde{}$ and $\bar{}$ ). The FPE tends to favour longer lag
lengths no matter which version is used, as do $\widehat{\text{BIC}}$ and $%
\overline{\text{BIC}}$. However the latter as well as $\widetilde{\text{FPE}}
$ are not monotonically decreasing in lag length, unlike $\widehat{BIC}$, $%
\widehat{FPE}$ and $\overline{FPE}$. Thus the latter three are likely to
overfit and seem undesirable. If we impose a selection rule that picks the
desirable lag order as the first instance when the selection criteria shows
an increase with lag length, then we get $\left( p_{1},p_{2}\right) =(2,2)$
using $\widetilde{FPE}$ and $\overline{BIC} $. $\widetilde{BIC}$ indicates a
choice of $\left( p_{1},p_{2}\right) =(1,2)$, on the other hand. All
considered, it seems that $\left( p_{1},p_{2}\right) =(2,2)$ is a reasonable
choice. 
\begin{figure}[b]
\includegraphics[scale=0.6, trim=-180 0 -200 0,
clip=true]{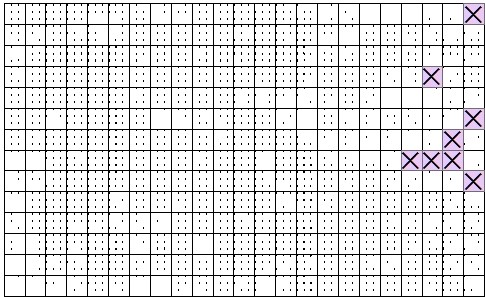}
\caption{Gridded Los Angeles median house price data}
\label{fig:la_data}
\end{figure}
\begin{table}[b]
{\small {\ }}
\par
\begin{center}
{\small 
\begin{tabular}{ccccccccccc}
\midrule $\left(m[1],m[2]\right)$ &  & (8,20) & (8,21) & (8,22) & (4,21) & 
(7,22) & (9,23) & (6,23) & (1,23) &  \\ 
\midrule (1,1) & $\widehat{}$ & 0.5645 & 0.5540 & 0.5384 & 0.6976 & 0.6671 & 
0.1962 & 0.9337 & 0.4518 &  \\ 
& $\widetilde{}$ & 0.5244 & 0.5529 & 0.5523 & 0.6643 & 0.6256 & 0.1969 & 
0.8770 & 0.4348 &  \\ 
(1,2) & $\widehat{}$ & 0.8054 & 0.5312 & 0.8791 & 1.2740 & 0.8797 & 1.0049 & 
1.2029 & 0.6338 &  \\ 
& $\widetilde{}$ & 0.7043 & 0.4290 & 0.7684 & 1.1141 & 0.7693 & 0.8788 & 
1.0520 & 0.5543 &  \\ 
(2,1) & $\widehat{}$ & 1.0691 & 0.8761 & 0.3926 & 1.4462 & 1.9503 & 0.7717 & 
0.8519 & 1.0789 &  \\ 
& $\widetilde{}$ & 0.9828 & 0.8133 & 0.4722 & 1.3612 & 1.7429 & 0.6726 & 
0.8448 & 1.0044 &  \\ 
(2,2) & $\widehat{}$ & 2.0717 & 1.7439 & 1.4680 & 3.0421 & 2.4033 & 2.3360 & 
2.3242 & 2.0935 &  \\ 
& $\widetilde{}$ & 1.8934 & 1.4566 & 1.1207 & 2.7803 & 2.1965 & 2.1349 & 
2.1242 & 1.9134 &  \\ 
(2,3) & $\widehat{}$ & 1.9970 & 1.6205 & 1.3149 & 2.9325 & 2.3167 & 2.2518 & 
2.2404 & 2.0181 &  \\ 
& $\widetilde{}$ & 1.7926 & 1.3057 & 0.9511 & 2.6323 & 2.0796 & 2.0213 & 
2.0111 & 1.8115 &  \\ 
(1,3) & $\widehat{}$ & 0.8150 & 0.5414 & 0.8899 & 1.2892 & 0.8902 & 1.0169 & 
1.2173 & 0.6414 &  \\ 
& $\widetilde{}$ & 0.7218 & 0.4460 & 0.7874 & 1.1419 & 0.7885 & 0.9007 & 
1.0782 & 0.5681 &  \\ 
\bottomrule &  &  &  &  &  &  &  &  &  & 
\end{tabular}
\vspace{0.0in} }
\end{center}
\par
{\small \ }
\caption{Los Angeles house price predictions, in
`00,000 US Dollars}
\label{table_LA_preds_Javier}
\end{table}

\begin{table}[t]
{\small {\ }}
\par
\begin{center}
{\small 
\begin{tabular}{ccccccccccc}
\midrule $\left(p_1,p_2\right)$ & (8,20) & (8,21) & (8,22) & (4,21) & (7,22)
& (9,23) & (6,23) & (1,23) &  &  \\ 
\midrule (1,1) & 1.6525 & 1.2876 & 1.0325 & 2.4728 & 1.6663 & 2.1204 & 1.6825
& 1.7266 &  &  \\ 
(1,2) & 1.5373 & 1.2516 & 0.8473 & 2.3741 & 1.7944 & 1.8452 & 1.7626 & 1.5659
&  &  \\ 
(2,1) & 1.8138 & 1.6433 & 1.1237 & 2.4881 & 1.8545 & 2.3404 & 1.7547 & 2.3233
&  &  \\ 
(2,2) & 1.8703 & 1.9858 & 1.4089 & 2.6079 & 2.4100 & 2.4992 & 2.0862 & 2.2252
&  &  \\ 
(3,2) & 1.7925 & 1.9298 & 1.4400 & 2.4489 & 2.6472 & 2.5962 & 2.4021 & 1.9518
&  &  \\ 
(4,3) & 2.4465 & 2.2325 & 1.9075 & 2.0319 & 3.1207 & 3.5439 & 2.8234 & 2.0841
&  &  \\ 
\bottomrule &  &  &  &  &  &  &  &  &  & 
\end{tabular}
\vspace{0.0in} }
\end{center}
\par
{\small \ }
\caption{Los Angeles house price predictions using (\protect\ref{AR_pred_one}%
), in `00,000 US Dollars}
\label{table_LA_preds_AR}
\end{table}

\begin{table}[t]
{\small {\ }}
\par
\begin{center}
{\small 
\begin{tabular}{ccccccccccc}
\midrule $\left(p_1,p_2\right)$ & $\widehat{\text{BIC}}$ & $\widetilde{%
\text {BIC}}$ & $\overline{\text{BIC}}$ & $\widehat{\text{FPE}}$ & $%
\widetilde{\text{FPE}}$ & $\overline{\text{FPE}}$ &  &  &  &  \\ 
\midrule (1,1) & 0.5979 & 0.6001 & 0.5990 & 0.5472 & 0.5639 & 0.5555 &  &  & 
&  \\ 
(1,2) & 0.5973 & 0.5979 & 0.5976 & 0.5144 & 0.5183 & 0.5164 &  &  &  &  \\ 
(2,1) & 0.5953 & 0.6042 & 0.5997 & 0.4766 & 0.5377 & 0.5062 &  &  &  &  \\ 
(2,2) & 0.5946 & 0.6008 & 0.5977 & 0.4351 & 0.4728 & 0.4535 &  &  &  &  \\ 
(3,2) & 0.5943 & 0.6105 & 0.6024 & 0.4022 & 0.5018 & 0.4489 &  &  &  &  \\ 
(4,3) & 0.5824 & 0.6081 & 0.5953 & 0.2129 & 0.3058 & 0.2543 &  &  &  &  \\ 
\bottomrule &  &  &  &  &  &  &  &  &  & 
\end{tabular}
\vspace{0.0in} }
\end{center}
\par
{\small \ }
\caption{Los Angeles house price predictions using (\protect\ref{AR_pred_one}%
), BIC and FPE}
\label{table_LA_BIC_FPE_AR}
\end{table}

\section{Conclusion}

\label{sec:conc}

In this paper we have dealt with the problem of prediction when the data $%
\left\{ x_{t}\right\} _{t\in \mathbb{Z}^{2}}$ is collected on a lattice. To
do so, we considered unilateral representations of $\left\{ x_{t}\right\}
_{t\in \mathbb{Z}^{2}}$ and in particular the canonical factorization of the
spectral density function, the latter being possible as observed by \cite%
{Whittle1954}. Our approach does not need any parameterization of the model
(i.e. the covariogram structure of the data), so we avoid the consequences
that a wrong parameterization can have in the predictor. We have also
compared our methodology to one based on the space domain by using a finite
approximation of the unilateral autoregressive model in $\left( \ref{SAR}%
\right) $.

However, it might be interesting to examine how our proposed methodology
compares with one based on the conditional autoregressive ($CAR$)
representation of \cite{Besag1974}. That is, let $x_{t}$ be given by%
\begin{eqnarray}
x_{t} &=&\mu +E\left[ x_{t}\mid x_{r};r\neq t\right] +u_{t}  \notag \\
&=&\mu +\sum_{r\neq t}\zeta _{\left\vert r-t\right\vert }x_{r}+u_{t}\text{.}
\label{CAR}
\end{eqnarray}%
Note that our definition in $\left( \ref{CAR}\right) $ implies that $x_{t}$
is, among other characteristics, homogeneous. The representation of $x_{t}$
given in $\left( \ref{CAR}\right) $ suggests to predict a value $x_{t}$ at a
location $s=\left( s\left[ 1\right] ,s\left[ 2\right] \right) $, $1\leq s%
\left[ 1\right] \leq n\left[ 1\right] $ and $1\leq s\left[ 2\right] \leq n%
\left[ 2\right] $, by 
\begin{equation}
\widehat{x}_{s}=\widehat{\mu }+\sum_{r\neq s;\left\vert r-s\right\vert <M}%
\widehat{\zeta }_{\left\vert r-s\right\vert }x_{r}\text{,}  \label{1_1}
\end{equation}%
where $\widehat{\mu }$ and $\widehat{\zeta }_{\left\vert r-s\right\vert }$
are respectively the least squares estimator of $\mu $ and $\zeta
_{\left\vert r-t\right\vert }$, and with the convention that $x_{r}=0$ if it
were not observed. This is in the same spirit as we did with our predictor
in $\left( \ref{1}\right) $. On the other hand, if we were interesting to
predict a value $x_{t}$ at a location $s=\left( n\left[ 1\right] +1,s\left[ 2%
\right] \right) $, we might then use%
\begin{equation}
\widehat{x}_{n\left[ 1\right] +1,s\left[ 2\right] }=\widehat{\mu }%
+\sum_{r\neq s;\left\vert r-s\right\vert <M}\widehat{\zeta }_{\left\vert
r-s\right\vert }x_{r}\text{.}  \label{1a_1}
\end{equation}%
However to compute the prediction we would also need to replace the
unobserved $x_{r}$ by its prediction as in $\left( \ref{1a}%
\right) $. The latter might be done in an iterative fashion similar to what
we did in $\left( \ref{pre_1}\right) $. 

. \clearpage\appendix

\begin{center}
{\Large {\textbf{Mathematical Appendix}} }
\end{center}

\section{Proofs of Theorems}

For the sake of notational simplicity, we shall assume that $M\left[ 1\right]
=M\left[ 2\right] $ and also that $n\left[ 1\right] =n\left[ 2\right] $, so
that $\mathbf{n}=n^{2}\left[ 1\right] $ and $\mathbf{m}=m^{2}\left[ 1\right] 
$ say. Also to simplify the notation we shall write $\sum_{j\preceq J}$
instead of $\sum_{j\preceq J}^{+}$ given in $\left( \ref{notd}\right) $.
That is, 
\begin{equation}
\sum_{k\preceq M}d_{k}=\sum_{k\left[ 2\right] =1}^{M\left[ 2\right] }d_{0,k%
\left[ 2\right] }+\sum_{k\left[ 1\right] =1}^{M\left[ 1\right] }\sum_{k\left[
2\right] =1-M\left[ 2\right] }^{M\left[ 2\right] }d_{k\left[ 1\right] ,k%
\left[ 2\right] }\text{.}  \label{sum_1}
\end{equation}

\subsection{\textbf{Proof of Theorem 1}}

$\left. {}\right. $

We shall examine part $\left( \mathbf{a}\right) $, since part $\left( 
\mathbf{b}\right) $ follows by Lemma \ref{c_hat} and standard arguments. By
the Cram\'{e}r-Wold device, it suffices to show that for a finite set of
constants $\varphi _{j}$, $j=1,...,J$, 
\begin{equation}
\mathbf{n}^{1/2}\sum_{j=1}^{J}\varphi _{j}\left( \widehat{\alpha }_{j}-%
\widetilde{\alpha }_{j,n}\right) \overset{d}{\rightarrow }\mathcal{N}\left(
0,\sum_{j=1}^{J}\varphi _{j}^{2}\left( 1+\left( 1+\kappa _{4,\vartheta
}\right) \delta _{j}\right) \right) \text{.}  \label{pTh1.1}
\end{equation}%
First, by definition of $\widehat{\alpha }_{j}$ and $\widetilde{\alpha }%
_{j,n}$, we have that 
\begin{equation}
\widehat{\alpha }_{j}-\widetilde{\alpha }_{j,n}=\frac{1}{2\mathbf{M}}%
\sum_{k\preceq M}\log \left( \frac{\widehat{f}_{k}}{\widetilde{f}_{k}}%
\right) \cos \left( j\cdot \widetilde{\lambda }_{k}\right) \text{.}
\label{pTh1.2}
\end{equation}%
Because standard inequalities and then Lemma \ref{f_hat} yield that 
\begin{equation}
\sup_{k\preceq M}\left\vert \frac{\widehat{f}_{k}-\widetilde{f}_{k}}{%
\widetilde{f}_{k}}\right\vert ^{2}\leq \sum_{k\preceq M}\left\vert \frac{%
\widehat{f}_{k}-\widetilde{f}_{k}}{\widetilde{f}_{k}}\right\vert
^{2}=O_{p}\left( \frac{\mathbf{M}}{\mathbf{m}}\right) =o_{p}\left( 1\right) 
\text{,}  \label{pTh1.3}
\end{equation}%
the left side of $\left( \ref{pTh1.2}\right) $ is, by Lemma \ref{f_hat},%
\begin{eqnarray}
&&\frac{1}{2\mathbf{M}}\sum_{k\preceq M}\frac{\widehat{f}_{k}-\widetilde{f}%
_{k}}{\widetilde{f}_{k}}\cos \left( j\cdot \widetilde{\lambda }_{k}\right)
+O_{p}\left( \mathbf{m}^{-1}\right)  \notag \\
&=&\frac{1}{2\mathbf{M}}\sum_{k\preceq M}\frac{\widehat{f}_{k}-\widetilde{f}%
_{k}}{f_{k}}\cos \left( j\cdot \widetilde{\lambda }_{k}\right) +\frac{1}{2%
\mathbf{M}}\sum_{k\preceq M}\left( \frac{\widehat{f}_{k}-\widetilde{f}_{k}}{%
f_{k}}\right) \left( \frac{f_{k}-\widetilde{f}_{k}}{\widetilde{f}_{k}}%
\right) \cos \left( j\cdot \widetilde{\lambda }_{k}\right)  \notag \\
&&+o_{p}\left( \mathbf{n}^{-1/2}\right) \text{,}  \label{AC}
\end{eqnarray}%
after using Taylor series expansion of $\log \left( z\right) $ around $z=1$
and Condition $C4$. Now, the absolute value of the second term on the right
of the last displayed expression is bounded by 
\begin{equation}
\frac{1}{2\mathbf{M}}\sum_{k\preceq M}\left\vert \frac{\widehat{f}_{k}-%
\widetilde{f}_{k}}{f_{k}}\right\vert \left\vert \frac{f_{k}-\widetilde{f}_{k}%
}{\widetilde{f}_{k}}\right\vert =O\left( \frac{1}{\mathbf{Mm}^{1/2}}\right)
=o\left( \mathbf{n}^{-1/2}\right)  \label{AD}
\end{equation}%
by Lemmas \ref{c_hat} and \ref{f_hat}. So, we conclude that%
\begin{eqnarray*}
\mathbf{n}^{1/2}\left( \widehat{\alpha }_{j}-\widetilde{\alpha }%
_{j,n}\right) &=&\frac{\mathbf{n}^{1/2}}{2\mathbf{M}}\sum_{k\preceq M}\frac{%
\widehat{f}_{k}-\widetilde{f}_{k}}{f_{k}}\cos \left( j\cdot \widetilde{%
\lambda }_{k}\right) +o_{p}\left( 1\right) \\
&=&\frac{1}{2\mathbf{n}^{1/2}}\sum_{k\preceq n}\frac{I_{x}^{T}\left( \lambda
_{k}\right) -f\left( \lambda _{k}\right) }{f\left( \lambda _{k}\right) }%
h_{k,n}\left( j\right) +o_{p}\left( 1\right) \text{,}
\end{eqnarray*}%
where $h_{k,n}\left( j\right) $ is a step function defined as%
\begin{equation*}
h_{k,n}\left( j\right) =f_{p}^{-1}f\left( \lambda _{k}\right) \cos \left(
j\cdot \widetilde{\lambda }_{k}\right)
\end{equation*}%
when $2p\left[ \ell \right] -1<\frac{k\left[ \ell \right] }{m\left[ \ell %
\right] }<2p\left[ \ell \right] +1$ and $1\leq p\left[ 1\right] <M\left[ 1%
\right] $, $1-M\left[ 2\right] <p\left[ 2\right] \leq M\left[ 2\right] $.
Now, using Lemma 1, we have that for all $j$, 
\begin{equation*}
\sum_{k\preceq n}\left( \frac{I_{x}^{T}\left( \lambda _{k}\right) }{f\left(
\lambda _{k}\right) }-\frac{\left( 2\pi \right) ^{2}I_{\vartheta }^{T}\left(
\lambda _{k}\right) }{\sigma _{\vartheta }^{2}}\right) h_{k,n}\left(
j\right) =o_{p}\left( \mathbf{n}^{1/2}\right) \text{.}
\end{equation*}%
So, we conclude that the left side of $\left( \ref{pTh1.1}\right) $ is 
\begin{eqnarray*}
\mathbf{n}^{1/2}\sum_{j=1}^{J}\varphi _{j}\left( \widehat{\alpha }_{j}-%
\widetilde{\alpha }_{j,n}\right) &=&\sum_{j=1}^{J}\varphi _{j}\frac{1}{%
\mathbf{n}^{1/2}}\sum_{k\preceq n}\left( \frac{\left( 2\pi \right)
^{2}I_{\vartheta }^{T}\left( \lambda _{k}\right) }{\sigma _{\vartheta }^{2}}%
-1\right) h_{k,n}\left( j\right) +o_{p}\left( 1\right) \\
&=&\sum_{j=1}^{J}\varphi _{j}\frac{1}{\mathbf{n}^{1/2}}\sum_{k\preceq
n}\left( \frac{\left( 2\pi \right) ^{2}I_{\vartheta }^{T}\left( \lambda
_{k}\right) }{\sigma _{\vartheta }^{2}}-1\right) \cos \left( j\cdot 
\widetilde{\lambda }_{k}\right) \left( 1+o_{p}\left( 1\right) \right)
\end{eqnarray*}%
after we observe that Condition $C1$ implies that 
\begin{equation*}
h_{k,n}\left( j\right) =\cos \left( j\cdot \widetilde{\lambda }_{k}\right)
\left( 1+\frac{1}{\mathbf{M}^{1/2}}\left( \frac{\partial f\left( \lambda
_{k}\right) }{\partial \lambda \left[ 1\right] }+\frac{\partial f\left(
\lambda _{k}\right) }{\partial \lambda \left[ 2\right] }\right) +O\left( 
\frac{1}{\mathbf{M}}\right) \right) \text{.}
\end{equation*}%
Recall that $M\left[ 1\right] =M\left[ 2\right] $. From here the conclusion
is standard proceeding as in the proof of Theorems 1 and 2 of \cite%
{hidalgo2009goodness}, see also Robinson and Vidal-Sanz $\left( 2006\right) $%
, and so it is omitted.\hfill $\blacksquare $

\subsection{\textbf{Proof of Theorem 2}}

Define $\widehat{d}_{j}=\log \widehat{A}_{j}$, $\widetilde{d}_{j,n}=\log 
\widetilde{A}_{j,n}$ and $d_{j,n}=\log A_{j,n}$. We begin with part $\left( 
\mathbf{b}\right) $. First by definition, 
\begin{equation}
\widetilde{d}_{j,n}-d_{j,n}=:\sum_{k\preceq M}\left( \widetilde{\alpha }%
_{k,n}-\alpha _{k,n}\right) e^{-ik\cdot \widetilde{\lambda }_{j}}\text{,}
\label{pTh2.1}
\end{equation}%
which by Taylor expansion of $\log \left( \widetilde{f}_{r}/f_{r}\right) $, $%
\left( \ref{AA}\right) $ in Lemma \ref{c_hat} and Condition $C4$, it is%
\begin{eqnarray*}
&&\frac{1}{2\mathbf{M}}\sum_{k\preceq M}\sum_{r\preceq M}\left\{ \left( 
\frac{\widetilde{f}_{r}-f_{r}}{f_{r}}\right) +\frac{1}{2}\left( \frac{%
\widetilde{f}_{r}-f_{r}}{f_{r}}\right) ^{2}\right\} \cos \left( k\cdot 
\widetilde{\lambda }_{r}\right) e^{-ik\cdot \widetilde{\lambda }%
_{j}}+o\left( \frac{1}{\mathbf{m}^{1/2}}\right) \\
&=&\frac{1}{2\mathbf{M}}\sum_{k\preceq M}\left\{ \sum_{r\preceq M}\left\{
\left( \frac{\widetilde{f}_{r}-f_{r}}{f_{r}}\right) +\frac{1}{72\mathbf{M}%
^{2}}g_{r}^{2}\right\} \cos \left( k\cdot \widetilde{\lambda }_{r}\right)
e^{-ik\cdot \widetilde{\lambda }_{j}}\right\} +o\left( \frac{1}{\mathbf{m}%
^{1/2}}\right) \text{.}
\end{eqnarray*}%
Now using the inequality%
\begin{equation}
\left\vert \sum_{k\preceq M}e^{-ik\cdot \widetilde{\lambda }_{p}}\right\vert
\leq K\widetilde{\lambda }_{p\left[ 1\right] }^{-1}\widetilde{\lambda }_{p%
\left[ 2\right] }^{-1}\text{,}  \label{AB}
\end{equation}%
we have that 
\begin{eqnarray*}
\left\vert \frac{1}{\mathbf{M}^{3}}\sum_{k\preceq M}\left\{ \sum_{r\preceq
M}g_{r}^{2}\cos \left( k\cdot \widetilde{\lambda }_{r}\right) e^{-ik\cdot 
\widetilde{\lambda }_{j}}\right\} \right\vert &\leq &\frac{K}{\mathbf{M}^{3}}%
\sum_{r\preceq M}g_{r}^{2}\left\vert \sum_{k\preceq M}\cos \left( k\cdot 
\widetilde{\lambda }_{r}\right) e^{-ik\cdot \widetilde{\lambda }%
_{j}}\right\vert \\
&=&O\left( \frac{\log ^{2}\mathbf{M}}{\mathbf{M}^{2}}\right) \text{.}
\end{eqnarray*}

Thus using that $2\cos x=e^{ix}+e^{-ix}$, the right side of $\left( \ref%
{pTh2.1}\right) $ is 
\begin{eqnarray*}
&&\frac{1}{2\mathbf{M}}\sum_{k\preceq M}\left\{ \sum_{r\preceq M}\left( 
\frac{\widetilde{f}_{r}-f_{r}}{f_{r}}\right) \cos \left( k\cdot \widetilde{%
\lambda }_{r}\right) e^{-ik\cdot \widetilde{\lambda }_{j}}\right\} +O\left( 
\frac{\log ^{2}\mathbf{M}}{\mathbf{M}^{2}}\right) \\
&=&\frac{1}{2\mathbf{M}}\sum_{r\preceq M}\left\{ \left( \frac{\widetilde{f}%
_{r}-f_{r}}{f_{r}}\right) \sum_{k\preceq M}\left( \frac{e^{ik\cdot 
\widetilde{\lambda }_{r-j}}+e^{ik\cdot \widetilde{\lambda }_{-r-j}}}{2}%
\right) \right\} +O\left( \frac{\log ^{2}\mathbf{M}}{\mathbf{M}^{2}}\right)
\\
&=&\frac{1}{12\mathbf{M}}\left\{ \sum_{k\preceq M}\frac{1}{\mathbf{M}}%
\sum_{r\preceq M}g_{r}\cos \left( k\cdot \widetilde{\lambda }_{r}\right)
\right\} e^{-ik\cdot \widetilde{\lambda }_{j}}+O\left( \frac{1}{\mathbf{M}%
^{2}}\sum_{r\preceq M}\frac{1}{r\pm j}\right) \\
&=&\frac{1}{6\mathbf{M}}g_{j}+o\left( \frac{1}{\mathbf{m}^{1/2}}\right) 
\text{,}
\end{eqnarray*}%
where in the second equality we have used $\left( \ref{AA}\right) $ and $%
\left( \ref{AB}\right) $ and then Condition $C4$ and for\ the third equality
that 
\begin{equation*}
\frac{1}{\mathbf{M}}\sum_{r\preceq M}g_{r}\cos \left( k\cdot \widetilde{%
\lambda }_{r}\right) =\int g\left( \lambda \right) \cos \left( k\cdot
\lambda \right) d\lambda +O\left( \mathbf{M}^{-1}\right) =:\xi _{k}+O\left( 
\mathbf{M}^{-1}\right)
\end{equation*}%
and then that $\sum_{k\preceq M}\xi _{k}e^{-ik\cdot \lambda }=g\left(
\lambda \right) +O\left( \mathbf{M}^{-1}\right) $ since $g\left( \lambda
\right) $, given in $\left( \ref{g_1}\right) $, is twice continuously
differentiable so that $\left\vert \xi _{k}\right\vert =O\left( \left\vert
k\right\vert ^{-3}\right) $. From here we conclude the proof of part $\left( 
\mathbf{b}\right) $ by standard algebra.

Next, we show part $\left( \mathbf{a}\right) $. By Cram\'{e}r-Wold device,
it suffices to examine that for any set of finite constants $\varphi
_{q_{1}},...,\varphi _{q_{2}}$, the behaviour of 
\begin{equation*}
\mathbf{m}^{1/2}\sum_{j=q_{1}}^{q_{2}}\varphi _{j}\left( \widehat{A}%
_{j}-A_{j,n}\right) \text{.}
\end{equation*}%
First, by definitions of $\widehat{A}_{j,n}$ and $\widetilde{A}_{j,n}$, we
have that 
\begin{eqnarray}
\mathbf{m}^{1/2}\left( \widehat{d}_{j}-\widetilde{d}_{j,n}\right) &=&-%
\mathbf{m}^{1/2}\sum_{k\preceq M}\left( \widehat{\alpha }_{k}-\widetilde{%
\alpha }_{k,n}\right) e^{-ik\cdot \widetilde{\lambda }_{j}}  \label{Ad} \\
&=&-\mathbf{m}^{1/2}\sum_{k\preceq M}\frac{1}{2\mathbf{M}}\sum_{s\preceq
M}\left( \frac{\widehat{f}_{s}-\widetilde{f}_{s}}{\widetilde{f}_{s}}\right)
\cos \left( k\cdot \widetilde{\lambda }_{s}\right) e^{-ik\cdot \widetilde{%
\lambda }_{j}}+o_{p}\left( 1\right)  \notag \\
&=&\sum_{k\preceq M}\frac{\mathbf{m}^{1/2}}{2\mathbf{n}}\sum_{s\preceq
n}\rho _{s}h_{s,n}\left( k\right) e^{-ik\cdot \widetilde{\lambda }%
_{j}}+o_{p}\left( 1\right)  \notag
\end{eqnarray}%
proceeding as in the proof of Theorem 1, where $\rho _{s}=\left( 2\pi
\right) ^{2}\sigma _{\vartheta }^{-2}I_{\vartheta }^{T}\left( \lambda
_{s}\right) $ and $h_{s,n}\left( k\right) $ were defined there. So, because $%
n\left[ \ell \right] =2M\left[ \ell \right] m\left[ \ell \right] $ for $\ell
=1,2$, denoting $\psi _{s,n}\left( j\right) =\left( 2\mathbf{M}\right)
^{-1/2}\sum_{k\preceq M}h_{s,n}\left( k\right) e^{-ik\cdot \widetilde{%
\lambda }_{j}}$, we conclude that 
\begin{eqnarray}
\mathbf{m}^{1/2}\sum_{j=1}^{J}\varphi _{j}\left( \widehat{d}_{j}-\widetilde{d%
}_{j,n}\right) &=&\frac{1}{2\mathbf{n}^{1/2}}\sum_{s\preceq n}\rho
_{s}\sum_{j=1}^{J}\varphi _{j}\psi _{s,n}\left( j\right) +o_{p}\left(
1\right)  \label{pTh2.2} \\
&&\overset{d}{\rightarrow }\mathcal{N}\left( 0,\mathbf{\varphi }^{\prime }V%
\mathbf{\varphi }\right)  \notag
\end{eqnarray}%
proceeding as in the proof of Theorem 1, where $\mathbf{\varphi }^{\prime }%
\mathbf{=}\left( \varphi _{1},...,\varphi _{J}\right) $ and 
\begin{eqnarray*}
V &=&\lim_{n\rightarrow \infty }\sum_{j_{1},j_{2}=1}^{J}\varphi
_{j_{1}}\varphi _{j_{2}}\frac{1}{\mathbf{n}}\sum_{\ell \preceq n}\psi _{\ell
,n}\left( j_{1}\right) \overline{\psi _{\ell ,n}}\left( j_{2}\right) \\
&=&\lim_{n\rightarrow \infty }\sum_{j_{1},j_{2}=1}^{J}\varphi
_{j_{1}}\varphi _{j_{2}}\frac{1}{2\mathbf{M}^{2}}\sum_{\ell \preceq
M}\sum_{k_{1},k_{2}\preceq M}e^{-ik_{1}\widetilde{\lambda }_{j_{1}}+ik_{2}%
\widetilde{\lambda }_{j_{2}}}\cos \left( k_{1}\cdot \widetilde{\lambda }%
_{\ell }\right) \cos \left( k_{2}\cdot \widetilde{\lambda }_{\ell }\right) \\
&=&2^{-1}\sum_{j_{1},j_{2}=1}^{J}\varphi _{j_{1}}\varphi _{j_{2}}\left(
\delta _{j_{1}\left[ 1\right] -j_{2}\left[ 2\right] }+2^{-1}\phi _{j_{1}%
\left[ 1\right] }\phi _{j_{2}\left[ 1\right] }-i\phi _{j_{1}\left[ 1\right]
-j_{2}\left[ 1\right] }\right)
\end{eqnarray*}%
by Lemma \ref{var_1}. From here the conclusion of the theorem follows by
standard delta arguments.\hfill $\blacksquare $

$\left. {}\right. $

\subsection{\textbf{Proof of Theorem 3}}

We begin with part $\left( \mathbf{a}\right) $. To that end, it suffices to
show that 
\begin{equation}
\mathbf{n}^{1/2}\sum_{\upsilon =p}^{q}\varphi _{\upsilon }\left( \widehat{a}%
_{\upsilon }-\widetilde{a}_{\upsilon ,n}\right) \overset{d}{\rightarrow }%
\mathcal{N}\left( 0,\sum_{\upsilon _{1},\upsilon _{2}=p}^{q}\varphi
_{\upsilon _{1}}\varphi _{\upsilon _{2}}\Omega _{a,\upsilon _{1},\upsilon
_{2}}\right) \text{.}  \label{pTh3.1}
\end{equation}%
By definition of $\widehat{a}_{\upsilon }-\widetilde{a}_{\upsilon ,n}$ and
Taylor expansion of $\widehat{A}_{j}-\widetilde{A}_{j,n}$, a typical
component on the left of $\left( \ref{pTh3.1}\right) $ is 
\begin{equation}
\frac{\mathbf{n}^{1/2}}{4\mathbf{M}}\sum_{-M<j\leq M}\left( \widehat{d}_{j}-%
\widetilde{d}_{j,n}\right) \widetilde{A}_{j,n}e^{i\upsilon \cdot \widetilde{%
\lambda }_{j}}+\frac{\mathbf{n}^{1/2}}{4\mathbf{M}}\sum_{-M<j\leq
M}\left\vert \widehat{d}_{j}-\widetilde{d}_{j,n}\right\vert ^{2}\left\vert 
\widetilde{A}_{j,n}\right\vert \left( 1+o_{p}\left( 1\right) \right) \text{.}
\label{pTh3.2}
\end{equation}%
Now $\left( \ref{pTh2.2}\right) $ implies that 
\begin{equation*}
\mathbf{m}\left\vert \widehat{d}_{j}-\widetilde{d}_{j,n}\right\vert ^{2}=C%
\frac{1}{\mathbf{n}}\left\vert \sum_{k\preceq n}\rho _{k}\psi _{k,n}\left(
j\right) \right\vert ^{2}+o_{p}\left( 1\right) =O_{p}\left( 1\right) \text{,}
\end{equation*}%
by Theorem $A$ of \cite{Serfling1980}, p. 14 because Condition $C1$ implies
that $x^{4}\left( t\right) $ is uniformly integrable and Theorem 2 and the
continuous mapping theorem implies that $\left\vert \mathbf{n}%
^{-1/2}\sum_{k\preceq n}\rho _{k}\psi _{k,n}\left( j\right) \right\vert
^{2}\rightarrow _{d}\chi ^{2}$. So, using $\left( \ref{pTh2.1}\right) $ we
conclude that $\left( \ref{pTh3.2}\right) $ is%
\begin{eqnarray}
&&-\frac{\mathbf{n}^{1/2}}{4\mathbf{M}}\sum_{-M<j\leq M}\left(
\sum_{k\preceq M}\left( \widehat{\alpha }_{k}-\widetilde{\alpha }%
_{k,n}\right) e^{-ik\cdot \widetilde{\lambda }_{j}}\right) \widetilde{A}%
_{j,n}e^{i\upsilon \cdot \widetilde{\lambda }_{j}}+O_{p}\left( \frac{\mathbf{%
n}^{1/2}}{\mathbf{m}}\right)  \notag \\
&=&-\frac{\mathbf{n}^{1/2}}{4\mathbf{M}^{2}}\sum_{-M<j\leq M}\left(
\sum_{k,r\preceq M}\left( \frac{\widehat{f}_{r}-\widetilde{f}_{r}}{%
\widetilde{f}_{r}}\right) \cos \left( k\cdot \widetilde{\lambda }_{r}\right)
e^{-ik\cdot \widetilde{\lambda }_{j}}\right) \widetilde{A}_{j,n}e^{i\upsilon
\cdot \widetilde{\lambda }_{j}}+O_{p}\left( \frac{\mathbf{n}^{1/2}}{\mathbf{m%
}}\right) \text{.}  \notag \\
&=&-\frac{\mathbf{n}^{1/2}}{4\mathbf{M}}\sum_{r\preceq M}\left( \frac{%
\widehat{f}_{r}-\widetilde{f}_{r}}{\widetilde{f}_{r}}\right) \left(
\sum_{k\preceq M}\cos \left( k\cdot \widetilde{\lambda }_{r}\right)
a_{\upsilon -k}\right) +O_{p}\left( \mathbf{M}^{-1/2}\right)  \label{pTh3.3}
\\
&=&:\mathbf{n}^{1/2}\varrho _{n,\upsilon -k}+O_{p}\left( \mathbf{M}%
^{-1/2}\right) \text{,}  \notag
\end{eqnarray}%
where in the first equality we use $\left( \ref{Ad}\right) $ and in the
second equality that Lemma \ref{bias_2} implies that 
\begin{eqnarray*}
&&\frac{\mathbf{n}^{1/2}}{\mathbf{M}}\sum_{r\preceq M}\left( \frac{\widehat{f%
}_{r}-\widetilde{f}_{r}}{\widetilde{f}_{r}}\right) \sum_{k\preceq M}\cos
\left( k\cdot \widetilde{\lambda }_{r}\right) \left\{ \frac{1}{4\mathbf{M}}%
\sum_{-M<j\leq M}\widetilde{A}_{j,n}e^{-i\left( k-\upsilon \right) \cdot 
\widetilde{\lambda }_{j}}-a_{k-\upsilon }\right\} \\
&=&\frac{\mathbf{n}^{1/2}}{\mathbf{M}}\sum_{r\preceq M}\left( \frac{\widehat{%
f}_{r}-\widetilde{f}_{r}}{\widetilde{f}_{r}}\right) \sum_{k\preceq M}\cos
\left( k\cdot \widetilde{\lambda }_{r}\right) \left\{ \frac{1}{\mathbf{M}}%
h_{k-\upsilon }+O\left( \frac{\mathbf{1}}{\mathbf{M}^{2}}\right) \right\} \\
&=&O_{p}\left( \mathbf{M}^{-1/2}\right)
\end{eqnarray*}%
because $\left( \widehat{f}_{r}-\widetilde{f}_{r}\right) /\widetilde{f}%
_{r}=O_{p}\left( \mathbf{m}^{-1/2}\right) $ and $\left\{ h_{k}\right\} _{k}$
is a summable sequence.

So, we conclude that the left side of $\left( \ref{pTh3.1}\right) $ is 
\begin{equation*}
-\sum_{\upsilon =p}^{q}\varphi _{\upsilon }\mathbf{n}^{1/2}\varrho
_{n,\upsilon -k}+o_{p}\left( 1\right) \overset{d}{\rightarrow }\mathcal{N}%
\left( 0,V\right) \text{,}
\end{equation*}%
where $V=\lim_{M\rightarrow \infty }\mathbf{M}^{-1}\sum_{r\preceq M}\left(
\sum_{\upsilon =p}^{q}\varphi _{\upsilon }\sum_{k\preceq M}\cos \left(
k\cdot \lambda _{r}\right) a_{k-\upsilon }\right) ^{2}$.

We now conclude because using that $2\cos \left( x\right) =e^{ix}+e^{-ix}$, $%
2\cos \left( x\right) \cos \left( y\right) =\cos \left( x+y\right) +\cos
\left( x-y\right) $, a typical component of $V$ is $\varphi _{\upsilon
_{1}}\varphi _{\upsilon _{2}}\ $times%
\begin{eqnarray*}
&&\lim_{M\rightarrow \infty }\sum_{k_{1},k_{2}\preceq M}a_{k_{1}-\upsilon
_{1}}a_{k_{2}-\upsilon _{2}}\frac{1}{\mathbf{M}}\sum_{r\preceq M}\cos \left(
k_{1}\cdot \widetilde{\lambda }_{r}\right) \cos \left( k_{2}\cdot \widetilde{%
\lambda }_{r}\right) \\
&=&\lim_{M\rightarrow \infty }\sum_{k_{1},k_{2}\preceq M}a_{k_{1}-\upsilon
_{1}}a_{k_{2}-\upsilon _{2}}\frac{1}{2\mathbf{M}}\sum_{r\preceq M}\left(
\cos \left( \left( k_{1}+k_{2}\right) \cdot \widetilde{\lambda }_{r}\right)
+\cos \left( \left( k_{1}-k_{2}\right) \cdot \widetilde{\lambda }_{r}\right)
\right) \\
&=&\lim_{M\rightarrow \infty }\sum_{k_{1},k_{2}\preceq M}a_{k_{1}-\upsilon
_{1}}a_{k_{2}-\upsilon _{2}}\frac{1}{2M\left[ 1\right] }\sum_{r\left[ 1%
\right] =1}^{M\left[ 1\right] }\left\{ \cos \left( \left( k_{1}\left[ 1%
\right] +k_{2}\left[ 1\right] \right) \widetilde{\lambda }_{r\left[ 1\right]
}\right) \delta _{k_{2}\left[ 2\right] +k_{1}\left[ 2\right] }\right. \\
&&\text{ \ \ \ \ \ \ \ \ \ \ \ \ \ \ \ \ \ \ \ \ \ \ \ \ \ \ \ \ \ \ \ \ \ \
\ \ \ \ \ \ \ \ \ \ \ \ \ \ \ \ \ \ \ \ \ \ }\left. +\cos \left( \left( k_{1}%
\left[ 1\right] -k_{2}\left[ 1\right] \right) \widetilde{\lambda }_{r\left[ 1%
\right] }\right) \delta _{k_{2}\left[ 2\right] -k_{1}\left[ 2\right]
}\right\} \\
&=&\sum_{0\preceq k}a_{k}a_{k+\upsilon _{2}-\upsilon
_{1}}-\lim_{M\rightarrow \infty }\frac{1}{M\left[ 1\right] }\sum_{k_{1}\neq
k_{2}\preceq M;k_{1}\pm k_{2}\left[ 1\right] =1,3,...,\left[ M/2\right]
}a_{k_{1}-\upsilon _{1}}a_{k_{2}-\upsilon _{2}}\text{,}
\end{eqnarray*}%
where we have taken $\upsilon _{1}\preceq \upsilon _{2}$. From here the
conclusion is standard since $a_{\upsilon }$ is summable.

Part $\left( \mathbf{b}\right) $ follows by similar arguments to those in $%
\left( \ref{pTh3.3}\right) $ and Lemma \ref{c_hat}, so it is omitted.\hfill $%
\blacksquare $

$\left. {}\right. $

\subsection{\textbf{Proof of Theorem 4}}

We begin with part$\ \left( \mathbf{a}\right) $. For that purpose, denote 
\begin{eqnarray*}
\dot{x}_{s}^{\ast } &=&\sum_{k\left[ 2\right] =M\left[ 2\right] +1}^{\infty
}a_{0,k\left[ 2\right] }~x_{s\left[ 1\right] ,s\left[ 2\right] -k\left[ 2%
\right] }^{\ast }+\left( \sum_{k\left[ 1\right] =1}^{M\left[ 1\right]
}\sum_{k\left[ 2\right] =1-M\left[ 2\right] }^{M\left[ 2\right] }\right)
^{\perp }a_{k}x_{s-k}^{\ast } \\
\ddot{x}_{s}^{\ast } &=&\sum_{k\left[ 2\right] =1}^{M\left[ 2\right] }a_{0,k%
\left[ 2\right] }~x_{s\left[ 1\right] ,s\left[ 2\right] -k\left[ 2\right]
}^{\ast }+\sum_{k\left[ 1\right] =1}^{M\left[ 1\right] }\sum_{k\left[ 2%
\right] =1-M\left[ 2\right] }^{M\left[ 2\right] }a_{k}x_{s-k}^{\ast } \\
x_{n,s}^{\ast } &=&-\sum_{k\left[ 2\right] =1}^{M\left[ 2\right]
}a_{n,\left( 0,k\left[ 2\right] \right) }~x_{s\left[ 1\right] ,s\left[ 2%
\right] -k\left[ 2\right] }^{\ast }-\sum_{k\left[ 1\right] =1}^{M\left[ 1%
\right] }\sum_{k\left[ 2\right] =1-M\left[ 2\right] }^{M\left[ 2\right]
}a_{n,k}x_{s-k}^{\ast }\text{,}
\end{eqnarray*}%
where $\left( \sum_{k\left[ 1\right] =1}^{M\left[ 1\right] }\sum_{k\left[ 2%
\right] =1-M\left[ 2\right] }^{M\left[ 2\right] }\right) ^{\perp }=\sum_{k%
\left[ 1\right] =1}^{\infty }\sum_{k\left[ 2\right] =-\infty }^{\infty
}-\sum_{k\left[ 1\right] =1}^{M\left[ 1\right] }\sum_{k\left[ 2\right] =1-M%
\left[ 2\right] }^{M\left[ 2\right] }$. Then, 
\begin{equation}
x_{s}^{\ast }-\widehat{x}_{s}^{\ast }=\vartheta _{s}-\dot{x}_{s}^{\ast
}+\left( x_{n,s}^{\ast }-\widehat{x}_{s}^{\ast }\right) -\left(
x_{n,s}^{\ast }+\ddot{x}_{s}^{\ast }\right) \text{.}  \label{x_pred}
\end{equation}

The second moment of $\dot{x}_{s}^{\ast }$ is clearly $o\left( 1\right) $
since $\sum_{k\left[ 2\right] =-\infty }^{\infty }\left\vert a_{k\left[ 1%
\right] ,k\left[ 2\right] }\right\vert <K$ for any $k\left[ 1\right] $ and $%
\sum_{k\left[ 1\right] =1}^{\infty }\left\vert a_{k\left[ 1\right] ,k\left[ 2%
\right] }\right\vert <K$ for any $k\left[ 2\right] $ and that $M\rightarrow
\infty $. Next, the second moment of the last term on the right of $\left( %
\ref{x_pred}\right) $ is bounded by%
\begin{equation*}
2E\left( \sum_{k\left[ 2\right] =1}^{M\left[ 2\right] }\left( a_{n,\left( 0,k%
\left[ 2\right] \right) }-a_{0,k\left[ 2\right] }\right) ~x_{s\left[ 1\right]
,s\left[ 2\right] -k\left[ 2\right] }^{\ast }\right) ^{2}+2E\left( \sum_{k%
\left[ 1\right] =1}^{M\left[ 1\right] }\sum_{k\left[ 2\right] =1-M\left[ 2%
\right] }^{M\left[ 2\right] }\left( a_{n,k}-a_{k}\right) ~x_{s-k}^{\ast
}\right) ^{2}=o\left( 1\right) \text{,}
\end{equation*}%
by Lemma \ref{a_bias} and that the covariance of $x_{s}^{\ast }$ is
summable. Thus, it remains to examine the behaviour of $x_{n,s}^{\ast }-%
\widehat{x}_{s}^{\ast }$ on the right of $\left( \ref{x_pred}\right) $,
which is 
\begin{eqnarray}
&&\sum_{k\left[ 2\right] =1}^{M\left[ 2\right] }\left( \widehat{a}_{0,k\left[
2\right] }-\widetilde{a}_{n,\left( 0,k\left[ 2\right] \right) }\right) ~x_{s%
\left[ 1\right] ,s\left[ 2\right] -k\left[ 2\right] }^{\ast }+\sum_{k\left[ 1%
\right] =1}^{M\left[ 1\right] }\sum_{k\left[ 2\right] =1-M\left[ 2\right]
}^{M\left[ 2\right] }\left( \widehat{a}_{k}-\widetilde{a}_{n,k}\right)
x_{s-k}^{\ast }  \label{x_pred1} \\
&&+\sum_{k\left[ 2\right] =1}^{M\left[ 2\right] }\left( \widetilde{a}%
_{n,\left( 0,k\left[ 2\right] \right) }-a_{n,\left( 0,k\left[ 2\right]
\right) }\right) ~x_{s\left[ 1\right] ,s\left[ 2\right] -k\left[ 2\right]
}^{\ast }+\sum_{k\left[ 1\right] =1}^{M\left[ 1\right] }\sum_{k\left[ 2%
\right] =1-M\left[ 2\right] }^{M\left[ 2\right] }\left( \widetilde{a}%
_{n,k}-a_{n,k}\right) x_{s-k}^{\ast }\text{.}  \notag
\end{eqnarray}%
Now Theorem 3 part $\left( \mathbf{b}\right) $ and summability of the
covariance of $x_{s}^{\ast }$ yields that the second moment of the second
term of $\left( \ref{x_pred1}\right) $ is $o\left( 1\right) $. So, to
complete the proof of part $\left( \mathbf{a}\right) $, we need to look at
the first term, which is%
\begin{eqnarray*}
&&\sum_{k\left[ 2\right] =1}^{M\left[ 2\right] }\left( \widehat{a}_{0,k\left[
2\right] }-\widetilde{a}_{n,\left( 0,k\left[ 2\right] \right) }-\varrho
_{n,\left( 0,k\left[ 2\right] \right) }\right) ~x_{s\left[ 1\right] ,s\left[
2\right] -k\left[ 2\right] }^{\ast }+\sum_{k\left[ 1\right] =1}^{M\left[ 1%
\right] }\sum_{k\left[ 2\right] =1-M\left[ 2\right] }^{M\left[ 2\right]
}\left( \widehat{a}_{k}-\widetilde{a}_{n,k}-\varrho _{n,k}\right)
x_{s-k}^{\ast } \\
&&+\sum_{k\left[ 2\right] =1}^{M\left[ 2\right] }\varrho _{n,\left( 0,k\left[
2\right] \right) }~x_{s\left[ 1\right] ,s\left[ 2\right] -k\left[ 2\right]
}^{\ast }+\sum_{k\left[ 1\right] =1}^{M\left[ 1\right] }\sum_{k\left[ 2%
\right] =1-M\left[ 2\right] }^{M\left[ 2\right] }\varrho _{n,k}x_{s-k}^{\ast
} \\
&=&\sum_{k\left[ 2\right] =1}^{M\left[ 2\right] }\varrho _{n,\left( 0,k\left[
2\right] \right) }~x_{s\left[ 1\right] ,s\left[ 2\right] -k\left[ 2\right]
}^{\ast }+\sum_{k\left[ 1\right] =1}^{M\left[ 1\right] }\sum_{k\left[ 2%
\right] =1-M\left[ 2\right] }^{M\left[ 2\right] }\varrho _{n,k}x_{s-k}^{\ast
}+O_{p}\left( \mathbf{M}^{1/2}\mathbf{n}^{-1/2}\right)
\end{eqnarray*}%
because using expression $\left( \ref{pTh3.3}\right) $, $\left\vert \widehat{%
a}_{k}-a_{k,n}-\varrho _{k}\right\vert =O_{p}\left( \mathbf{n}^{-1/2}\mathbf{%
M}^{-1/2}\right) $. Finally%
\begin{equation*}
E\left\vert \sum_{k\left[ 1\right] =1}^{M\left[ 1\right] }\sum_{k\left[ 2%
\right] =1-M\left[ 2\right] }^{M\left[ 2\right] }\varrho _{n,k}x_{s-k}^{\ast
}\right\vert \leq \sum_{k\left[ 1\right] =1}^{M\left[ 1\right] }\sum_{k\left[
2\right] =1-M\left[ 2\right] }^{M\left[ 2\right] }\left( E\varrho
_{n,k}^{2}\right) ^{1/2}\left( Ex_{s-k}^{\ast 2}\right) ^{1/2}=O\left( 
\mathbf{Mn}^{-1/2}\right) =o\left( 1\right)
\end{equation*}%
by triangle and Cauchy-Schwarz inequalities and then Condition $C4$.
Similarly we have that $E\left\vert \sum_{k\left[ 2\right] =1}^{M\left[ 2%
\right] }\varrho _{n,0,k\left[ 2\right] }~x_{s\left[ 1\right] ,s\left[ 2%
\right] -k\left[ 2\right] }^{\ast }\right\vert =o\left( 1\right) $, which
concludes the proof of part $\left( \mathbf{a}\right) $.

We now show part $\left( \mathbf{b}\right) $, that is when $s=:\left( n\left[
1\right] +1,s\left[ 2\right] \right) $. For that purpose, it is convenient
to recall our representation in $\left( \ref{uni_1}\right) $. The reason is
because the prediction error $x_{n\left[ 1\right] +1,s\left[ 2\right]
}^{\ast }-\widehat{x}_{n\left[ 1\right] +1,s\left[ 2\right] }$ can be
written as 
\begin{equation*}
x_{n\left[ 1\right] +1,s\left[ 2\right] }^{\ast }-\widehat{x}_{n\left[ 1%
\right] +1,s\left[ 2\right] }=\vartheta _{s}^{\ast }+\sum_{0\prec k}\zeta
_{k}\vartheta _{s-k}^{\ast }-\sum_{k\left[ 2\right] =1}^{M\left[ 2\right] }%
\widehat{\zeta }_{0,k\left[ 2\right] }~\vartheta _{n\left[ 1\right] +1,s%
\left[ 2\right] -k\left[ 2\right] }^{\ast }-\sum_{k\left[ 1\right] =1}^{M%
\left[ 1\right] }\sum_{k\left[ 2\right] =1-M\left[ 2\right] }^{M\left[ 2%
\right] }\widehat{\zeta }_{k}~\vartheta _{s-k}^{\ast }\text{,}
\end{equation*}%
where $\widehat{\zeta }_{k\left[ 1\right] ,k\left[ 2\right] }$ is similar to 
$\widehat{a}_{k\left[ 1\right] ,k\left[ 2\right] }$ but where 
\begin{eqnarray*}
\widehat{\zeta }_{k} &=&\frac{1}{4\mathbf{M}}\sum_{-M<\ell \leq M}\widehat{A}%
_{\ell }e^{ik\cdot \widetilde{\lambda }_{\ell }}\text{, \ \ \ }\ k\in 
\mathcal{M}; \\
\widehat{A}_{\ell } &=&\overline{\widehat{A}}_{-\ell }=\exp \left\{ \left.
\sum_{j\preceq M}\right. ^{+}\widehat{\alpha }_{j}e^{-ij\cdot \widetilde{%
\lambda }_{\ell }}\right\} \text{, \ \ \ \ \ \ \ }\ell \in \mathcal{M\cup }%
\left\{ 0\right\}
\end{eqnarray*}%
and $\widehat{\alpha }_{j}$ as defined in $\left( \ref{cr_1}\right) $. Now,
it is obvious that we have the same type of (statistical) results for $%
\widehat{\zeta }_{k}$ as those obtained for $\widehat{a}_{k}$, and hence
proceeding as in part $\left( \mathbf{a}\right) $, we conclude that 
\begin{equation*}
x_{n\left[ 1\right] +1,s\left[ 2\right] }^{\ast }-\widehat{x}_{n\left[ 1%
\right] +1,s\left[ 2\right] }=\vartheta _{n\left[ 1\right] +1,s\left[ 2%
\right] }^{\ast }+\sum_{k\left[ 2\right] =1}^{\infty }\zeta _{0,k\left[ 2%
\right] }~\vartheta _{n\left[ 1\right] +1,s\left[ 2\right] -k\left[ 2\right]
}^{\ast }+o_{p}\left( 1\right)
\end{equation*}%
and that 
\begin{equation*}
AE\left( x_{n\left[ 1\right] +1,s\left[ 2\right] }^{\ast }-\widehat{x}_{n%
\left[ 1\right] +1,s\left[ 2\right] }\right) =\left( 1+\sum_{k\left[ 2\right]
=1}^{\infty }\zeta _{0,k\left[ 2\right] }^{2}\right) \sigma _{\vartheta }^{2}%
\text{.}
\end{equation*}%
This concludes the proof of the theorem.

\section{Technical Lemmas}

To simplify the notation, we abbreviate $\sum_{-m<j\leq m}$ by $\sum_{j}$\
in what follows.

\begin{lemma}
Under Conditions $C1-C4$ we have that 
\begin{equation*}
\widehat{f}_{k}=\frac{1}{4\mathbf{m}}\sum_{j}f\left( \lambda _{j}+\widetilde{%
\lambda }_{k}\right) \frac{I_{\vartheta }^{T}\left( \lambda _{j}+\widetilde{%
\lambda }_{k}\right) }{\sigma _{\vartheta }^{2}}+\epsilon _{n,k}\text{,}
\end{equation*}%
where $\left\{ \epsilon _{k,n}\right\} _{k}$ is a triangular array sequence
of r.v.'s such that $E\sup_{k}\left\vert \epsilon _{k,n}\right\vert
^{2}=o\left( \mathbf{m}^{-1}\right) $.
\end{lemma}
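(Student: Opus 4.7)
The plan is to reduce the tapered periodogram of $x$ to the transfer-function-weighted tapered periodogram of the innovation sequence $\vartheta$. Substituting $x_t = \sum_{0 \preceq j}\zeta_j \vartheta_{t-j}$ from $\left(\ref{uni_1}\right)$ into the definition $\left(\ref{pert}\right)$ and reindexing $s = t - j$ yields the identity
\begin{equation*}
w_x^T(\lambda) = \Upsilon(\lambda)\,w_\vartheta^T(\lambda) + R_n(\lambda),
\end{equation*}
with $\Upsilon(\lambda) = 1 + \sum_{0 \prec j}\zeta_j e^{-ij\cdot\lambda}$. The remainder $R_n$ aggregates, for each $j$, a boundary-strip contribution from indices $s \notin [1,n[1]]\times[1,n[2]]$ after the shift by $j$, together with a taper-mismatch contribution involving $h_{s+j} - h_s$ on the bulk. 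Squaring and using $|\Upsilon(\lambda)|^2 = (2\pi)^2 f(\lambda)/\sigma_\vartheta^2$ gives
\begin{equation*}
I_x^T(\lambda) = \tfrac{(2\pi)^2 f(\lambda)}{\sigma_\vartheta^2}\,I_\vartheta^T(\lambda) + \rho_n(\lambda),\quad \rho_n(\lambda) = \tfrac{2\,\mathrm{Re}\bigl(\overline{\Upsilon(\lambda) w_\vartheta^T(\lambda)}\,R_n(\lambda)\bigr) + |R_n(\lambda)|^2}{(2\pi)^2},
\end{equation*}
so that $\epsilon_{n,k}$ equals $(4\mathbf{m})^{-1}\sum_{-m<\ell\leq m}\rho_n(\widetilde\lambda_k+\lambda_\ell)$, absorbing the $(2\pi)^2$ into the normalisation convention also used in the proof of Theorem 1.

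The next step is the pointwise bound $E|R_n(\lambda)|^2 = O(\mathbf{n}^{-1})$, uniformly in $\lambda$. The cosine-bell taper in $C3$ satisfies $|h_{s+j} - h_s| \leq K(|j[1]|/n[1] + |j[2]|/n[2])$, vanishes quadratically at the endpoints, and $\sum h_s^2 \asymp \mathbf{n}$. A direct second-moment computation on each component $\Delta_j(\lambda)$ of $R_n$ gives $E|\Delta_j(\lambda)|^2 = O(|j|^2/\mathbf{n})$ for the bulk piece (the boundary-strip piece is smaller because of the quadratic vanishing of $h$). Minkowski's inequality combined with Condition $C1(c)$, which secures $\sum_{0\prec j}|\zeta_j|\,|j|<\infty$, then yields
\begin{equation*}
\bigl(E|R_n(\lambda)|^2\bigr)^{1/2} \leq \sum_{0 \prec j}|\zeta_j|\bigl(E|\Delta_j(\lambda)|^2\bigr)^{1/2} = O(\mathbf{n}^{-1/2}).
\end{equation*}
An analogous Minkowski argument on the fourth moment, invoking the finite fourth moment of $\vartheta_t$ from $C1(a)$, gives $E|R_n(\lambda)|^4 = O(\mathbf{n}^{-2})$, and together with $E|w_\vartheta^T(\lambda)|^4 = O(1)$, Cauchy--Schwarz supplies $E|\rho_n(\lambda)|^2 = O(\mathbf{n}^{-1})$ pointwise.

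The principal technical obstacle is turning this pointwise bound into $E\sup_k |\epsilon_{n,k}|^2 = o(\mathbf{m}^{-1})$ after the averaging. Approximate orthogonality of the tapered $DFT$s $w_\vartheta^T(\widetilde\lambda_k+\lambda_\ell)$ at distinct Fourier frequencies, which follows from the taper identity $\left(\ref{dft}\right)$ and the fact that the Fourier expansion of $h_t^2$ has support on at most five frequencies per coordinate, delivers a $\mathbf{m}^{-1}$ variance reduction: $E|\epsilon_{n,k}|^2 = O((\mathbf{n}\mathbf{m})^{-1})$ uniformly in $k$. A crude union bound over the $\mathbf{M} = \mathbf{n}/(4\mathbf{m})$ indices then gives
\begin{equation*}
E\sup_k|\epsilon_{n,k}|^2 \;\leq\; \sum_k E|\epsilon_{n,k}|^2 \;=\; O(\mathbf{m}^{-2}) \;=\; o(\mathbf{m}^{-1}),
\end{equation*}
which is the required rate. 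The delicate bookkeeping lies in controlling the cross-covariances between $\rho_n(\widetilde\lambda_k+\lambda_{\ell_1})$ and $\rho_n(\widetilde\lambda_k+\lambda_{\ell_2})$, which couple the near-orthogonality of $w_\vartheta^T$ with the self-correlation of $R_n$; the cosine-bell structure in $\left(\ref{dft}\right)$ is what prevents these cross-covariances from accumulating and destroying the $\mathbf{m}^{-1/2}$ gain.
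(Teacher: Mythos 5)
The paper does not actually prove this lemma: its ``proof'' is a one-line deferral to Lemma 4 of \cite{hidalgo2009goodness}. Your self-contained argument follows the route by which such results are normally established -- the Bartlett-type decomposition $w_x^T=\Upsilon w_\vartheta^T+R_n$, second- and fourth-moment bounds on $R_n$ exploiting the Lipschitz smoothness and the quadratic boundary vanishing of the cosine-bell taper, near-orthogonality of tapered ordinates, and a union bound over $k$ -- and the outline is sound, including your correct observation that the displayed statement suppresses the $(2\pi)^2$ normalisation used in the proof of Theorem 1, and your correct identification that the naive Minkowski bound $E|\epsilon_{n,k}|^2=O(\mathbf{n}^{-1})$ fails by exactly a constant, so that some genuine gain from averaging is indispensable.

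There is, however, one gap that is more than bookkeeping. Your chain ``pointwise $E|\rho_n(\mu)|^2=O(\mathbf{n}^{-1})$ plus near-orthogonality $\Rightarrow E|\epsilon_{n,k}|^2=O((\mathbf{n}\mathbf{m})^{-1})$'' is not valid as stated: averaging over the $4\mathbf{m}$ frequencies reduces variance but not bias, and the pointwise second-moment bound only yields $|E\rho_n(\mu)|=O(\mathbf{n}^{-1/2})$, hence $(E\epsilon_{n,k})^2=O(\mathbf{n}^{-1})$ -- precisely the borderline at which the union bound returns $O(\mathbf{m}^{-1})$ rather than $o(\mathbf{m}^{-1})$. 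You need the separate, sharper estimate $|E\rho_n(\mu)|=O(\mathbf{n}^{-1})$. It does hold, but only because of an explicit taper identity: the mean of the cross term is driven by $H^{-1}\sum_s h_s(h_{s+j}-h_s)$ with $H=\sum_s h_s^2$, and $\sum_{t}h_{t[\ell]}h_{t[\ell]+j[\ell]}=n[\ell]\bigl(1+\tfrac12\cos(2\pi j[\ell]/n[\ell])\bigr)$ gives $H^{-1}\sum_s h_s(h_{s+j}-h_s)=O\bigl(\sum_{\ell}j^2[\ell]/n^2[\ell]\bigr)$, which combined with $C1(c)$ and $E|R_n|^2=O(\mathbf{n}^{-1})$ delivers the required rate; this computation must appear in the proof. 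The second deferred step -- the decay of $\mathrm{Cov}\bigl(\rho_n(\widetilde\lambda_k+\lambda_{\ell_1}),\rho_n(\widetilde\lambda_k+\lambda_{\ell_2})\bigr)$ beyond a band of fixed width -- rests on the correct mechanism (the Fourier expansion of $h_t^2$ has support on five frequencies per coordinate, so $E[\overline{w_\vartheta^T(\mu_1)}w_\vartheta^T(\mu_2)]$ vanishes outside that band, and the analogous sums built from $h_{s+j}-h_s$ are likewise banded), and carrying it out via the standard cumulant expansion of the fourth-order product is routine; I would accept it as deferred, but it is the step on which the whole lemma turns and should at least be reduced explicitly to those banded sums.
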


\begin{proof}
The proof follows easily from Lemma 4 of \cite{hidalgo2009goodness}, and so
it is omitted.
\end{proof}

\begin{lemma}
\label{c_hat}Assuming $C1-C4$, $\left( \mathbf{a}\right) $ $\widetilde{%
\alpha }_{k,n}-\alpha _{k,n}=\mathbf{M}^{-1}\mathbf{\xi }_{k}+O\left( 
\mathbf{M}^{-2}\right) $ and $\left( \mathbf{b}\right) $ $\alpha
_{k,n}-\alpha _{k}=O\left( \mathbf{M}^{-1}\right) $.
\end{lemma}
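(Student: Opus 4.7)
For part $(\mathbf{a})$, by the definitions I would write
$$\widetilde\alpha_{k,n}-\alpha_{k,n}=\frac{1}{2\mathbf{M}}\sum_{j\preceq M}^{+}\cos(k\cdot\widetilde\lambda_{j})\log\bigl(\widetilde f_{j}/f_{j}\bigr),$$
so the plan is to carry out two nested Taylor expansions. First, Taylor expand $\log(1+x)=x-\tfrac12 x^{2}+O(|x|^{3})$ in $x=(\widetilde f_{j}-f_{j})/f_{j}$, after establishing a uniform pointwise expansion of the local-average bias via a fourth-order Taylor expansion of $f(\lambda_{j}+\lambda_{k})$ about $\lambda_{j}$, which is available because Condition $C1(\mathbf{c})$ guarantees that $f$ is $C^{4}$. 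Summing over the symmetric window $\{-m<k\le m\}$ annihilates the first- and third-order terms and the mixed second partial $f_{12}$; the elementary identity $(4\mathbf{m})^{-1}\sum_{k}\lambda_{k[\ell]}^{2}\sim\pi^{2}/(3M[\ell]^{2})$ then produces
$$\widetilde f_{j}-f_{j}=\mathbf{M}^{-1}h(\lambda_{j})+O(\mathbf{M}^{-2})$$
uniformly in $j$, with $h$ a specific multiple of $f_{11}+f_{22}$. The quadratic log-term is therefore $O(\mathbf{M}^{-2})$ uniformly and contributes the same order after averaging, while the linear term becomes $\mathbf{M}^{-1}\cdot(2\mathbf{M})^{-1}\sum_{j\preceq M}^{+}\cos(k\cdot\widetilde\lambda_{j})h(\lambda_{j})/f(\lambda_{j})$. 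Recognising the inner average as a Riemann sum on the $\widetilde\lambda$-grid and using $C^{2}$ smoothness of the integrand to bound the quadrature error by $O(\mathbf{M}^{-1})$, the limit identifies with $\xi_{k}$ as defined through $(\ref{g_1})$; the quadrature error, damped by the outer $\mathbf{M}^{-1}$, is absorbed into $O(\mathbf{M}^{-2})$.

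For part $(\mathbf{b})$, $\alpha_{k,n}-\alpha_{k}$ is a pure two-dimensional Riemann-sum quadrature error for the integral $(\ref{alpha_1})$ on the $\widetilde\lambda$-grid of spacing $\pi/M[\ell]$. Condition $C1(\mathbf{c})$ makes $f\in C^{4}$, and together with the nonvanishing of $\Upsilon(z)$ noted after $C1$ it bounds $f$ away from zero, so $\log f$ is at least $C^{2}$. Exploiting the evenness of $\log f$ in each coordinate to extend the integration to the full periodic square and invoking the standard aliasing/Euler–Maclaurin bound for a $C^{2}$ periodic integrand on a uniform grid gives a quadrature error of $O(\max_{\ell}M[\ell]^{-2})$, which is $O(\mathbf{M}^{-1})$ under $C2$.

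The principal obstacle is the careful bookkeeping in part $(\mathbf{a})$: one must verify (i) that the odd-order terms of the fourth-order Taylor expansion of $f(\lambda_{j}+\lambda_{k})$ truly vanish on summation over the symmetric window in $k$, (ii) that the fourth-order remainder is uniformly $O(\mathbf{M}^{-2})$ in $\lambda_{j}$, and (iii) that the three successive approximations (log-Taylor, $f$-Taylor, Riemann step) compound so as to remain within $O(\mathbf{M}^{-2})$ rather than inflating to a larger order. The rest of the argument parallels the bias computations already performed in the proof of Theorem 2 and the corresponding lemmas in \cite{hidalgo2009goodness}.
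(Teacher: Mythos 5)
Your argument mirrors the paper's own proof: part $(\mathbf{a})$ is handled by exactly the same nested Taylor expansion (the logarithm around $1$, then a fourth-order expansion of $f$ over the symmetric averaging window whose odd-order terms cancel), giving $(\widetilde f_j-f_j)/f_j=\mathbf{M}^{-1}\cdot\mathrm{const}\cdot(f_{11}+f_{22})(\lambda_j)+O(\mathbf{M}^{-2})$ and hence the Fourier-coefficient limit $\xi_k$, while part $(\mathbf{b})$ is the same sum-versus-integral quadrature bound (the paper invokes its Lemma \ref{bias_2}), with periodicity of the integrand annihilating the first-order boundary correction. The only discrepancy is cosmetic: you retain the factor $1/f(\lambda_j)$ in the leading bias term, whereas the paper's displayed computation (and its definition of $\xi_k$ through $g$ in $(\ref{g_1})$) drops it — your bookkeeping is, if anything, the more careful of the two.
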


\begin{proof}
We begin with part $\left( \mathbf{a}\right) $. By definition of $\widetilde{%
f}_{j}$ and then Taylor series expansion of $\log \left( \cdot \right) $, we
have that%
\begin{equation*}
\widetilde{\alpha }_{k,n}-\alpha _{k,n}=\frac{1}{2\mathbf{M}}\sum_{j\preceq
M}\left\{ \left( \frac{\widetilde{f}_{j}-f_{j}}{f_{j}}\right) +\frac{1}{2}%
\left( \frac{\widetilde{f}_{j}-f_{j}}{f_{j}}\right) ^{2}\left( 1+o\left(
1\right) \right) \right\} \cos \left( k\cdot \widetilde{\lambda }_{j}\right) 
\text{.}
\end{equation*}%
So, it suffices to examine the behaviour of $f_{j}^{-1}\left( \widetilde{f}%
_{j}-f_{j}\right) $. By definition and $\left( \ref{g_1}\right) $, 
\begin{eqnarray}
\frac{\widetilde{f}_{j}-f_{j}}{f_{j}} &=&\frac{f_{j}^{-1}}{4\mathbf{m}}%
\sum_{k}\left\{ f\left( \lambda _{k+mj}\right) -f\left( \lambda _{mj}\right)
\right\}  \notag \\
&=&\frac{f_{j}^{-1}}{4\mathbf{m}}\sum_{k}\left\{ \frac{k^{2}\left[ 1\right] 
}{n^{2}\left[ 1\right] }f_{11}\left( \lambda _{mj}\right) +\frac{k^{2}\left[
2\right] }{n^{2}\left[ 2\right] }f_{22}\left( \lambda _{mj}\right) \right\}
+O\left( \frac{1}{\mathbf{M}^{2}}\right)  \notag \\
&=&\frac{1}{6\mathbf{M}}g_{j}+O\left( \frac{1}{\mathbf{M}^{2}}\right) \text{,%
}  \label{AA}
\end{eqnarray}%
because $f\left( \lambda \right) $ is a four times differentiable function
and $\sum_{k}k^{c_{1}}\left[ 1\right] k^{c_{2}}\left[ 2\right] =0$ if $%
c_{1}+c_{2}$ is an odd integer. From here the conclusion follows by standard
arguments, because $g\left( \lambda \right) $ is a continuous differentiable
function, so that the Riemman sums converge to their integral counterpart.

Part $\left( \mathbf{b}\right) $ follows using Lemma \ref{bias_2}.
\end{proof}

\begin{lemma}
\label{f_hat}Assuming, $C1-C4$, for all $k=1,2,...$%
\begin{equation*}
E\left( \widetilde{f}_{k}^{-1}\left( \widehat{f}_{k}-\widetilde{f}%
_{k}\right) \right) ^{2}=O\left( \mathbf{m}^{-1}\right) \text{.}
\end{equation*}
\end{lemma}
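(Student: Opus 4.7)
The plan is to combine the expansion supplied by the preceding Lemma~1 with the standard second-moment structure of the tapered periodogram of the iid innovations. By Lemma~1,
\begin{equation*}
\widehat{f}_{k}-\widetilde{f}_{k}=\frac{1}{4\mathbf{m}}\sum_{-m<j\leq m}f(\lambda _{j}+\widetilde{\lambda }_{k})\left\{ \frac{I_{\vartheta }^{T}(\lambda _{j}+\widetilde{\lambda }_{k})}{\sigma _{\vartheta }^{2}/(2\pi )^{2}}-1\right\} \cdot \frac{1}{(2\pi )^{2}}+\epsilon _{n,k}^{\prime }\text{,}
\end{equation*}
where the bracketed quantity is mean-zero (by the normalization $\sum_{t}h_{t}^{2}$ inside $w_{\vartheta }^{T}$) and $E|\epsilon _{n,k}^{\prime }|^{2}=o(\mathbf{m}^{-1})$. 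Since $\widetilde{f}_{k}$ is bounded away from $0$ uniformly in $k$ (by continuity and positivity of $f$ under $C1$, together with $\mathbf{m}/\mathbf{n}\rightarrow 0$), it is enough to bound the second moment of the displayed main term by $O(\mathbf{m}^{-1})$.

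Expanding the square and taking expectation gives
\begin{equation*}
\frac{1}{16\mathbf{m}^{2}(2\pi )^{4}}\sum_{j_{1},j_{2}}f(\lambda _{j_{1}}+\widetilde{\lambda }_{k})f(\lambda _{j_{2}}+\widetilde{\lambda }_{k})\;\mathrm{Cov}\bigl(I_{\vartheta }^{T}(\lambda _{j_{1}}+\widetilde{\lambda }_{k}),\;I_{\vartheta }^{T}(\lambda _{j_{2}}+\widetilde{\lambda }_{k})\bigr)\text{,}
\end{equation*}
plus a term involving the fourth cumulant $\kappa _{4,\vartheta }$. For iid $\vartheta _{t}$, the standard computation of the second-order structure of the tapered periodogram (Brillinger, 1981, adapted to $d=2$ as in Dahlhaus, 1987) yields that the covariance above equals a constant times
\begin{equation*}
\bigl|H^{(2)}(\mu _{1}-\mu _{2})\bigr|^{2}+\bigl|H^{(2)}(\mu _{1}+\mu _{2})\bigr|^{2}\quad \text{divided by }|H^{(2)}(0)|^{2}\text{,}
\end{equation*}
where $H^{(2)}(\mu )=\sum _{t}h_{t}^{2}e^{-it\cdot \mu }$ is the DFT of the squared cosine-bell taper. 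This is $O(1)$ on the diagonal $\mu _{1}=\mu _{2}$ and of order $O(\mathbf{n}^{-2})$ off the diagonal at distinct Fourier frequencies, while $\kappa _{4,\vartheta }$ contributes a term of order $\mathbf{n}^{-2}$.

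Using that $f$ is bounded (by $C1$), the diagonal contribution is of order $\mathbf{m}^{-2}\cdot \mathbf{m}\cdot O(1)=O(\mathbf{m}^{-1})$, while the off-diagonal and cumulant contributions are of order $\mathbf{m}^{-2}\cdot \mathbf{m}^{2}\cdot O(\mathbf{n}^{-2})=O(\mathbf{n}^{-2})$, which is $o(\mathbf{m}^{-1})$ by Condition $C4$. Adding the $o(\mathbf{m}^{-1})$ bound on $\epsilon _{n,k}^{\prime }$ and dividing by the uniform lower bound on $\widetilde{f}_{k}^{2}$ finishes the proof. The main obstacle is the careful bookkeeping of the tapered-periodogram covariance kernel in $d=2$ and the verification that the off-diagonal decay is indeed fast enough, but this is essentially a routine spectral calculation once Lemma~1 has reduced the problem to iid innovations.
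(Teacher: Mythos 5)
Your proof follows essentially the same route as the paper's: Lemma~1 reduces the problem to the second moment of the averaged normalized tapered periodogram of the (iid) innovations, which the paper dispatches in one line as ``standard arguments'' and which you spell out via the covariance kernel $H^{(2)}$ of the squared taper. One small imprecision worth noting: with the cosine-bell taper, periodogram ordinates at Fourier frequencies within distance two of each other in either coordinate have $O(1)$ covariance (not $O(\mathbf{n}^{-2})$), and the fourth-cumulant contribution to the covariance is $O(\mathbf{n}^{-1})$ rather than $O(\mathbf{n}^{-2})$; however, since there are only $O(\mathbf{m})$ such near-diagonal pairs and $\mathbf{m}/\mathbf{n}\rightarrow 0$ under $C4$, both corrections still deliver the stated $O(\mathbf{m}^{-1})$ bound.
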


\begin{proof}
Because $\widetilde{f}_{k}=\left( 4\mathbf{m}\right) ^{-1}\sum_{j}f\left(
\lambda _{j+mk}\right) >0$, the left side of the last displayed equality is,
up to multiplicative constants, bounded by 
\begin{equation*}
E\left( \frac{1}{\mathbf{m}}\sum_{j}f\left( \lambda _{j+mk}\right) \left( 
\frac{I_{x}^{T}\left( \lambda _{j+mk}\right) }{f\left( \lambda
_{j+mk}\right) }-\frac{I_{\vartheta }^{T}\left( \lambda _{j+mk}\right) }{%
\sigma _{\vartheta }^{2}}\right) \right) ^{2}+E\left( \frac{1}{\mathbf{m}}%
\sum_{j}f\left( \lambda _{j+mk}\right) \left( \frac{I_{\vartheta }^{T}\left(
\lambda _{j+mk}\right) }{\sigma _{\vartheta }^{2}}-1\right) \right) ^{2}%
\text{.}
\end{equation*}%
The first term of the last displayed expression is $o\left( \mathbf{m}%
^{-1}\right) $ by Lemma 1, whereas the second term follows by standard
arguments, as $\vartheta _{t}$ is an $iid$ sequence of r.v.'s with finite
fourth moments.
\end{proof}

\begin{lemma}
\label{var_1} 
\begin{eqnarray}
&&\frac{1}{\mathbf{M}^{2}}\sum_{p\preceq M}\left( \sum_{k_{1}\preceq M}\cos
\left( k_{1}\cdot \widetilde{\lambda }_{p}\right) e^{-ik_{1}\cdot \widetilde{%
\lambda }_{j_{1}}}\right) \left( \sum_{k_{2}\preceq M}\cos \left(
-k_{2}\cdot \widetilde{\lambda }_{p}\right) e^{ik_{2}\cdot \widetilde{%
\lambda }_{j_{2}}}\right)  \label{lemma_41} \\
&=&2\left( \delta _{j_{1}\left[ 1\right] -j_{2}\left[ 1\right] }+2^{-1}\phi
_{j_{1}\left[ 1\right] }\phi _{j_{2}\left[ 1\right] }-i\phi _{j_{1}\left[ 1%
\right] -j_{2}\left[ 1\right] }\right) \delta _{j_{1}\left[ 2\right] \pm
j_{2}\left[ 2\right] }+O\left( \mathbf{M}^{-1}\right) \text{.}  \notag
\end{eqnarray}
\end{lemma}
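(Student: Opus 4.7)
The plan is a direct computation. Expanding both cosines via $2\cos x=e^{ix}+e^{-ix}$, and noting that $\cos(-k_2\cdot\widetilde{\lambda}_p)=\cos(k_2\cdot\widetilde{\lambda}_p)$, the left-hand side of (\ref{lemma_41}) rewrites as
\[
\frac{1}{4\mathbf{M}^{2}}\sum_{\epsilon_{1},\epsilon_{2}\in\{\pm 1\}}\sum_{k_{1},k_{2}\preceq M}e^{-ik_{1}\cdot\widetilde{\lambda}_{j_{1}}+ik_{2}\cdot\widetilde{\lambda}_{j_{2}}}\,D_{n}\bigl(\epsilon_{1}k_{1}+\epsilon_{2}k_{2}\bigr),\qquad D_{n}(\ell):=\sum_{p\preceq M}^{+}e^{i\ell\cdot\widetilde{\lambda}_{p}}.
\]
Per the definition in (\ref{notd}), the index set of $D_n$ decomposes into a boundary slab $\{p[1]=0,\ p[2]\in[1,M[2]]\}$ and a bulk block $\{p[1]\in[1,M[1]],\ p[2]\in[1-M[2],M[2]]\}$, on which the exponential factors as a product of two one-dimensional geometric sums.

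I then rely on two elementary identities valid for integer $\ell$ with $|\ell|<2M$:
\[
\sum_{p=1-M}^{M}e^{i\ell\pi p/M}=2M\,\delta_{\ell,0},\qquad \sum_{p=1}^{M}e^{i\ell\pi p/M}=M\phi_{\ell}+O(1).
\]
The first is direct geometric summation using $e^{i\ell\cdot 2\pi}=1$; the second uses the closed form $e^{i\ell\pi/M}(1-(-1)^{\ell})/(1-e^{i\ell\pi/M})$ together with $1-e^{i\ell\pi/M}=-i\ell\pi/M+O(M^{-2})$, which recovers the definition of $\phi_{\ell}$ in (\ref{not_th}) (extended to negative integers via $\phi_{-k}=\overline{\phi_{k}}$). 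Applied on the bulk, these give $D_{n}(\ell)=2\mathbf{M}\,\phi_{\ell[1]}\,\delta_{\ell[2],0}+O(M)$, with the boundary slab and the edge values $\ell[\cdot]=\pm 2M[\cdot]$ absorbed in the remainder.

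Substituting this asymptotic forces $\epsilon_{1}k_{1}[2]+\epsilon_{2}k_{2}[2]=0$, which collapses one of the $k[2]$ variables. The first 1D identity, applied to the surviving $k[2]$ variable summed over $[1-M[2],M[2]]$, produces a factor $2M[2]\,\delta_{j_{1}[2]\pm j_{2}[2],0}$ with the sign inherited from $\epsilon_{1}\epsilon_{2}$; aggregating the four $(\epsilon_{1},\epsilon_{2})$ choices yields the $\delta_{j_{1}[2]\pm j_{2}[2]}$ in the statement. The residual double sum over $(k_{1}[1],k_{2}[1])\in[0,M[1]]^{2}$ of $e^{-ik_{1}[1]\widetilde{\lambda}_{j_{1}[1]}+ik_{2}[1]\widetilde{\lambda}_{j_{2}[1]}}\phi_{\epsilon_{1}k_{1}[1]+\epsilon_{2}k_{2}[1]}$ is evaluated by a second application of the two 1D identities: the diagonal cancellation $k_{1}[1]=k_{2}[1]$ yields the $\delta_{j_{1}[1]-j_{2}[1]}$ piece, the off-diagonal part yields $-i\phi_{j_{1}[1]-j_{2}[1]}$, and the interaction of the boundary lines $k_{1}[1]=0$ with $k_{2}[1]=0$ yields the $\tfrac{1}{2}\phi_{j_{1}[1]}\phi_{j_{2}[1]}$ piece. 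Reassembling with the $\tfrac{1}{4}$ prefactor and the $\mathbf{M}$'s absorbed from $D_n$ and from the $k[2]$ collapse reproduces the stated leading term.

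The main technical obstacle is the bookkeeping of boundary contributions: the asymmetric definition of $\preceq M$ singles out the lines $p[1]=0$ and $k_{i}[1]=0$, and these boundary pieces must be tracked with care both to recover the $\tfrac{1}{2}\phi_{j_{1}[1]}\phi_{j_{2}[1]}$ term (which is invisible in a naive symmetric full-plane reduction) and to verify that all remaining cross terms collapse into a uniform $O(\mathbf{M}^{-1})$ remainder under Condition $C2$.
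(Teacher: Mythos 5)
Your proposal follows essentially the same route as the paper's own proof: expand the cosines via $2\cos x=e^{ix}+e^{-ix}$, exploit the boundary-slab-plus-bulk structure of the asymmetric index set $\preceq M$, and reduce everything to the two one-dimensional geometric-sum identities (which the paper outsources to Lemma 4 of \cite{Hidalgo2002} at the final step). The only organizational difference is the order of summation: you perform the $p$-sum first, obtaining the kernel $D_{n}(\epsilon_{1}k_{1}+\epsilon_{2}k_{2})$, whereas the paper performs the $k_{1},k_{2}$-sums first, obtaining the functions $\Xi(p\pm j_{1})$, $\Xi(\mp p-j_{2})$ and then analyzing $\sum_{p\preceq M}\Xi\cdot\Xi$ term by term. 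This reordering is cosmetic; the identification of the three pieces of the leading term and of the $\delta_{j_{1}[2]\pm j_{2}[2]}$ factor coincides with the paper's.

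There is, however, one step that does not close as written. The asymptotic $D_{n}(\ell)=2\mathbf{M}\,\phi_{\ell[1]}\,\delta_{\ell[2],0}+O(M)$ with a remainder \emph{uniform} in $\ell$ is too weak: substituting it into the quadruple sum over $k_{1},k_{2}\preceq M$ (which has $O(\mathbf{M}^{2})$ terms) and dividing by $4\mathbf{M}^{2}$ leaves a contribution of order $O(M)=O(\mathbf{M}^{1/2})$ from the remainder alone, which diverges rather than being absorbed into $O(\mathbf{M}^{-1})$. To control these terms one needs the $\ell$-dependent decay of the boundary-slab piece (namely $\sum_{p[2]=1}^{M}e^{i\ell[2]\pi p[2]/M}=O\bigl(M/(1+|\ell[2]|)\bigr)$ away from $\ell[2]\equiv 0$) combined with the cancellation coming from the oscillating factors $e^{-ik_{1}\cdot\widetilde{\lambda}_{j_{1}}+ik_{2}\cdot\widetilde{\lambda}_{j_{2}}}$ once summed over $k_{1},k_{2}$; this is exactly what the paper's inequality $\left( \ref{AB}\right) $ supplies, yielding bounds of order $O(M^{-1})$ and $O(\log M/M)$ for the cross terms. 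You flag this bookkeeping as ``the main technical obstacle'' but defer it entirely, so the remainder analysis — which is where the real content of the lemma lies — is missing from the argument as presented.
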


\begin{proof}
First, 
\begin{eqnarray*}
\sum_{k\preceq M}e^{-ik\cdot \widetilde{\lambda }_{p}} &=&:\sum_{k\left[ 2%
\right] =1}^{M}e^{-ik\left[ 2\right] \frac{\pi p\left[ 2\right] }{M}}+\sum_{k%
\left[ 1\right] =1}^{M}e^{-ik\left[ 1\right] \frac{\pi p\left[ 1\right] }{M}%
}\sum_{k\left[ 2\right] =1-M}^{M}e^{-ik\left[ 2\right] \frac{\pi p\left[ 2%
\right] }{M}} \\
&=&\sum_{k\left[ 2\right] =1}^{M}e^{-ik\left[ 2\right] \frac{\pi p\left[ 2%
\right] }{M}}+2M\sum_{k\left[ 1\right] =1}^{M}e^{-ik\left[ 1\right] \frac{%
\pi p\left[ 1\right] }{M}}\delta _{p\left[ 2\right] } \\
&=&:\mathcal{D}\left( p\left[ 2\right] \right) +2M\mathcal{D}\left( p\left[ 1%
\right] \right) \delta _{p\left[ 2\right] }=\Xi \left( p\left[ 1\right] ,p%
\left[ 2\right] \right) \text{,}
\end{eqnarray*}%
where for notational simplicity, we assume that $M\left[ 1\right] =M\left[ 2%
\right] =:M$.

Next, because $2\cos \left( x\right) =\exp \left( ix\right) +\exp \left(
-ix\right) $, we have then that $\left( \ref{lemma_41}\right) $ is 
\begin{eqnarray}
&&\frac{1}{4\mathbf{M}^{2}}\sum_{p\preceq M}\left( \sum_{k_{1}\preceq
M}\left( e^{-ik_{1}\cdot \widetilde{\lambda }_{p+j_{1}}}+e^{-ik_{1}\cdot 
\widetilde{\lambda }_{j_{1}-p}}\right) \right) \left( \sum_{k_{2}\preceq
M}\left( e^{ik_{2}\cdot \widetilde{\lambda }_{p+j_{2}}}+e^{-ik_{2}\cdot 
\widetilde{\lambda }_{p-j_{2}}}\right) \right)  \notag \\
&=&\frac{1}{4\mathbf{M}^{2}}\sum_{p\preceq M}\left\{ \left( \Xi \left( p%
\left[ 1\right] +j_{1}\left[ 1\right] ,p\left[ 2\right] +j_{1}\left[ 2\right]
\right) +\Xi \left( j_{1}\left[ 1\right] -p\left[ 1\right] ,j_{1}\left[ 2%
\right] -p\left[ 2\right] \right) \right) \right.  \notag \\
&&\text{ \ \ \ \ \ \ \ \ \ \ \ \ \ \ \ \ }\left. \left( \Xi \left( -p\left[ 1%
\right] -j_{2}\left[ 1\right] ,-p\left[ 2\right] -j_{2}\left[ 2\right]
\right) +\Xi \left( p\left[ 1\right] -j_{2}\left[ 1\right] ,p\left[ 2\right]
-j_{2}\left[ 2\right] \right) \right) \right\} \text{. }  \label{lemma_42}
\end{eqnarray}

Let's examine a typical term on the right of $\left( \ref{lemma_42}\right) $%
, say%
\begin{equation*}
\frac{1}{4\mathbf{M}^{2}}\sum_{p\preceq M}\Xi \left( p\left[ 1\right] +j_{1}%
\left[ 1\right] ,p\left[ 2\right] +j_{1}\left[ 2\right] \right) ~\Xi \left(
-p\left[ 1\right] -j_{2}\left[ 1\right] ,-p\left[ 2\right] -j_{2}\left[ 2%
\right] \right) \text{.}
\end{equation*}%
By definition, the last displayed expression is 
\begin{eqnarray*}
&&\frac{1}{4\mathbf{M}^{2}}\sum_{p\preceq M}\left\{ \mathcal{D}\left( p\left[
2\right] +j_{1}\left[ 2\right] \right) \mathcal{D}\left( -p\left[ 2\right]
-j_{2}\left[ 2\right] \right) \right\} \\
&&+\frac{1}{2M^{3}}\sum_{p\preceq M}\left\{ \mathcal{D}\left( p\left[ 2%
\right] +j_{1}\left[ 2\right] \right) \mathcal{D}\left( -p\left[ 1\right]
-j_{2}\left[ 1\right] \right) \delta _{p\left[ 2\right] +j_{2}\left[ 2\right]
}\right\} \\
&&+\frac{1}{2M^{3}}\sum_{p\preceq M}\left\{ \mathcal{D}\left( -p\left[ 2%
\right] -j_{2}\left[ 2\right] \right) \mathcal{D}\left( p\left[ 1\right]
+j_{1}\left[ 1\right] \right) \delta _{p\left[ 2\right] +j_{1}\left[ 2\right]
}\right\} \\
&&+\frac{1}{M^{2}}\sum_{p\preceq M}\left\{ \mathcal{D}\left( p\left[ 1\right]
+j_{1}\left[ 1\right] \right) \delta _{p\left[ 2\right] +j_{1}\left[ 2\right]
}\mathcal{D}\left( -p\left[ 1\right] -j_{2}\left[ 1\right] \right) \delta _{p%
\left[ 2\right] +j_{2}\left[ 2\right] }\right\}
\end{eqnarray*}

Because $\left( \ref{AB}\right) $, it easy to see that the first term is $%
O\left( M^{-1}\right) $, whereas the second term is 
\begin{eqnarray*}
&&\frac{1}{2M^{3}}\sum_{p\left[ 1\right] =1}^{M}\mathcal{D}\left( j_{1}\left[
2\right] -j_{2}\left[ 2\right] \right) \mathcal{D}\left( p\left[ 1\right]
+j_{2}\left[ 1\right] \right) \\
&\leq &K\frac{1}{M^{2}}\sum_{p\left[ 1\right] =1}^{M}\mathcal{D}\left( p%
\left[ 1\right] +j_{2}\left[ 1\right] \right) \leq K\frac{1}{M}\sum_{p\left[
1\right] =1}^{M}\frac{1}{\left( p\left[ 1\right] +j_{2}\left[ 1\right]
\right) _{+}} \\
&=&O\left( \frac{\log M}{M}\right) \text{,}
\end{eqnarray*}%
so is the third term by symmetry. Finally the fourth term is different than
zero if $j_{1}\left[ 2\right] =j_{2}\left[ 2\right] $, in which case becomes%
\begin{equation*}
\frac{1}{M^{2}}\sum_{p\left[ 1\right] =1}^{M}\mathcal{D}\left( p\left[ 1%
\right] +j_{1}\left[ 1\right] \right) \mathcal{D}\left( -p\left[ 1\right]
-j_{2}\left[ 1\right] \right) \text{.}
\end{equation*}%
Then, proceeding similarly with the other three terms in $\left( \ref%
{lemma_42}\right) $, we can conclude, except negligible terms, that it is%
\begin{eqnarray*}
&&\frac{1}{M^{2}}\sum_{p\left[ 1\right] =1}^{M}\left\{ \mathcal{D}\left( p%
\left[ 1\right] +j_{1}\left[ 1\right] \right) \mathcal{D}\left( -p\left[ 1%
\right] -j_{2}\left[ 1\right] \right) +\mathcal{D}\left( j_{1}\left[ 1\right]
-p\left[ 1\right] \right) \mathcal{D}\left( p\left[ 1\right] -j_{2}\left[ 1%
\right] \right) \right\} \delta _{j_{1}\left[ 2\right] -j_{2}\left[ 2\right]
} \\
&&\frac{1}{M^{2}}\sum_{p\left[ 1\right] =1}^{M}\left\{ \mathcal{D}\left(
j_{1}\left[ 1\right] -p\left[ 1\right] \right) \mathcal{D}\left( -p\left[ 1%
\right] -j_{2}\left[ 1\right] \right) +\mathcal{D}\left( p\left[ 1\right]
+j_{1}\left[ 1\right] \right) \mathcal{D}\left( p\left[ 1\right] -j_{2}\left[
1\right] \right) \right\} \delta _{j_{1}\left[ 2\right] +j_{2}\left[ 2\right]
} \\
&=&\frac{2}{M^{2}}\sum_{k_{1},k_{2}\preceq M}\left( e^{-i\left( j_{1}\left[ 1%
\right] \widetilde{\lambda }_{k_{1}\left[ 1\right] }-j_{2}\left[ 1\right] 
\widetilde{\lambda }_{k_{2}\left[ 1\right] }\right) }\sum_{p\left[ 1\right]
=1}^{M}\cos \left( \left( k_{1}\left[ 1\right] -k_{2}\left[ 1\right] \right) 
\widetilde{\lambda }_{p\left[ 1\right] }\right) \right) \delta _{j_{1}\left[
2\right] -j_{2}\left[ 2\right] } \\
&&+\frac{2}{M^{2}}\sum_{k_{1},k_{2}\preceq M}\left( e^{-i\left( j_{1}\left[ 1%
\right] \widetilde{\lambda }_{k_{1}\left[ 1\right] }-j_{2}\left[ 1\right] 
\widetilde{\lambda }_{k_{2}\left[ 1\right] }\right) }\sum_{p\left[ 1\right]
=1}^{M}\cos \left( \left( k_{1}\left[ 1\right] +k_{2}\left[ 1\right] \right) 
\widetilde{\lambda }_{p\left[ 1\right] }\right) \right) \delta _{j_{1}\left[
2\right] +j_{2}\left[ 2\right] }\text{.}
\end{eqnarray*}%
From here we conclude by Lemma 4 of \cite{Hidalgo2002}.
\end{proof}

\begin{lemma}
\label{bias_1}Under Condition $C1$, we have that%
\begin{equation*}
\sum_{\left\{ M\left[ 1\right] \leq k\left[ 1\right] \right\} \vee \left\{ M%
\left[ 2\right] \leq k\left[ 2\right] \right\} }\alpha _{k}e^{-ik\cdot 
\widetilde{\lambda }_{j}}=O\left( \mathbf{M}^{-2}\right) \text{.}
\end{equation*}
\end{lemma}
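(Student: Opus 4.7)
My strategy is to transfer the weighted $\ell^1$ decay of the moving-average coefficients $\{\zeta_j\}$ provided by Condition $C1(c)$ to the cepstral coefficients $\{\alpha_k\}$, and then obtain the tail bound via a Chebyshev-type estimate.

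\textit{Step 1 (Transfer of decay to the cepstrum).} I would first show that
\begin{equation*}
\sum_{0\prec k}|\alpha_{k}|\bigl(k^{4}[1]+k^{4}[2]\bigr)<\infty.
\end{equation*}
Condition $C1(c)$ states $\sum_{0\prec j}|\zeta_{j}|(j^{4}[1]+j^{4}[2])<\infty$, so $\Upsilon(\lambda)=1+\sum_{0\prec j}\zeta_{j}e^{-ij\cdot\lambda}$ lies in the weighted Wiener algebra
\begin{equation*}
\mathcal{A}_{4}=\Bigl\{g=\sum_{k}\widehat g_{k}e^{-ik\cdot\lambda}:\ \|g\|_{\mathcal{A}_{4}}:=\sum_{k}|\widehat g_{k}|(1+k^{4}[1]+k^{4}[2])<\infty\Bigr\},
\end{equation*}
which, since $(1+|k|)^{4}$ is submultiplicative on $\mathbb{Z}^{2}$, is a Banach algebra. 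Conditions $C1(b)$ and the invertibility discussion imply $\Upsilon$ is non-vanishing on $\Pi^{2}$, so $|\Upsilon|^{2}\in\mathcal{A}_{4}$ takes values in a compact subset of $(0,\infty)$. The Wiener--L\'{e}vy theorem, applied with the analytic branch of $\log$ on a neighborhood of this range, then yields $\log f=\log(\sigma_{\vartheta}^{2}/(2\pi)^{2})+\log|\Upsilon|^{2}\in\mathcal{A}_{4}$, which is exactly the claimed weighted summability of $\{\alpha_{k}\}$.

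\textit{Step 2 (Tail bound and conclusion).} Given Step 1, I bound the exponentials trivially by $1$ and split according to which coordinate exceeds its threshold:
\begin{align*}
\Bigl|\sum_{\{M[1]\leq k[1]\}\vee\{M[2]\leq k[2]\}}\alpha_{k}e^{-ik\cdot\widetilde{\lambda}_{j}}\Bigr|
&\leq \sum_{k[1]\geq M[1]}|\alpha_{k}|+\sum_{k[2]\geq M[2]}|\alpha_{k}|\\
&\leq \frac{1}{M^{4}[1]}\sum_{k}|\alpha_{k}|k^{4}[1]+\frac{1}{M^{4}[2]}\sum_{k}|\alpha_{k}|k^{4}[2]\\
&=O\bigl(M^{-4}[1]\bigr)+O\bigl(M^{-4}[2]\bigr).
\end{align*}
Conditions $C2$ and $C4$ imply $M[1]\asymp M[2]$, hence $M^{-4}[\ell]\asymp (M[1]M[2])^{-2}=\mathbf{M}^{-2}$ for $\ell=1,2$, giving the claim.

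\textit{Main obstacle.} The only nontrivial ingredient is Step 1: the Wiener--L\'{e}vy transfer in the weighted bivariate algebra $\mathcal{A}_{4}$. The classical proof of Wiener--L\'{e}vy via the analytic functional calculus on a Banach algebra carries over provided one checks that the weight $(1+|k|)^{4}$ is admissible (submultiplicative up to a constant) on $\mathbb{Z}^{2}$, which is immediate from $(1+|k+\ell|)^{4}\leq(1+|k|)^{4}(1+|\ell|)^{4}$. Once this is accepted, Steps 2 and 3 reduce to elementary Chebyshev-type manipulations.
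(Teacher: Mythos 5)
Your proof is correct, but it follows a genuinely different route from the paper's. The paper's argument (one sentence) is: Condition $C1(c)$ makes $f$ four times continuously differentiable, and since $f$ is bounded away from zero so is $\log f$, whence the Fourier coefficients satisfy $\alpha_{k}=O\left( k\left[ \ell \right] ^{-5}\right) $ and the tail sum is "standard". You instead transfer the weighted $\ell ^{1}$ decay of $\left\{ \zeta _{j}\right\} $ directly to $\left\{ \alpha _{k}\right\} $ via a Wiener--L\'{e}vy argument in the weighted algebra $\mathcal{A}_{4}$, and then use a Chebyshev bound $\sum_{k\left[ \ell \right] \geq M\left[ \ell \right] }\left\vert \alpha _{k}\right\vert \leq M\left[ \ell \right] ^{-4}\sum_{k}\left\vert \alpha _{k}\right\vert k^{4}\left[ \ell \right] $. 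Your route is arguably cleaner: the weighted summability $\sum_{k}\left\vert \alpha _{k}\right\vert \left( k^{4}\left[ 1\right] +k^{4}\left[ 2\right] \right) <\infty $ controls the double sum over the whole tail region in one stroke, whereas the paper's coordinatewise pointwise decay $O\left( k\left[ \ell \right] ^{-5}\right) $ leaves the summation over the free coordinate implicit. Two small caveats. First, for the functional calculus in $\mathcal{A}_{4}$ to produce $\log f\in \mathcal{A}_{4}$ you need not only submultiplicativity of the weight but also that the maximal ideal space of $\mathcal{A}_{4}$ is the torus (the Gelfand--Raikov--Shilov condition $\lim_{n}w(nk)^{1/n}=1$); this holds for polynomial weights, so the step is sound, but it is worth stating. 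Second, $M\left[ 1\right] \asymp M\left[ 2\right] $ is not actually a consequence of Conditions $C2$ and $C4$ (those only pin each $M\left[ \ell \right] $ between $1$ and $n\left[ \ell \right] ^{1/4}$); it is a standing simplification the authors impose at the start of the appendix ($M\left[ 1\right] =M\left[ 2\right] $), and without it $O\left( M\left[ 1\right] ^{-4}+M\left[ 2\right] ^{-4}\right) $ need not be $O\left( \mathbf{M}^{-2}\right) $. Under that standing assumption your final step is fine.
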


\begin{proof}
The proof is standard because four times continuous differentiability of $%
f\left( \lambda \right) $ implies that $\alpha _{k\left[ \ell \right]
}=O\left( k\left[ \ell \right] ^{-5}\right) $ for $\ell =1,2$.
\end{proof}

The next lemma is regarding the approximation of integrals by sums. Taking
for simplicity that $\check{n}=:n\left[ 1\right] =n\left[ 2\right] $ and
recalling our notation, we have then that $j/n=:\left( j\left[ 1\right] /%
\check{n},j\left[ 2\right] /\check{n}\right) $. Also, use the standard
notation, $\left\vert k\right\vert =k_{1}+k_{2}$, $k!=k_{1}!k_{1}!$, $%
y^{k}=y_{1}^{k_{1}}y_{2}^{k_{2}}$ and for a function $\Upsilon \left(
x\right) $ 
\begin{equation*}
\partial ^{k}\Upsilon \left( x\right) =\frac{\partial ^{\left\vert
k\right\vert }\Upsilon \left( x\right) }{\partial x^{\left\vert k\right\vert
}}\text{.}
\end{equation*}

\begin{lemma}
\label{bias_2}Assume that $\Upsilon \left( \cdot \right) $ is a function $q$
times continuously differentiable in $\left[ 0,1\right] ^{2}$. Then,%
\begin{equation}
\frac{1}{\check{n}^{2}}\sum_{j\left[ 1\right] =1}^{\check{n}}\sum_{j\left[ 2%
\right] =1}^{\check{n}}\Upsilon \left( \frac{j}{n}\right)
-\int_{0}^{1}\int_{0}^{1}\Upsilon \left( x\right) dx=\sum_{\left\vert
k\right\vert \leq q-1}h_{n\mathbf{,}k}\digamma _{k}+O\left( \frac{1}{\check{n%
}^{q}}\right) \text{,}  \label{E_M}
\end{equation}%
where $h_{n\mathbf{,}k}$ is a sequence such that $h_{n\mathbf{,}k}=O\left( 
\check{n}^{-k}\right) $ and $\digamma _{1}$, $\digamma _{2}$,...,$\digamma
_{q-1}$ are finite constants.
\end{lemma}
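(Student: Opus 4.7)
The plan is to reduce the two-dimensional statement to an iterated application of the classical one-dimensional Euler--Maclaurin formula, then collect the resulting boundary terms by their order in $\check n^{-1}$.

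First I recall the one-dimensional formula: for any $g\in C^{q}[0,1]$,
\begin{equation*}
\frac{1}{\check n}\sum_{i=1}^{\check n}g\!\left(\frac{i}{\check n}\right)
-\int_{0}^{1}g(x)\,dx
=\sum_{k=1}^{q-1}\frac{b_{k}}{\check n^{k}}\Bigl[g^{(k-1)}(1)-g^{(k-1)}(0)\Bigr]
+O\!\left(\frac{\|g^{(q)}\|_{\infty}}{\check n^{q}}\right),
\end{equation*}
for suitable absolute constants $b_{k}$ (derived from Bernoulli numbers; some of these vanish depending on the chosen Riemann sum). The $O$-term is uniform in $g$ provided the $q$-th derivative is uniformly bounded, which is the ingredient needed for the iteration to work cleanly.

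Next, for each fixed $j[2]\in\{1,\dots,\check n\}$, apply this formula to $x\mapsto\Upsilon(x,j[2]/\check n)$, treated as a $C^{q}[0,1]$ function of its first argument. This gives
\begin{equation*}
\frac{1}{\check n}\sum_{j[1]=1}^{\check n}\Upsilon\!\left(\tfrac{j[1]}{\check n},\tfrac{j[2]}{\check n}\right)
=\int_{0}^{1}\Upsilon\!\left(x_{1},\tfrac{j[2]}{\check n}\right)dx_{1}
+\sum_{k_{1}=1}^{q-1}\frac{b_{k_{1}}}{\check n^{k_{1}}}\,\Delta_{k_{1}}\!\left(\tfrac{j[2]}{\check n}\right)
+R_{n}\!\left(\tfrac{j[2]}{\check n}\right),
\end{equation*}
where $\Delta_{k_{1}}(y):=\partial_{1}^{k_{1}-1}\Upsilon(1,y)-\partial_{1}^{k_{1}-1}\Upsilon(0,y)$ lies in $C^{q-k_{1}}[0,1]$, and $\sup_{y}|R_{n}(y)|=O(\check n^{-q})$ by the smoothness of $\Upsilon$ and compactness of $[0,1]^{2}$.

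Now average in $j[2]$ and apply the one-dimensional formula a second time, separately to $y\mapsto\int_{0}^{1}\Upsilon(x_{1},y)\,dx_{1}$ and to each $\Delta_{k_{1}}(y)$. The first produces $\int_{[0,1]^2}\Upsilon$ plus boundary terms of order $\check n^{-k_{2}}$, $1\le k_{2}\le q-1$, with constants of the form $\partial_{2}^{k_{2}-1}\bigl[\int_{0}^{1}\Upsilon(\cdot,y)\,dy\bigr]$ evaluated at $0,1$. The second produces $\check n^{-k_{1}-k_{2}}$ terms whose constants are mixed partial derivatives of $\Upsilon$ at the four corners of $[0,1]^{2}$; all of these corner derivatives exist by the $C^{q}$ assumption. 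Collecting every contribution with total order $\check n^{-k}$, $1\le k\le q-1$, gives coefficients $h_{n,k}=O(\check n^{-k})$ multiplied by finite constants $\digamma_{k}$, together with remainders that are each $O(\check n^{-q})$ by uniform boundedness of the relevant derivatives. Summing the at most $O(q^{2})$ remainders yields the stated $O(\check n^{-q})$ error.

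The only real obstacle is the bookkeeping of the mixed terms: one must check that the remainder in the second application of Euler--Maclaurin, applied to $\Delta_{k_{1}}$, really is controlled by $\check n^{-(q-k_{1})}$ rather than $\check n^{-q}$, so that after multiplication by $\check n^{-k_{1}}$ it contributes at the acceptable order $\check n^{-q}$; this follows because $\Delta_{k_{1}}\in C^{q-k_{1}}[0,1]$ with derivatives bounded uniformly in $n$. All other steps are routine.
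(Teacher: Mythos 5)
Your argument is correct, but it proves the lemma by a genuinely different route than the paper. You iterate the classical one-dimensional Euler--Maclaurin formula: first in $x_{1}$ for each fixed row, then in $x_{2}$ for the resulting integral and for each boundary function $\Delta_{k_{1}}$, so your constants $\digamma_{k}$ come out as boundary and corner values of partial derivatives of $\Upsilon$ on $\partial\left[ 0,1\right] ^{2}$. The paper instead never invokes Euler--Maclaurin: it writes the error as a sum over cells of $\int\left( \Upsilon\left( j/n\right) -\Upsilon\left( x\right) \right) dx$, Taylor-expands on each cell about the grid point, integrates the monomials $\left( x-j/n\right) ^{k}$ exactly to pull out the factors $\check{n}^{-\left\vert k\right\vert }$, and then recursively applies the same device to the Riemann sums of $\partial ^{k}\Upsilon$ (which are $q-\left\vert k\right\vert $ times differentiable), so its constants are the volume integrals $\digamma _{k}=\int_{0}^{1}\int_{0}^{1}\partial ^{k}\Upsilon \left( x\right) dx$. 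Since the lemma only asserts the existence of some finite constants and coefficients $h_{n,k}=O\left( \check{n}^{-k}\right) $, either identification suffices, and you correctly handle the one delicate point in your scheme (that the Euler--Maclaurin remainder for $\Delta _{k_{1}}\in C^{q-k_{1}}$ is only $O\left( \check{n}^{-\left( q-k_{1}\right) }\right) $ but is multiplied by $\check{n}^{-k_{1}}$). The one thing worth flagging is that the paper's identification of $\digamma _{k}$ as integrals of derivatives is what is actually used downstream (in the bias lemma for $a_{p,n}-a_{p}$, where summability of $\digamma _{\ell ,p}$ over $p$ is read off from the decay of Fourier coefficients); with your boundary-value constants that downstream bookkeeping would have to be redone, although for the periodic integrands arising there your boundary terms largely cancel, which if anything gives a sharper bound.
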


\begin{proof}
The left side of $\left( \ref{E_M}\right) $ is%
\begin{eqnarray}
&&\sum_{j\left[ 1\right] =1}^{\check{n}}\sum_{j\left[ 2\right] =1}^{\check{n}%
}\int_{\frac{j\left[ 1\right] -1}{\check{n}}}^{\frac{j\left[ 1\right] }{%
\check{n}}}\int_{\frac{j\left[ 2\right] -1}{\check{n}}}^{\frac{j\left[ 2%
\right] }{\check{n}}}\left( \Upsilon \left( \frac{j}{n}\right) -\Upsilon
\left( x\right) \right) dx  \label{E_M1} \\
&=&\sum_{j\left[ 1\right] =1}^{\check{n}}\sum_{j\left[ 2\right] =1}^{\check{n%
}}\int_{\frac{j\left[ 1\right] -1}{\check{n}}}^{\frac{j\left[ 1\right] }{%
\check{n}}}\int_{\frac{j\left[ 2\right] -1}{\check{n}}}^{\frac{j\left[ 2%
\right] }{\check{n}}}\left\{ \sum_{\left\vert k\right\vert \leq q-1}\frac{%
\partial ^{k}\Upsilon \left( \frac{j}{n}\right) }{k!}\left( x-\frac{j}{n}%
\right) ^{k}+\sum_{\left\vert k\right\vert =q}\frac{\partial ^{k}\Upsilon
\left( x\left( j\right) \right) }{k!}\left( x-\frac{j}{n}\right)
^{k}\right\} dx\text{,}  \notag
\end{eqnarray}%
by Taylor's expansion and where $x\left( j\right) $ denotes a point between $%
\left( j-1\right) /n$ and $j/n$. Now, the right side of $\left( \ref{E_M1}%
\right) $ is 
\begin{equation}
\sum_{\left\vert k\right\vert \leq q-1}\frac{1}{\check{n}^{k+2}}\sum_{j\left[
1\right] =1}^{\check{n}}\sum_{j\left[ 2\right] =1}^{\check{n}}\frac{\partial
^{k}\Upsilon \left( \frac{j}{n}\right) }{k!}+O\left( \frac{1}{\check{n}^{q}}%
\right) \text{.}  \label{E_M2}
\end{equation}

Denoting $\partial ^{k}\Upsilon \left( x\right) =\eta _{k}\left( x\right) $
and $\digamma _{k}=\int_{0}^{1}\int_{0}^{1}\eta _{k}\left( x\right) dx$, $%
k=1,...,3$, and proceeding as with $\left( \ref{E_M1}\right) $, we conclude
that $\left( \ref{E_M2}\right) $, and hence the left side of $\left( \ref%
{E_M}\right) $, is%
\begin{equation*}
\sum_{\left\vert k\right\vert \leq q-1}\frac{K}{\check{n}^{k}}\digamma
_{k}+O\left( \frac{1}{\check{n}^{4}}\right)
\end{equation*}%
after observing that $\eta _{k}\left( x\right) $, $k=1,...,q-1$, are
respectively $q-k$ continuous differentiable functions.
\end{proof}

\begin{lemma}
\label{a_bias}Under Condition $C1$, for all $p$, 
\begin{equation*}
a_{p,n}-a_{p}=\frac{\varrho _{p}}{\mathbf{M}}+O\left( \mathbf{M}^{-2}\log 
\mathbf{M}\right) \text{,}
\end{equation*}%
where $\left\{ \left\vert \varrho _{p}\right\vert \right\} _{p\geq 1}$ is a
summable sequence.
\end{lemma}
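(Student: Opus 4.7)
The plan is to decompose $a_{p,n}-a_{p}$ as the sum of a Riemann-sum discretization error plus a Fourier-coefficient approximation error, and then extract the leading $\mathbf{M}^{-1}$ contribution from the latter. Introduce the continuous surrogate
$$A^{(n)}(\lambda):=\exp\Bigl\{-\sum_{0\prec k\preceq M}\alpha_{k,n}\,e^{ik\cdot\lambda}\Bigr\},$$
so that $A_{k,n}=A^{(n)}(\widetilde{\lambda}_{k})$, and write
$$a_{p,n}-a_{p}=\underbrace{\Bigl(a_{p,n}-\tfrac{1}{4\pi^{2}}\int_{\Pi^{2}}A^{(n)}(\lambda)e^{ip\cdot\lambda}d\lambda\Bigr)}_{R_{1}}+\underbrace{\tfrac{1}{4\pi^{2}}\int_{\Pi^{2}}\bigl(A^{(n)}(\lambda)-A(\lambda)\bigr)e^{ip\cdot\lambda}d\lambda}_{R_{2}}.$$
For $R_{1}$, because $A^{(n)}(\lambda)e^{ip\cdot\lambda}$ is smooth and fully periodic on $\Pi^{2}$, with derivatives uniformly bounded in $M$ (Condition $C1(c)$ forces $\alpha_{k,n}$ to inherit the $|k|^{-5}$ decay of $\alpha_{k}$), the boundary corrections $\digamma_{k}$ in the Euler--Maclaurin expansion furnished by Lemma \ref{bias_2} all vanish by periodicity, leaving $R_{1}=O(\mathbf{M}^{-2})$.

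For $R_{2}$, write
$$A^{(n)}(\lambda)-A(\lambda)=A(\lambda)\bigl(e^{B(\lambda)}-1\bigr),\qquad B(\lambda):=\sum_{0\prec k\preceq M}(\alpha_{k}-\alpha_{k,n})e^{ik\cdot\lambda}+\sum_{k\not\preceq M}\alpha_{k}e^{ik\cdot\lambda}.$$
By Lemma \ref{c_hat}(b) the first sum in $B$ is uniformly $O(\mathbf{M}^{-1})$, while Lemma \ref{bias_1} gives that the tail sum is $O(\mathbf{M}^{-2})$; hence $\|B\|_{\infty}=O(\mathbf{M}^{-1})$. Taylor-expanding $e^{B}-1=B+O(|B|^{2})$ shows that the quadratic-and-higher contribution to $R_{2}$ is pointwise $O(\mathbf{M}^{-2})$, so only the linear term matters to order $\mathbf{M}^{-1}$. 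Using the Fourier expansion $A(\lambda)=\sum_{j\succeq 0}a_{j}e^{-ij\cdot\lambda}$, integration yields
$$R_{2}=\sum_{0\prec k\preceq M}(\alpha_{k}-\alpha_{k,n})\,a_{k+p}+\sum_{k\not\preceq M}\alpha_{k}\,a_{k+p}+O(\mathbf{M}^{-2}).$$
The tail sum above is $O(\mathbf{M}^{-2})$ by Lemma \ref{bias_1} and absolute summability of $\{a_{j}\}$. For the surviving main sum, apply Lemma \ref{bias_2} to the Riemann sum defining $\alpha_{k,n}-\alpha_{k}$ itself: since $\log f(\lambda)\cos(k\cdot\lambda)$ is periodic on $\Pi^{2}$ but the sum only covers the half-plane $\widetilde{\Pi}^{2}$, the full-period Euler--Maclaurin boundary terms cancel while a half-plane contribution along $\lambda[1]\in\{0,\pi\}$ survives, producing $\alpha_{k}-\alpha_{k,n}=\mathbf{M}^{-1}\eta_{k}+O(\mathbf{M}^{-2})$ with $\eta_{k}$ determined by boundary values of $\partial_{\lambda[1]}\log f$ evaluated against $\cos(k\cdot\lambda)$. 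Setting $\varrho_{p}:=\sum_{0\prec k}\eta_{k}a_{k+p}$ then yields $a_{p,n}-a_{p}=\varrho_{p}/\mathbf{M}+O(\mathbf{M}^{-2}\log\mathbf{M})$, the logarithm serving as a safe upper bound for the accumulated $O(\mathbf{M}^{-2})$ residuals across the inner sum.

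Summability of $\{|\varrho_{p}|\}_{p\ge 1}$ follows from Young's convolution inequality, since $\{|\eta_{k}|\}$ is absolutely summable (being built from Fourier coefficients of the smooth boundary derivatives of $\log f$, which decay rapidly under Condition $C1(c)$) and $\{|a_{j}|\}$ is absolutely summable by the same condition. The main technical obstacle is the explicit identification of the coefficients $\eta_{k}$ in the half-plane Euler--Maclaurin expansion of Lemma \ref{bias_2}: one must track carefully which boundary derivatives of $\log f$ survive at $\lambda[1]=0$ and $\lambda[1]=\pi$, and establish absolute summability of the resulting sequence with enough uniformity in $k$ to validate the $O(\mathbf{M}^{-2}\log\mathbf{M})$ residual. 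A secondary, essentially clerical, concern is verifying that the derivatives of $A^{(n)}$ remain bounded uniformly in $M$ to the order required for the periodicity cancellation in $R_{1}$ to hold at the target rate.
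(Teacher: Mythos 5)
Your proposal is essentially the paper's argument with the three error sources regrouped into two. The paper splits $a_{p,n}-a_{p}$ as $\frac{1}{4\mathbf{M}}\sum_{j}(A_{j,n}-A_{j}^{\ast})e^{ip\cdot\widetilde{\lambda}_{j}}+\frac{1}{4\mathbf{M}}\sum_{j}(A_{j}^{\ast}-A_{j})e^{ip\cdot\widetilde{\lambda}_{j}}+\bigl(\frac{1}{4\mathbf{M}}\sum_{j}A_{j}e^{ip\cdot\widetilde{\lambda}_{j}}-\frac{1}{4\pi^{2}}\int A e^{ip\cdot\lambda}\bigr)$ with $A_{j}^{\ast}$ the truncated-but-undiscretized exponential; your $R_{2}$ merges the first two pieces (truncation tail killed by Lemma \ref{bias_1}, discretization of the $\alpha$'s linearized into a convolution $\sum_{k}\eta_{k}a_{k+p}$ made summable by Young), which is exactly how the paper treats them, down to the Taylor expansion of $\exp(\cdot)$ and the identification of $\varrho_{p}$ as Fourier coefficients of a product $\psi_{2}(\lambda)A^{\ast}(\lambda)$. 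The one substantive divergence is your $R_{1}$: you discard the Riemann-sum error at the level of $A$ itself as $O(\mathbf{M}^{-2})$ on periodicity grounds, whereas the paper keeps a $\mathbf{M}^{-1}\digamma_{2,p}$ contribution from this term (display (\ref{BB1})) and folds it into $\varrho_{p}$, arguing only that $\digamma_{1,p}=0$ by periodicity. Your cancellation claim is defensible for a full-period equispaced rule applied to a periodic integrand, but it is not free: it requires the fourth derivatives of $A^{(n)}$ to be bounded uniformly in $M$, and Lemma \ref{c_hat}(b) alone (a uniform $O(\mathbf{M}^{-1})$ over the $O(\mathbf{M})$ retained frequencies) does not give you $\sum_{k\preceq M}\{\sum_{\ell}k^{4}[\ell]\}|\alpha_{k,n}|=O(1)$; you need the $k$-decay of $\alpha_{k,n}-\alpha_{k}$ exhibited in the paper's display (\ref{B.8}), so this is more than the "clerical" concern you label it. Since the lemma only asserts that some summable $\{\varrho_{p}\}$ exists, either accounting delivers the statement, and your half-plane boundary explanation for why $\alpha_{k}-\alpha_{k,n}$ carries a genuine $\mathbf{M}^{-1}$ term (non-cancellation at $\lambda[1]\in\{0,\pi\}$) is a more explicit reading of Lemma \ref{bias_2} than the paper offers.
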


\begin{proof}
By definition, $a_{p,n}-a_{p}$ is%
\begin{eqnarray}
&&\frac{1}{4\mathbf{M}}\sum_{-M<j\leq M}\left( A_{j,n}-A_{j}^{\ast }\right)
e^{ip\cdot \widetilde{\lambda }_{j}}+\frac{1}{4\mathbf{M}}\sum_{-M<j\leq
M}\left( A_{j}^{\ast }-A_{j}\right) e^{ip\cdot \widetilde{\lambda }_{j}} 
\notag \\
&&+\left( \frac{1}{4\mathbf{M}}\sum_{-M<j\leq M}A_{j}e^{ip\cdot \widetilde{%
\lambda }_{j}}-\frac{1}{4\pi ^{2}}\int_{\Pi ^{2}}A\left( \lambda \right)
e^{ip\cdot \lambda }d\lambda \right) \text{,}  \label{pLem6.11}
\end{eqnarray}%
where $A_{j}^{\ast }=\exp \left( -\sum_{k\preceq M}\alpha _{k}e^{-ik\cdot 
\widetilde{\lambda }_{j}}\right) $. Because $A\left( \lambda \right) $ is
four times continuous differentiable, the third term of $\left( \ref%
{pLem6.11}\right) $ is 
\begin{equation}
\mathbf{M}^{-1}\digamma _{2,p}+\mathbf{M}^{-3/2}\digamma _{3,p}+O\left( 
\mathbf{M}^{-2}\right)  \label{BB1}
\end{equation}%
by Lemma \ref{bias_2}. This is the case after we notice that $\digamma
_{\ell ,p}$ there is given by%
\begin{equation*}
\digamma _{\ell ,p}=:\int_{-\pi }^{\pi }\int_{-\pi }^{\pi }\partial ^{\ell
}A\left( \lambda \right) e^{ip\cdot \lambda _{j}}d\lambda =O\left( p^{\ell
}a_{p}\right) =O\left( p^{\ell -5}\right)
\end{equation*}%
which is clearly summable since $\ell \leq 3$, and because $A\left( -\pi
,\lambda \left[ 2\right] \right) =A\left( \pi ,\lambda \left[ 2\right]
\right) $ and $A\left( \lambda \left[ 2\right] ,-\pi \right) =A\left(
\lambda \left[ 2\right] ,\pi \right) $ for every $\lambda \left[ 1\right]
,\lambda \left[ 2\right] \in \Pi $ implies that 
\begin{equation*}
\digamma _{1,p}=:\int_{-\pi }^{\pi }\int_{-\pi }^{\pi }\left( \frac{\partial
A\left( \lambda \right) e^{ip\cdot \lambda _{j}}}{\partial \lambda \left[ 1%
\right] }+\frac{\partial A\left( \lambda \right) e^{ip\cdot \lambda _{j}}}{%
\partial \lambda \left[ 2\right] }\right) d\lambda =0\text{.}
\end{equation*}

Next, the second term of $\left( \ref{pLem6.11}\right) $ is bounded in
absolute value by%
\begin{eqnarray}
&&\left\vert \frac{1}{4\mathbf{M}}\sum_{-M<j\leq M}\left( \exp \left\{
\left. \sum_{p}\right. ^{\dagger }\alpha _{k}e^{-ik\cdot \widetilde{\lambda }%
_{j}}\right\} -1\right) A_{j}e^{ip\cdot \widetilde{\lambda }_{j}}\right\vert
\notag \\
&\leq &\frac{K}{4\mathbf{M}}\sum_{-M<j\leq M}\left\vert A_{j}\right\vert
\left\vert \left. \sum_{k}\right. ^{\dagger }\alpha _{k}e^{-ik\cdot 
\widetilde{\lambda }_{j}}\right\vert \left( 1+O\left( 1\right) \right)
=O\left( \mathbf{M}^{-2}\right) \text{,}  \label{BB2}
\end{eqnarray}%
by Lemma \ref{bias_1} and that $\sum_{-M<j\leq M}\left\vert A_{j}\right\vert
=O\left( \mathbf{M}\right) $, where $\sum_{p}^{\dagger }$ denotes the
summation in 
\begin{equation*}
\mathcal{S}\left( p\right) =:\left\{ p:\left( 0\prec p\right) \wedge \left\{
\left( M\left[ 1\right] <p\left[ 1\right] \right) \vee \left( M\left[ 2%
\right] <p\left[ 2\right] \right) \right\} \right\} \text{.}
\end{equation*}

Finally, by definition of $A_{j,n}$ and $A_{j}^{\ast }$, the first term of $%
\left( \ref{pLem6.11}\right) $ is%
\begin{equation}
\frac{1}{2\mathbf{M}}\sum_{-M<j\leq M}\left( \exp \left\{ \sum_{k\preceq
M}\left( \alpha _{k,n}-\alpha _{k}\right) e^{-ik\cdot \widetilde{\lambda }%
_{j}}\right\} -1\right) A_{j}^{\ast }e^{ip\cdot \widetilde{\lambda }_{j}}%
\text{,}  \label{B.7}
\end{equation}%
where using the inequality in $\left( \ref{AB}\right) $, we have that $%
\sum_{k\preceq M}\left( \alpha _{k,n}-\alpha _{k}\right) e^{-ik\cdot 
\widetilde{\lambda }_{j}}$ is 
\begin{eqnarray}
&&\sum_{-k\preceq M}\left( \frac{1}{4\mathbf{M}}\sum_{-M<\ell \leq M}\log
\left( f_{\ell }\right) \cos \left( k\cdot \widetilde{\lambda }_{\ell
}\right) -\frac{1}{4\pi ^{2}}\int_{\Pi ^{2}}\log \left( f\left( \lambda
\right) \right) \cos \left( k\cdot \lambda \right) d\lambda \right)
e^{-ik\cdot \widetilde{\lambda }_{j}}  \notag \\
&=&\sum_{k\preceq M}\left( \frac{1}{\mathbf{M}}\digamma _{2,k}+\frac{1}{%
\mathbf{M}^{3/2}}\digamma _{3,k}\right) e^{-ik\cdot \widetilde{\lambda }%
_{j}}+O\left( \mathbf{M}^{-1}j^{-1}\right) \text{,}  \label{B.8}
\end{eqnarray}%
by Lemma \ref{bias_2} and where now $\digamma _{\ell ,k}=\int_{\Pi ^{2}}\log
\left( f\left( \lambda \right) \right) \cos \left( k\cdot \lambda \right)
d\lambda $, $\ell =2,3$. Next, because $\log \left( f\left( \lambda \right)
\right) $ is four times continuously differentiable, it implies that $%
\digamma _{\ell ,p}=:O\left( p^{\ell -5}\right) $ and thus by standard
arguments, the right side of $\left( \ref{B.8}\right) $ is 
\begin{eqnarray*}
&&\mathbf{M}^{-1}\psi _{2}\left( \widetilde{\lambda }_{j}\right) +\mathbf{M}%
^{-3/2}\psi _{3}\left( \widetilde{\lambda }_{j}\right) +O\left( \mathbf{M}%
^{-1}j^{-1}\right) \\
\psi _{\ell }\left( \lambda \right) &=&\sum_{k\preceq \infty }\digamma
_{\ell ,k}e^{-ik\cdot \lambda }\text{.}
\end{eqnarray*}%
From here and using Taylor expansion of $\exp (x)$, we obtain that $\left( %
\ref{B.7}\right) $\ is%
\begin{eqnarray}
&&\frac{1}{2\mathbf{M}}\sum_{-M<j\leq M}\left\{ \mathbf{M}^{-1}\psi
_{2}\left( \widetilde{\lambda }_{j}\right) +\mathbf{M}^{-3/2}\psi _{3}\left( 
\widetilde{\lambda }_{j}\right) \right\} A_{j}^{\ast }e^{ip\cdot \widetilde{%
\lambda }_{j}}+O\left( \mathbf{M}^{-2}\log \mathbf{M}\right)  \notag \\
&=&\frac{\nu _{p,2}}{\mathbf{M}}+\frac{\nu _{p,3}}{\mathbf{M}^{3/2}}+O\left( 
\mathbf{M}^{-2}\log \mathbf{M}\right)  \label{BB3}
\end{eqnarray}%
where $\left\{ \nu _{p,\ell }\right\} _{p\geq 1}$, $\ell =2,3$, are the
Fourier coefficients of $\psi _{\ell }\left( \lambda \right) A^{\ast }\left(
\lambda \right) $, which are summable because $\psi _{\ell }\left( \lambda
\right) A^{\ast }\left( \lambda \right) $ is $4-\ell $ times differentiable
function. The conclusion of\ the lemma now follows by gathering terms $%
\left( \ref{BB1}\right) $, $\left( \ref{BB2}\right) $ and $\left( \ref{BB3}%
\right) $.
\end{proof}

\newpage 
\bibliographystyle{chicago}
\bibliography{pred_references}

\end{document}